\documentclass[lettersize,journal]{IEEEtran}
\usepackage{amssymb}
\usepackage{lineno}
\usepackage{natbib}
\usepackage{subcaption}
\usepackage{multibib}
\usepackage{subfiles} 
\usepackage{marginnote,xcolor}
\usepackage[ruled,vlined]{algorithm2e}
\usepackage{graphicx}
\usepackage{amsthm}
\usepackage{picinpar}
\usepackage{amsmath}
\usepackage{url}
\usepackage{flushend}
\usepackage{float}
\usepackage[latin1]{inputenc}
\usepackage{colortbl}
\usepackage{amsfonts}
\usepackage{soul}
\usepackage{multirow}
\usepackage{pifont}
\usepackage{color}
\usepackage{alltt}
\usepackage[hidelinks]{hyperref}
\usepackage{enumitem}
\usepackage{siunitx}
\usepackage{breakurl}
\usepackage{algorithmic}
\usepackage{caption}
\usepackage{pbox}
\usepackage{comment}
\usepackage{multicol}
\newtheorem{theorem}{Theorem}

\newtheorem{lemma}{Lemma}
\newtheorem{definition}{Definition}
\newtheorem{assumption}{Assumption}
\newtheorem{remark}{Remark}

\newcommand{\bs}[1]{\boldsymbol{#1}}
\newcommand{\mc}[1]{\mathcal{#1}}
\newcommand{\mb}[1]{\mathbb{#1}}
\newcommand{\black}[1]{\textcolor{black}{#1}}

\begin{document}

\title{Classifying Transient Regimes in Dynamic Systems through Properties of Spatial Curves and Stochastic Processes: A Data-Driven Approach}
\author{%
    \IEEEauthorblockN{Cristian Puerto-Santana \IEEEauthorrefmark{1} \IEEEauthorrefmark{2}, Javier Diaz-Rozo\IEEEauthorrefmark{1}, Carlos Puerto-Santana\IEEEauthorrefmark{3}, and Carlos Ocampo-Martinez\IEEEauthorrefmark{2}}%

    \IEEEauthorblockA{\IEEEauthorrefmark{1} Aingura IIoT, 20009 San Sebastian, Spain }\\
    \IEEEauthorblockA{\IEEEauthorrefmark{2}Automatic Control Department, Universitat Polit\`ecnica de Catalunya-BarcelonaTECH, 08028 Barcelona, Spain}\\%
    \IEEEauthorblockA{\IEEEauthorrefmark{3} Independent researcher, 20003 San Sebastian, Spain}%
}


\markboth{}%
{}

\maketitle

\begin{abstract}
           \black{ This article proposes a novel methodology for the classification of transient and stationary regimes in dynamic systems. Several sensor-based solutions for regime classification in the literature require the setting of several parameters, or are not suitable for scenarios involving multivariate systems that may contain periodic signals. 
            The proposed method introduces a spatial curve representation of the considered system based on its sample mathematical moments. Then, by connecting concepts of stability theory, geometrical properties of spatial curves and stationary stochastic processes, two regime classifiers are designed using the arc length and the curvatures of the proposed curve. Both classifiers are capable of describing and detecting transient regimes, considering behaviors such as: multivariate asymptotically, marginally stability, and cyclostationarity. Furthermore, a quantitative comparison in performance and computation resources of the proposed classifiers against existing classifiers in the literature illustrates that the proposed regime classifier based on the arc length outperforms other techniques in classifying transient regimes for simulated linear, non-linear, and discontinuous multivariate systems under the specified studied conditions.}
    \end{abstract}

\begin{IEEEkeywords}
    Transient Regimes, Stochastic processes, Spatial curves, Dynamic systems
\end{IEEEkeywords}



\section{Introduction}
\label{sec:Introduction}

\black{During decades, the effective characterization of  dynamical systems has relied on data gathered from physical sensors. Generally, such characterization consists of analyzing their response during transient and stationary regimes. On the one hand, the stationary time domains show the static relationships between the systems inputs and outputs, whereas the transient time periods reveal the dynamic ones. The proper classification and characterization of both regimes is the foundation of system identification techniques such as those reported in \cite{ljung2010perspectives, hagg2016transient, lataire2016transfer}, which are crucial for applications in fields such as control design and fault detection and diagnosis \cite{qin2000overview, estevez2011model}}.

An application of transient regime classifiers involves the planning and validation of prescribed controllers or event-based controllers. For prescribed control applications, control strategies were designed for dynamic systems to reach the steady state within predefined times \cite{yao2024novel, ye2022prescribed, hua2021adaptive}. Conversely, event-based controllers were designed to control systems in \cite{song2023prescribed, yan2018event, ji2024event}. These strategies aim to balance controller performance and resource efficiency. Event-driven scheduling mechanisms for these controllers rely on the residuals and information of the system under study.


\black{In \cite{liu2020dynamic}, an overview of the latest advances in the field of prescribed controllers was presented. Such a study  included definitions of stability and controllers for single and multiple input/output systems. The objective of these techniques was to achieve stability within a finite time interval. This work proposes various settling times for different types of dynamic systems based on a Lyapunov function. The design of these controllers depends on the previously estimated settling time. In \cite{zhang2021adaptive, chen5,  ge2020dynamic}, the latest advances in dynamic event triggered distributed coordination control were studied. The event-triggered scheduling mechanisms for these controllers depend on the residuals and information of the system under study.}
In data-driven system analysis, there is no knowledge of the model of the dynamic system under study, so new alternatives must be proposed to estimate such settling times. Similarly, in time-varying systems with a non-deterministic model, using data-driven techniques to estimate possible transient deviations of the system from a set point can enhance the general scheme of these controllers.

\black{Concerning industrial processes, \cite{puerto2022mechanical} proposed a methodology for unbalance monitoring in electric spindles that characterizes the gyroscopic effect of rotors during transient time periods at various rotational speeds. Furthermore, in the study of roto-dynamical systems, transient vibration during stationary rotational speeds are commonly associated with faulty operation of components such as bearings, gears, shafts, among others, as reported in \cite{zhang2016time, antoni2007fast, chen2013detecting}. Further applications regarding transient and stationary classification for the fields of communications, power systems and audio coding can be found in \cite{ureten2007wireless, markalous2008detection, zhang2011transient}.}

In summary, the characterization of dynamical systems is crucial for various applications. Stationary regimes aid in identifying abnormalities, optimizing, and controlling operations, while transient regimes enable understanding and controlling dynamics. This paper introduces a novel approach using statistical tools, signal processing techniques, and geometrical properties of spatial curves for classifying transient and stationary regimes in dynamical systems, with potential applications across science and engineering fields.

\subsection{Related work}
\label{subsec:related_work}

Recent algorithms in transient and stationary regime classification use Machine Learning, statistical analysis, and signal processing tools to analyze sensor data from dynamical systems. However, applying these methods to multivariable systems with oscillatory temporal responses remains challenging. Moreover, there is a need for transient and stationary regime classification algorithms that require minimal parameterization for online applications.

Considering the aforementioned challenges, Table \ref{tb:comparison_rw} presents relevant literature categorized by: \textit{(a)} classifying regimes for both periodic and non-oscillatory signals; \textit{(b)} working with multivariable dynamical systems; \textit{(c)} not requiring extensive parameter tuning for classification; and \textit{(d)} performing effectively with noisy measurements.

\begin{table}[!t]
	\caption{Classification of some literature methodologies for transient and stationary regime classification.}
	\label{tb:comparison_rw}
	\centering
	\small
    \scalebox{0.75}{
	\begin{tabular}{c p{6cm} c c c c}
		\hline
		Reference& Description & (a) & (b) & (c)& (d)\\ [0.5ex]
		\hline\hline
        \cite{shaw1997multifractal}& Multi-fractal dimension analysis. &\checkmark&-&-&\checkmark\\
        \cite{vasyutynskyyPASSIVEMONITORINGCONTROL2005}& Sample moving moments study.&-&-&\checkmark&-\\
        \cite{schladtSoftSensorsBased2007}& Sparsity measure of signals. &-&\checkmark&-&\checkmark\\
        \cite{markalousDetectionLocationPartial2008}& Wavelet transformation and spectrum energy. &-&\checkmark&\checkmark&-\\
        \cite{yaoBatchtoBatchSteadyState2009}& Principal component analysis. &-&\checkmark&\checkmark&\checkmark\\
		\cite{rhinehartAutomatedSteadyTransient2013}& Hypothesis test and R-test estimation. &-&\checkmark&-&\checkmark\\
        \cite{liu2020fast} & Median absolute variable measure. &-&-&\checkmark&-\\
        \cite{yu2022identification} & Differential signal and automatic noise band amplitude selection. &-&-&\checkmark&\checkmark\\[1ex]
		\hline
	\end{tabular}}
\end{table}

Traditional transient and stationary classifiers use features from data of sensors that monitor the studied system. For instance, in \cite{shaw1997multifractal}, a transient and stationary regimen classification algorithm was implemented using multi-fractal segmentation and neural network models. In \cite{vasyutynskyyPASSIVEMONITORINGCONTROL2005}, a data-driven method used moving statistical measures to identify regime changes in monitored system without signals with oscillatory responses. 

Recent studies, such as \cite{schladtSoftSensorsBased2007}, used sparsity measurements for transient and stationary regimen classification in industrial processes. In terms of signal processing techniques, \cite{markalousDetectionLocationPartial2008} developed a transient and stationary regimen classification method using signals processing tools such as the Wavelet transform. Regarding fault detection in power systems, \cite{liu2020fast} proposed a single-variable transient and stationary regimen classification algorithm using a proposed statistical measure.

Another transient and stationary regimen classification algorithm for multivariate industrial processes is proposed in \cite{yaoBatchtoBatchSteadyState2009}, the authors used principal component analysis and process information correlation for transient detection in batch processes. Besides, \cite{rhinehartAutomatedSteadyTransient2013} reported an automated transient and stationary regimen classification technique for processes, using statistical tools such as variance, and hypothesis testing. Finally, \cite{yu2022identification} introduced an algorithm for classifying transient regimes using differential signals and a dynamic noise band selection technique. The proposed methodology was compared with other classifiers in the literature with experimental data.

\subsection{Contribution}
\label{subsec:contribution}


\black{Many of the methods reviewed in Section~\ref{subsec:related_work} and summarized in Table~\ref{tb:comparison_rw} depend on heuristic thresholds and require the tuning of multiple parameters to operate effectively. Additionally, there is still a lack of algorithms well-suited for multivariate dynamical systems that exhibit cyclo-stationary behavior. Hence, this paper introduces a data-driven approach for developing two classifiers of transient and stationary regimes in monitored dynamical systems. The proposed method supports regime classification in multivariate systems whose signals may include periodic and noisy components, by combining signal processing techniques, statistical analysis, and curve geometry tools. Importantly, the implementation of this approach involves calibrating only two parameters, which are determined by the response time of the system and the characteristics of noise and periodicity observed during its stationary regime operation.}

\subsection{Outline}
The remainder of this document is organized as follows. The  definitions and fundamental concepts that are essential for the discussion of the proposed approach are provided in Section~\ref{sec:conceptual_framework}. The general scheme of the proposed approach is presented in Section~\ref{sec:Proposed_Approach}. Additionally, a detailed explanation of the methodology is shown in this section. The validation of the proposed approach for three simulated dynamical systems is shown in Section~\ref{sec:Validation}. Finally, Section~\ref{sec:Concluding_Remarks} shows a discussion of the proposed methodology based on a performed validation assessment. Moreover, the concluding remarks and following steps to enhance the proposed approach are drawn in this section. 

\subsection*{Notation}
\label{subec:notation}
Throughout this paper, parameters and scalar variables are represented in lower case; matrices and sets in upper case, and vectors in bold. Moreover, $t\in \mb{R}_{\geq 0}$ represents the continuous time domain, $n\in \mb{Z}_{\geq 0}$ represents the discrete time domain and $f\in \mb{R}_{\geq 0}$ represents the continuous frequency domain, where $\mb{R}_{\geq 0}\triangleq\{x \in \mb{R} \mid x \geqslant 0\}$ and $\mb{Z}_{\geq 0}\triangleq\{n \in \mb{Z} \mid n \geqslant 0\}$. Finally, $\mc{C}^q$ will be used to describe the set of functions which are $q$-differentiable, with $q \in \mb{N}$.

\section{Conceptual framework}
\label{sec:conceptual_framework}

In order to establish a solid theoretical framework for the proposed approach, this section presents an overview of the concepts and principles that underlie the methodology.

\subsection{Mathematical moments}
\label{subsec:mm}
\black{The $p$-th mathematical moment of a function $g(x): \mb{R}\rightarrow\mb{R}$ with the properties of  a density function shown in \cite{williams1991probability} around $x_0~\in \mb{R}$ is defined as}
\begin{equation}
    \label{eq:moments}
\beta_{(p,x_0)}=\int_{-\infty}^{\infty}(x-x_0)^pg(x)dx,   
\end{equation}
\noindent where $x_0$ is the value for which the moment is calculated. When $x_0=0$ the moment is commonly called as \emph{raw moment} and when $x_0=x_m$, where $x_m$ is the functions mean coordinates, then, the moment is known as \emph{central moment} \cite{papoulis2002probability}. The moments described by \eqref{eq:moments} can be interpreted as quantities that represent the geometrical shape of functions.

If $g(\cdot)$ represents a probabilistic density function in  $\mb{R}$, then the notation for raw moments is defined as  $\beta_{(p,0)} \triangleq E[(x)^p]$, while central moments are denoted as $\beta_{(p,x_m)}\triangleq E[(x-x_m)^p]$. \black{Additionally, standardized moments can be defined as $\tilde{\beta}_{(p,x_m)}\triangleq \frac{\mathbb{E}\left[x-x_m\right]^p}{\left(\mathbb{E}\left[x-x_m\right]^2\right)^{\frac{p}{2}}}$ for $p>2$. Such standardized moments are used to compare shape measures of a distribution independent of translation and scaling. Hence, if $g(x)$ in \eqref{eq:moments} is a probability density function, raw, central, and standardized moments can be employed to estimate expected value, variance, skewness, and kurtosis}.

Consider a stochastic process $x(n) \in \mathbb{R}$, where the observed values at discrete time steps $n$ are random variables. Assuming that the process is measured using a finite number of samples, the \emph{raw moments} are commonly known as \emph{sample moments} and can be estimated by the next expression:
\begin{equation}
    \label{eq:sample_moments}
\hat{\beta}_{(p,0)}(x,l)=\frac{1}{l}\sum_{n=1}^{l}(x(n))^p.
\end{equation}

\black{Moreover, if $x(n)$ is an ergodic stochastic process and the $p$-th moment exist, then, based on the Birkhoff Ergodic theorem $\hat{\beta}_{(p,0)}(x,l)=\beta_{(p,0)}$ as $l\rightarrow \infty$ \cite{papoulis2002probability, walters2000introduction}. In other words, as the number of data points increases, the sample moments of the stochastic process converge towards the true moments.}

\subsection{Stationary process}
\label{subsec:sp}

The definition of a strictly stationary stochastic process specifies that the probability distribution of the process remains constant over time intervals, without any changes in the parameters that shape the distribution, as described in \cite{park_2018}. However, a less stringent form of strict stationarity, known as weak-sense stationarity is shown in Definition~\ref{def:wss}. This definition of stationary stochastic process is often utilized for signal processing applications, as illustrated in \cite{mohammadi2022new}.
\begin{definition}
    \label{def:wss}
A stochastic process $\bs{w}(t)$ is said to be weak-sense stationary if the following conditions are met:
\begin{subequations}
    \label{eq:wss}
\begin{align}
    E[\bs{w}(t)]&=E[\bs{w}(t+\tau)], \tau \in \mb{R},\\
    K_{\bs{w} \bs{w}}(t_1,t_2)&=K_{\bs{w} \bs{w}}(t_2-t_1,0), t_1,t_2 \in \mb{R},\\
    E[\bs{w}^2(t)]&<\infty,
\end{align}
\end{subequations}

\noindent where $E[\bs{w}(t)]$ is the first raw moment, $K_{\bs{w} \bs{w}}(t_1,t_2)$ is the autocovariance function and $E[\bs{w}^2(t)]$ is the second raw moment. It is worth mentioning that  $K_{\bs{w} \bs{w}}(0,0)=E[(\bs{w}(t)-\bs{w}_m(t))^2]=E[(\bs{w}(t+\tau)-\bs{w}_m(t+\tau))^2], \bs{w}_m=E[\bs{w}(t+\tau)]~\forall~\tau$, therefore, \eqref{eq:wss} indicates that a process is considered  weak-sense stationary if the first and second centered-moments of the distribution are time-invariant and the autocovariance function only depends on lags $t_2-t_1$ \cite{gagniuc2017markov}.
\end{definition}

\begin{remark}    
\label{re:wsc}
For periodic processes, the concept of stationarity is known as wide-sense cyclo-stationary. A stochastic process $\bs{w}(t)$ is said to be wide-sense cyclo-stationary with period $t_r$ if
\begin{subequations}
    \label{eq:wcs}
    \begin{align}
        E[\bs{w}(t)]&=E[\bs{w}(t+t_r)],~\forall~t \in \mb{R},\\
        K_{\bs{w} \bs{w}}(t,t_2)&=K_{\bs{w} \bs{w}}(t+t_r,t_2),~\forall~t_2 \in \mb{R},
\end{align}
\end{subequations}
\noindent where $K_{\bs{w}\bs{w}}(t_r,0)=E[\bs{w}(t)]$. Therefore, the $1^{st}$ raw moment and $2^{nd}$ centered moment of the stochastic process are constant under time shifts given by $t_r$ \cite{gardner2006cyclostationarity}.
\end{remark}

\subsection{Geometrical properties of spatial curves}
Spatial curve representations are commonly employed in engineering to illustrate the trajectory or temporal evolution of dynamic systems. The geometrical properties of these curves can represent the behavior of the analyzed system. For instance, the arc-length, is a geometrical property that measures the distance between two points on a curve along its segment. This geometrical property is given in Definition~\ref{def:al}
\begin{definition}
    \label{def:al}
Let $\bs{g}:[t_0,t_f] \rightarrow \mb{R}^q$ be a one to one and differentiable function. The arc-length $l_r$ of $\bs{g}$ can be written as ${l_r}(t_0,t_f,\bs{g})=\int_{t_0}^{t_f}\lVert\bs{g}^{(1)}(t)\rVert dt$, where $\bs{g}^{(1)}(t)$, represents the first time-derivative of $\bs{g}(t)$ \cite{carmo}.
\end{definition}
Along with the arc length, the behavior of a spatial curve can be inferred by its curvatures, which are described through a mathematical model known as the Frenet-Serret formulas \cite{wheelerFrenetSerretFormulaeHigher}. These formulas were originally formulated to represent a system of differential equations that describe the motion and kinematic properties of a particle that moves along a $\mc{C}^3$ spatial curve $\bs{r}(t) \in \mb{R}^3$. The curvature\footnote{The curvature measures the divergence of a curve to be a straight line.} $k_1(t)$ and torsion\footnote{The torsion measures the likelihood of a curve to be fitted into a plane.} $k_2(t)$ of $\bs{r}(t)$ are the characteristic parameters of the Frenet-Serret formulas for $\mb{R}^3$ curves and can be analytically calculated, as illustrated in \cite{carmo}, using the following expressions:

\begin{equation}
    \label{eq:kur}
    \resizebox{0.99\linewidth}{!}{$
    k_1(t) = \frac{\left\lVert\bs{r}^{(1)}(t) \times \bs{r}^{(2)}(t)\right\rVert}{\left\lVert \bs{r}^{(1)}(t) \right\rVert^3},~
    k_2(t) = \frac{\langle \bs{r}^{(1)}(t) \times \bs{r}^{(2)}(t),\bs{r}^{(3)}(t)\rangle}{\left\lVert\bs{r}^{(1)}(t) \times \bs{r}^{(2)}(t)\right\rVert^2}.$}
\end{equation}

The subsequent sections of this document illustrate the proposed methodology for detecting transient and stationary regimes in dynamic systems, building upon the aforementioned concepts of mathematical moments, stationary processes, and geometrical properties of spatial curves.

\section{Proposed approach}
\label{sec:Proposed_Approach}

The proposed methodology of this article for stationary  and transient classification is summarized in Figure~\ref{fig:SMTR-GPD}.
\begin{figure*}[!t]
	\centering
	\includegraphics[width=0.8\hsize]{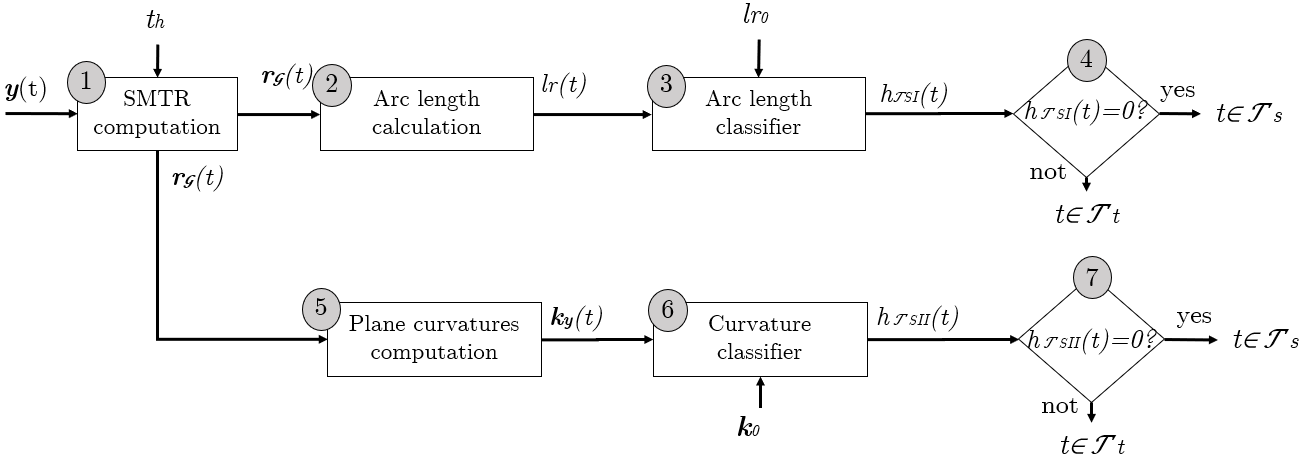}
	\caption{Scheme of the proposed methodology for transient regimes and stationary regime classification.}
	\label{fig:SMTR-GPD}
\end{figure*}

The general idea behind each stage of the proposed methodology can be summarized as follows:

\begin{enumerate}
	
	\item The sample moment-time representation curve $\boldsymbol{r}_\mathcal{G}(t)$ of the dynamical system is computed using the available sensors measuring the systems output $\boldsymbol{y}(t) \in \mathbb{R}^{l_y}$ and a previously selected time moving window $t_h$ (see Sections  \ref{subsec:srds} and \ref{subsec:crds}).

	\item After constructing $\boldsymbol{r}_\mathcal{G}(t)$, its arc length $l_r(t)$ is calculated (see Section \ref{subsec:atrc}).

	\item A threshold $l_{r0}$ and the arc length $l_r(t)$ is used to generate the piece-wise classifier $h_{\mathcal{T}_{sI}}(t)$ (see Section \ref{subsec:atrc}).

	\item The arc-length-based classifier $h_{\mathcal{T}sI}(t)$ is applied to detect time periods where the system is in a transitory or stationary regime (see Section \ref{subsec:atrc}).

	\item \black{The curve representing the dynamic system is segmented into plane curves $\bs{r}_{y_j}(t), j=1,2,\dots,l_y$, and calculate the vector of curvatures $\boldsymbol{k}_{\boldsymbol{y}}(t)$ (see Section \ref{subsec:ctrc})}.
	
	\item \black{The threshold $\boldsymbol{k}_{0}$ and the magnitude of the curvatures $\boldsymbol{k}_{\boldsymbol{y}}(t)$ is applied to generate the piece-wise classifier $h_{\mathcal{T}_{sII}}(t)$ (see Section \ref{subsec:ctrc})}.
	
	\item  The curvature-based classifier $h_{\mathcal{T}sII}(t)$ is used to detect transient or stationary regimes in the analyzed system (see Section \ref{subsec:ctrc}).
\end{enumerate}


For the application of the proposed approach, it is assumed that the analyzed dynamical system $\mc{G}$ represented by the tuple $\{\mc{X}, \mc{T}, \mc{R}\}$ (where $\mc{X}$ represents the state-space of $\mc{G}$, $\mc{T}$ denotes the set of times when the state-space is mapped, and $\mc{R}$ is the function that specifies the evolution of $\mc{X}$, as reported in \cite{giuntiDYNAMICALSYSTEMSMONOIDS2012}) is asymptotically stable or marginal stable following the definitions given in \cite{berglundGeometricalTheoryDynamical, hespanha2018linear}. Furthermore, the following assumptions over the analyzed dynamical system are stated. 

\begin{assumption}
The analyzed dynamical system is monitored by a group of discrete sampled sensors with a constant sampling time $t_s$ such that $\bs{y}(t)\approx\bs{y}(nt_s)$. Also, $t_s$ is considered to properly capture the evolution of the dynamical system.  
\end{assumption} 
\begin{assumption}
The analyzed dynamical system is assumed to be observable, such that the behavior of all the states $\bs{x}(t) \in \mc{X}$ of the analyzed system can be inferred from the data of $\bs{y}(t)$. 
\end{assumption} 
\begin{assumption}
The dynamical system is assumed to be asymptotically stable or marginal stable, such that the states either converge to an equilibrium point or oscillate permanently like a cyclo-stationary process. 
\end{assumption} 
\begin{assumption}
The dynamical system can be represented as an ergodic stochastic process during its stationary regimes.  
\end{assumption} 
            

\subsection{Stationary regime of dynamical systems}
\label{subsec:srds}

Let $\boldsymbol{y}(t)$ represent the information captured by sensors located at a dynamical system $\mc{G}$. These observations can be interpreted as a stochastic process where the parameters of the joint probability function will vary depending on the dynamics and the noise of $\boldsymbol{y}(t)$. Then, based on Definition~\ref{def:wss}, a weak-stationary time interval $\mc{T}_s$ can be defined as

\begin{equation}
    \label{eq:tswg}
    \begin{aligned}
    \mc{T}_{s} \triangleq \{t \in \mc{T}| E[\bs{y}(t)]=&E[\bs{y}(t+\tau)], E[\bs{y}^2(t)]<\infty,\\ K_{\bs{y} \bs{y}}(t_1,t_2)=&K_{\bs{y} \bs{y}}(t_2-t_1,0), t_1, t_2, \tau \in \Gamma \subset \mathbb{R}_{\geq0}\}.
\end{aligned}
\end{equation}

Therefore, the time intervals $\mc{T}_s$ defined in \eqref{eq:tswg} correspond to time intervals of stationary regime. Therefore, \black{the dynamical system is at an equilibrium or asymptotic stable operational point following Assumption 3}. Furthermore, the noise associated with the sensors is invariant during this time interval. It should be noted that the transient regime periods will correspond to the complement of $\mc{T}_s$, i.e., $\mc{T}_t \triangleq \left\{ t\in \mc{T}  \vline~t\notin \mc{T}_s \right\}$.

Accordingly, if $t\in \Gamma \subset \mathbb{R}_{\geq0}$ is a stationary regime time interval, then, the moments $\beta_{(1,\bs{0})}=\bs{y}_{m}=\text{const}$, $\beta_{(2,\bs{y}_m)}=\text{const}$, $\beta_{(2,\bs{0})}<\infty$ and the autocorrelation function  $K_{\bs{y} \bs{y}}(t_1,t_2)$ just depends on lags inside $\Gamma$.

It is essential to emphasize that for dynamic systems with periodic signals, the definition of $\mc{T}_{s}$ will correspond to the time intervals where the stochastic process $\bs{y}(t)$ is cyclo-stationary, as established in Remark~\ref{re:wsc}. Consequently, the periods of time where $\beta_{(1,\bs{0})}=\bs{y}_{m}=const$, and $\beta_{(2,\bs{y}_m)}=const$ will occur over time intervals of $t_r$, where $t_r$ is the time period of the periodic stationary response of the system.

\subsection{Sample moment-time representation of dynamical systems}
\label{subsec:crds}
Based on the defined set in \eqref{eq:tswg}, the sample statistical moments of each component of $\bs{y}(t) \in \mb{R}^{l_y}$ defined in \eqref{eq:sample_moments} can be used to identify the  periods $\mc{T}_{s}$ where the measurements over $\mc{G}$ are considered weak-stationary processes. Therefore, for a stochastic process the spatial curve, defined as
\begin{equation}
\label{eq:smtrr}
\bs{r}_{\mc{G}}(t) \triangleq [t,\hat{\beta}_{(2,\bs{0})}(y_1,t),\hat{\beta}_{(2,\bs{0})}(y_2,t),...,\hat{\beta}_{(2,\bs{0})}(y_{l_y},t)], 
\end{equation}

\noindent  is constructed. Here, $\bs{r}_{\mc{G}}(t)$ will be referred as the sample moment-time representation (SMTR) of $\mc{G}$. $\bs{r}_{\mc{G}}(t)$ consists of the time signal and the moving raw sample moments of each sensor computed from  \eqref{eq:moments} as
\begin{equation}
    \label{eq:mscm}
    \resizebox{0.99\linewidth}{!}{$\hat{\beta}_{(2,\bs{0})}(y_j,t)= E[(y_j(\tau))^2]= \frac{1}{t_{h}}\int_{t-t_h}^{t}(y_j(\tau))^2 d\tau,~j=1,2,\dots, l_y,$}
\end{equation}

\noindent where the time moving window for computing sample moments is denoted as $t_h$, such that the moment is computed from the time $t-t_h$ until the current measurement at $t$. It should be noted that for wide-sense cyclo-stationary processes or marginal stable dynamical systems, $t_h$ needs to be greater than $t_r$, where $t_r$ is the period of the stochastic cyclo-stationary process. \black{Moreover, according to Assumption 1, maintaining a constant sample time $t_s$ enables a deterministic identification of the number of samples within $t_h$ necessary to accurately cover the period of the cyclo-stationary process.} This is in accordance with Remark~\ref{re:wsc} and the definition of a marginal stable system response. By utilizing the SMTR of $\mc{G}$, changes in the sample moments of the sensors can be monitored, which in turn, enables the identification of time intervals where the system is weak-stationary.

\black{In accordance with Assumption 3 and Assumption 4, a probabilistic measure of the monitored signals converges to a constant value during the stationary regimes of the system. Consequently, to ensure the weak-sense stationarity defined in \eqref{eq:tswg}, it is suggested to monitor the second raw sample moment in~\eqref{eq:mscm} for a significant probabilistic sample size. It is worth to mention that based on \cite{amentaPowerCrustUnions2001}, the second raw moment is related to the first raw moment and the second-centered moment by the expression $ E[(y_j(t))^2]=E[(y_j(t))]^2+E[(y_j(t)-y_{j_m})^2],~E[(y_j(t))]=y_{j_m},~j=1,2,..., l_y$.}


The geometrical properties of the SMTR curve were employed to classify time intervals as either stationary or transient regime. The theoretical development of the proposed classifiers is explained along the following subsections.

\subsection{Arclength transient regime classifier}
\label{subsec:atrc}
The constructed SMTR curve $\bs{r}_\mc{G}(t)$ that represents a dynamical system has the next properties regarding its arc length:
\begin{lemma}
    \label{lm:rp}
    If the curve $\bs{r}_\mc{G}(t)$ is the SMTR of a dynamical system $\mc{G}$, hence, $\left\lVert\bs{r}^{(1)}_{\mc{G}}(t)\right\rVert\geq 1~\forall~t \in \mc{T}_t$ and  $\left\lVert\bs{r}^{(1)}_{\mc{G}}(t)\right\rVert= 1$ if $t \in \mc{T}_s$. 
    
    \end{lemma}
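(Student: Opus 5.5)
The plan is to differentiate the SMTR curve \eqref{eq:smtrr} component by component and then read off its speed. The first component is $t$, so its derivative is identically $1$. For each remaining component, I would apply the Leibniz integral rule to the moving second raw moment \eqref{eq:mscm}. This gives
\begin{equation*}
\frac{d}{dt}\hat{\beta}_{(2,\bs{0})}(y_j,t)=\frac{1}{t_h}\left(y_j^2(t)-y_j^2(t-t_h)\right),\quad j=1,\dots,l_y .
\end{equation*}
The speed of the curve is then
\begin{equation*}
\left\lVert\bs{r}^{(1)}_{\mc{G}}(t)\right\rVert=\left(1+\sum_{j=1}^{l_y}\left(\tfrac{d}{dt}\hat{\beta}_{(2,\bs{0})}(y_j,t)\right)^2\right)^{1/2}.
\end{equation*}
Since every summand is a square, this quantity is at least $1$ for every $t$, and in particular for $t\in\mc{T}_t$. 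This already settles the first claim with no further argument, because the lower bound holds on all of $\mc{T}$.

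For the equality on $\mc{T}_s$, I need each $\frac{d}{dt}\hat{\beta}_{(2,\bs{0})}(y_j,t)$ to vanish whenever $t\in\mc{T}_s$. I would argue that on a stationary interval the moving second moment is constant, and I would do it in three steps.
\begin{enumerate}
\item By \eqref{eq:tswg}, the mean $y_{j_m}$ and the centered second moment are time invariant on $\mc{T}_s$. The identity $E[y_j^2]=y_{j_m}^2+E[(y_j-y_{j_m})^2]$ quoted at the end of Section~\ref{subsec:crds} then gives a time-invariant raw second moment $\beta_{(2,\bs{0})}$.
\item By Assumption 4 (ergodicity) and the Birkhoff ergodic theorem recalled in Section~\ref{subsec:mm}, the windowed sample moment $\hat{\beta}_{(2,\bs{0})}(y_j,t)$ equals this constant true moment, for a window $t_h$ that is long enough.
\item In the cyclo-stationary or marginally stable case, Remark~\ref{re:wsc} and the standing requirement $t_h\geq t_r$ (more precisely, $t_h$ a multiple of $t_r$) make the window integral of a $t_r$-periodic $y_j^2$ independent of $t$. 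Equivalently, $y_j^2(t)=y_j^2(t-t_h)$, so the Leibniz expression above vanishes directly.
\end{enumerate}
Either route makes every component derivative zero, and hence $\lVert\bs{r}^{(1)}_{\mc{G}}\rVert=1$ on $\mc{T}_s$.

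The main obstacle is step 2, that is, justifying that the finite-window sample moment is exactly, rather than only asymptotically, constant on $\mc{T}_s$. The ergodic theorem only gives convergence as the window length tends to infinity. I would handle this in the way the paper's framework implicitly does: interpret $\mc{T}_s$ as the set of times whose entire window $[t-t_h,t]$ lies in the stationary regime, and identify the sample moment with its expectation under Assumption 4 for a sufficiently large $t_h$. With that interpretation, the derivative is $\frac{d}{dt}E[y_j^2]=0$. I would state this idealization explicitly and note that under noise the equality holds only approximately, which is what motivates the threshold $l_{r0}$ used in the classifier.
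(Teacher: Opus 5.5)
Your argument is the paper's own. It writes $\lVert\bs{r}^{(1)}_{\mc{G}}(t)\rVert=\bigl(1+\sum_{j}(\hat{\beta}^{(1)}_{(2,\bs{0})}(y_j,t))^2\bigr)^{1/2}$, takes the lower bound from the nonnegative squares, and gets equality on $\mc{T}_s$ by simply asserting that the moment derivatives vanish there (the paper also asserts a strict inequality off $\mc{T}_s$, which is unjustified and which you rightly do not claim).

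Your Leibniz formula $\frac{1}{t_h}\bigl(y_j^2(t)-y_j^2(t-t_h)\bigr)$ makes explicit what that assertion needs. First, the whole window $[t-t_h,t]$ must lie in the stationary regime, which is the hypothesis the paper itself uses in Theorem~\ref{th:lr}. Second, it needs $y_j^2(t)=y_j^2(t-t_h)$. This holds exactly for noise-free constant or $t_h$-periodic outputs, hence your correct sharpening to $t_h$ a multiple of $t_r$, but only in expectation for noisy realizations. So, as you note, ergodicity cannot make it exact, and your stated idealization is precisely the one the paper leaves implicit.
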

    
    \begin{proof}

        Here, $\bs{r}_\mc{G}(t) = [t,\hat{\beta}_{(2,\bs{0})}(y_1,t),\hat{\beta}_{(2,\bs{0})}(y_2,t), \dots,\linebreak
        \hat{\beta}_{(2,\bs{0})}(y_{l_y},t)(t)]$. Therefore,  the magnitude of its time derivative can be computed as $\left\lVert\bs{r}^{(1)}_{\mc{G}}(t)\right\rVert = \left(1+\sum_{j=1}^{l_y}(\hat{\beta}^{(1)}_{(2,\bs{0})}(y_j,t))^2\right)^{\frac{1}{2}}$. Since $t \in \mc{T}_s$, then, $\hat{\beta}^{(1)}_{(2,\bs{0})}(y_j,t)(t)=\emptyset, ~\forall j =1,2,..., l_y$. Therefore, $\left\lVert \bs{r}^{(1)}_{\mc{G}}(t)\right\rVert = 1$. Otherwise, if $t \notin \mc{T}_s$, it is straightforward that $\left\lVert \bs{r}^{(1)}_{\mc{G}}(t)\right\rVert > 1$.

    \end{proof}
    \begin{theorem}
        \label{th:lr}
        If the curve $\bs{r}_\mc{G}(t)$ is the SMTR of a dynamical system $\mc{G}$ and the open interval $(t-t_h,t) \in \mc{T}_s $, hence,\quad \quad $l_r(t-t_h,t,\bs{r}_\mc{G})=t_h$.
        \end{theorem}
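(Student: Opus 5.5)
The plan is to compute the arc length directly from Definition~\ref{def:al} and evaluate the integrand with Lemma~\ref{lm:rp}. By Definition~\ref{def:al},
\begin{equation*}
l_r(t-t_h,t,\bs{r}_\mc{G})=\int_{t-t_h}^{t}\left\lVert\bs{r}^{(1)}_{\mc{G}}(\tau)\right\rVert d\tau .
\end{equation*}
This definition needs $\bs{r}_\mc{G}$ to be one-to-one and differentiable on $[t-t_h,t]$, so we check these conditions first.

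\textbf{Step 1 (admissibility).} Injectivity is immediate, because the first coordinate of $\bs{r}_\mc{G}(\tau)$ is $\tau$ itself. For differentiability, write each remaining coordinate \eqref{eq:mscm} as $\hat{\beta}_{(2,\bs{0})}(y_j,\tau)=\frac{1}{t_h}\int_{\tau-t_h}^{\tau}y_j^2(s)\,ds$. By the fundamental theorem of calculus its derivative is $\frac{1}{t_h}\left(y_j^2(\tau)-y_j^2(\tau-t_h)\right)$, which exists whenever $y_j^2$ is continuous. We adopt this as a standing regularity assumption on the sensor signals; it is consistent with Assumption~1.

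\textbf{Step 2 (pointwise value of the integrand).} Every $\tau$ in the open interval $(t-t_h,t)$ lies in $\mc{T}_s$ by hypothesis. Lemma~\ref{lm:rp} therefore gives $\left\lVert\bs{r}^{(1)}_{\mc{G}}(\tau)\right\rVert=1$ for all such $\tau$. The mechanism is that on $\mc{T}_s$ the second raw moment is constant, since $E[y_j^2]=y_{j_m}^2+\beta_{(2,y_{j_m})}$ and both terms are time-invariant by \eqref{eq:tswg}. Hence the derivatives of the moment coordinates vanish and only the time coordinate contributes, giving $\left(1+\sum_j 0\right)^{1/2}=1$.

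\textbf{Step 3 (integration).} The integrand equals $1$ on $(t-t_h,t)$. The two endpoints form a set of measure zero, so they do not affect the Lebesgue (or Riemann) integral, and we obtain
\begin{equation*}
\int_{t-t_h}^{t}1\,d\tau=t_h .
\end{equation*}

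The main obstacle is Step 2, namely justifying that the \emph{sample} moment \eqref{eq:mscm} is genuinely constant rather than merely having a constant expectation. For equilibrium (asymptotically stable) regimes we would invoke Assumption~4 (ergodicity): the windowed average equals the true second moment, which is constant on $\mc{T}_s$. For cyclo-stationary regimes (Remark~\ref{re:wsc}) we would instead use $t_h\geq t_r$, chosen as an integer multiple of $t_r$, to argue that $y_j^2(\tau)=y_j^2(\tau-t_h)$ in the derivative formula of Step 1. The derivative then vanishes exactly, consistently with the construction in Section~\ref{subsec:crds}. Beyond this point we take Lemma~\ref{lm:rp} as given, so the remainder of the argument is the routine integration above.
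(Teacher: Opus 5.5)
Your proof is correct and follows the paper's argument exactly: write the arc length via Definition~\ref{def:al}, apply Lemma~\ref{lm:rp} to get $\lVert\bs{r}^{(1)}_{\mc{G}}(\tau)\rVert=1$ on $(t-t_h,t)\subset\mc{T}_s$, and integrate $1$ over an interval of length $t_h$; the admissibility and measure-zero remarks are harmless extras. Your closing heuristics for why the \emph{sample} moment is constant would not hold as stated for noisy signals: a finite-window average does not ``equal'' the true moment, and $y_j^2(\tau)=y_j^2(\tau-t_h)$ does not hold pathwise (it also involves times before $t-t_h$). Since you, like the paper, rest that step on Lemma~\ref{lm:rp}, this does not affect the proof.
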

        \begin{proof}
        The arc length $l_r(\cdot)$ of $\bs{r}_{\mc{G}}(t)$ between $t-t_h$ and $t$ is defined as 
        $\int_{t-t_h}^{t}\left\lVert \bs{r}^{(1)}_{\mc{G}}(t) \right\rVert dt$. Following Lemma \ref{lm:rp}, $\left\lVert \bs{r}^{(1)}_{\mc{G}}(t) \right\rVert=1$ for $t \in \mc{T}_{s}$. Therefore, $\int_{t-t_h}^{t}1dt = t_h$.
        \end{proof}

        From Theorem \ref{th:lr}, the arc length of the SMTR curve is constant when the analyzed dynamical system is in a weak-stationary time interval. By defining \linebreak
        $l_{r\mathcal{T}}(t)=l_r(t-t_h,t,\bs{r}_\mc{G})-t_h$, \\
        then it becomes possible to utilize Theorem \ref{th:lr} and the arc length of $\bs{r}_\mc{G}(t)$ to identify transient regimes through the piece-wise function written as

\begin{equation}
    \label{eq:hrl}
    h_{\mc{T}_sI}(t)  \triangleq 
    \left\{
        \begin{array}{lr}
            0, & \text{if }\quad l_{r\mathcal{T}}(t-t_h,t,\bs{r}_{\mc{G}})= l_{r_0},\\
            1, & \text{if }\quad l_{r\mathcal{T}}(t-t_h,t,\bs{r}_{\mc{G}})>l_{r_0},
        \end{array}
    \right.
    \end{equation}

\noindent where $l_r(t_h)$ is the moving arc length of  $\bs{r}_\mc{G}(t)$ defined in Definition~\ref{def:al} and $t_h$  is the same time window used for the computation of $\bs{r}_\mc{G}(t)$, whereas $l_{r_0}$ is a threshold that states the minimum value of $l_r$ to consider a transient regime in the dynamical system. \black{In theory, $l_{r_0}=0$ following Theorem~\ref{th:lr}. Nevertheless, for noisy signals this threshold must be calibrated by following suitable procedures\footnote{A proposed approach for the calibration of this threshold is reported in the Section E of supplementary material, available in \cite{linkannexes}.}.} The function~\eqref{eq:hrl} allows to identify the changes of the statistical parameters of the observations of $\mc{G}$. Following Theorem~\ref{th:lr}, if $t \in \mc{T}_s$, then, $h_{\mc{T}_sI}(t)=0$, otherwise $h_{\mc{T}_sI}(t)=0$.

    \color{black}
    \subsection{Curvature transient regime classifier}
    \label{subsec:ctrc}
    In addition to the classifier based on arc length, the proposed method employs the curvatures of the SMTR curve as indicators of transient regimes in dynamic systems as follows. 
    
    Consider the set of plane curves defined as
    \begin{equation}
        \label{eq:prset}
        \mc{D}:= \left\{\bs{r}_{y_j}(t) \in \mathbb{R}^2 |~\bs{r}_{y_j}(t) \triangleq [t,\hat{\beta}_{y_j}(t)],~j=1,2,...,l_y. \right\}.
    \end{equation}

    The elements of $\mc{D}$ represents the marginal evolution of the systems outputs sample moment over time. Based on \eqref{eq:kur}, the curvature of each element of the set \eqref{eq:prset} can be computed as
    \begin{equation}
        \label{eq:ky}
        k_{y_j}(t) =\frac{|\hat{\beta}_{y_j}^{(2)}(t)|}{\left( 1+\left(\hat{\beta}_{y_j}^{(1)}(t)\right)^2 \right)^{\frac{3}{2}}},~j=1,2,...,l_y.
    \end{equation}
    Theorem \ref{th:kur} describes the properties of each curvature in \eqref{eq:ky} regarding transient and stationary regime classification.

    \begin{theorem}
        \label{th:kur}
        If the curve $\bs{r}_\mc{G}(t)$ is the SMTR of a dynamical system $\mc{G}$ and $t \in \mc{T}_{s}$, hence, $\lVert\bs{k}_{\bs{y}}(t)\rVert=\emptyset$, with \linebreak
        $\bs{k}_{\bs{y}}(t)\triangleq [k_{y_1}(t),k_{y_2}(t),\dots,k_{y_{l_y}}(t)]$. 
    \end{theorem}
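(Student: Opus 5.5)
The plan mirrors the argument of Lemma~\ref{lm:rp}, now applied to second derivatives. Throughout, $\hat{\beta}_{y_j}(t)$ is the moving second raw sample moment $\hat{\beta}_{(2,\bs{0})}(y_j,t)$ of \eqref{eq:mscm}, so each plane curve $\bs{r}_{y_j}(t)=[t,\hat{\beta}_{y_j}(t)]$ in $\mc{D}$ is the projection of $\bs{r}_\mc{G}(t)$ onto the coordinates $(t,\hat{\beta}_{(2,\bs{0})}(y_j,t))$. The claim $\lVert\bs{k}_{\bs{y}}(t)\rVert=\emptyset$ is read, as in the proof of Lemma~\ref{lm:rp}, as $\lVert\bs{k}_{\bs{y}}(t)\rVert=0$. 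Since $\lVert\bs{k}_{\bs{y}}(t)\rVert^2=\sum_{j}k_{y_j}^2(t)$ and every $k_{y_j}(t)\geq 0$, it suffices to show that $k_{y_j}(t)=0$ for each $j=1,\dots,l_y$ whenever $t\in\mc{T}_s$.

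\textbf{Step 1: the second moment is locally constant.} Fix $t\in\mc{T}_s$. By the definition \eqref{eq:tswg}, on the stationary interval $\Gamma$ containing $t$ we have $\beta_{(1,\bs{0})}=\bs{y}_m$ constant and $\beta_{(2,\bs{y}_m)}$ constant. Hence, by the relation $E[y_j^2]=E[y_j]^2+E[(y_j-y_{j_m})^2]$, the second raw moment is constant. By Assumption~4 (ergodicity) and the requirement $t_h>t_r$ in the cyclo-stationary case, the moving sample moment \eqref{eq:mscm} equals this constant on a neighbourhood of $t$ inside $\mc{T}_s$. This is exactly the fact used in Lemma~\ref{lm:rp}, namely $\hat{\beta}^{(1)}_{(2,\bs{0})}(y_j,t)=0$. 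Note that Lemma~\ref{lm:rp} gives $\lVert\bs{r}^{(1)}_\mc{G}\rVert=1$ and hence vanishing first derivatives, but only pointwise.

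\textbf{Step 2: the second derivative vanishes.} The key point is that the first derivative vanishes on an open neighbourhood of $t$, not merely at the single point $t$. Differentiating the identically zero function $\hat{\beta}^{(1)}_{y_j}$ on that neighbourhood then gives $\hat{\beta}^{(2)}_{y_j}(t)=0$. An equivalent route is to differentiate \eqref{eq:mscm} directly. This gives $\hat{\beta}^{(1)}_{y_j}(t)=\frac{1}{t_h}\left(y_j^2(t)-y_j^2(t-t_h)\right)$, whose derivative is zero because the window-end values agree in the stationary (or period-$t_r$) sense.

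\textbf{Step 3: conclude.} Substituting into \eqref{eq:ky} gives numerator $|\hat{\beta}^{(2)}_{y_j}(t)|=0$ and denominator $(1+0)^{3/2}=1$. Therefore $k_{y_j}(t)=0$ for every $j$, and so $\lVert\bs{k}_{\bs{y}}(t)\rVert=0$.

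The main obstacle is Step~2, specifically justifying that constancy holds on an open set, so that the second derivative is legitimately zero, and that $\hat{\beta}_{y_j}$ is $\mc{C}^2$ there. I would handle this by taking $t$ in the interior of $\mc{T}_s$, as in the hypothesis of Theorem~\ref{th:lr} where the whole window lies in $\mc{T}_s$, and by assuming $y_j$ is continuous so that \eqref{eq:mscm} is differentiable.
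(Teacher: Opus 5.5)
Your argument is essentially the paper's: its proof likewise takes the whole open window $(t-t_h,t)$ inside $\mc{T}_s$ (the strengthening you propose at the end), asserts that the first and second time derivatives of every $\hat{\beta}_{(2,\bs{0})}(y_j,t)$ vanish there, and substitutes into \eqref{eq:ky}, so each numerator is zero and $\lVert\bs{k}_{\bs{y}}(t)\rVert=0$. Your Step~1 justification goes further than the paper but is still an idealization: ergodicity gives equality of sample and true moments only as the window grows, and a moving average of a periodic $y_j^2$ is exactly constant only when $t_h$ is a multiple of its period, not merely when $t_h>t_r$. It is therefore cleaner to take exact constancy of the windowed moment on the stationary window as given, which is what the paper implicitly does.
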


    \begin{proof}
        The SMTR of a dynamical system is represented as \newline$\bs{r}_{\mc{G}}(t)=[t,\hat{\beta}_{(2,\bs{0})}(y_1,t),\hat{\beta}_{(2,\bs{0})}(y_2,t),...,\hat{\beta}_{(2,\bs{0})}(y_{l_y},t)]$. If the open interval $(t-t_h,t) \in \mc{T}_s $, thus, $\hat{\beta}^{(2)}_{(2,\bs{0})}(y_j,t)=0 ~\forall j=1,2,..., l_y,$ $v = 1,2$. Therefore, the norm of the vector of plane curvatures $\bs{k}_{\bs{y}}(t)$ is computed as
        \begin{align*} 
        \lVert \bs{k}_{\bs{y}}(t) \rVert = \sqrt{\sum_{j=1}^{l_y}\left(\frac{\left(|\hat{\beta}_{y_j}^{(2)}(t)|\right)^2}{\left( 1+\left(\hat{\beta}_{y_j}^{(1)}(t)\right)^2\right)^3}\right)}=\emptyset.
        \end{align*} 
    \end{proof}
    \begin{remark}
    \label{re:conditions}

    Theorem \ref{th:kur} states that the curvatures of all elements within \eqref{eq:prset} are zero during stationary time intervals. In the context of stationary processes, Theorem \ref{th:kur} implies that there are no changes in the marginal sampled statistical parameters of the dynamic system states within the evaluated interval. It is important to emphasize that this result holds true solely within the open interval $(t-t_h,t)$, ensuring that the second derivative of the sampling moments is null. This nullity arises from the absence of changes in the trajectory of the curve, rather than the presence of inflection points in the trajectory of the curve.

    \end{remark}

    Based on Theorem \ref{th:kur}, it is proposed to classify the time intervals of transient regimes and stationary regimes using the piece-wise function defined as
    \begin{equation}
        \label{eq:hkkk}
        h_{\mc{T}_sII}(t)  \triangleq 
        \left\{
            \begin{array}{lr}
                0, & \text{if } \quad \lVert\bs{k}_{\bs{y}}(t)\rVert= \bs{k}_{0},\\
                1, & \text{if } \quad \lVert\bs{k}_{\bs{y}}(t)\rVert> \bs{k}_{0}.
            \end{array}
        \right.
        \end{equation}
    \color{black}

    \black{As shown in~\eqref{eq:hkkk}, $\bs{k}_{0}=0$ for ideal deterministic dynamic systems. Nevertheless, for noisy signals this threshold must be calibrated by following suitable procedures \footnotemark[3].} It is worthwhile to mention that the numerical derivatives of the proposed methodology were computed using a Savitzky-Golay filter proposed in \cite{gorry1990general}. This filter allows computing high-order derivatives of noisy signals by means of a smoothing filter. Also, the numerical integrals of the proposed method were computed using the trapezoidal integration rule reported in \cite{yeh2002using}.

    \black{For transient classification in online applications using acquisition systems with sampling times $t_s$, where sensor samples are taken at discrete times $t \approx t_s n$, it becomes feasible to approximate $\lVert\bs{k}_{\bs{y}}(t)\rVert$ and $l_{r\mathcal{T}}(t)$ for the SMTR curve of the $p$-th raw sampling moment using finite differences\footnote{The mathematical derivation of these formulas using finite differences is provided in Section~F of the supplementary material, available in \cite{linkannexes}.}. These approximations can be represented in discrete time as} 
    \begin{subequations}
        \label{eq:lrRiemann}
        \begin{align}
        \black{l_{r}(n)} \approx& \black{\frac{1}{l_h}\sum_{k=n-l_h}^{n}\sqrt[]{t_h^2+\lVert (\boldsymbol{y}(k))^p-(\boldsymbol{y}(k-\Delta_h))^p\rVert^2},}\\
        \black{\lVert\bs{k}_{\bs{y}}(n)\rVert} \approx& \black{t_hl_h\sqrt{\sum_{j=1}^{l_y}\left(\frac{\left|\Delta y^p_j(n,l_h) -\Delta y^p_j(n-1,l_h)\right|}{\left( (t_h^2+\Delta y^p_j(n,l_h)^2)\right)^{3/2}}\right)^2},}\\
        \black{\Delta y^p_j(n,l_h)} =&  \black{(y_j(n))^p-(y_j(n-l_h-1))^p,}
        \end{align}
    \end{subequations} 

    \noindent \black{where $l_h=\frac{t_h}{t_s}$ and $\Delta_h = l_h + 1$. Once the arc length and curvature norm are computed with~\eqref{eq:lrRiemann}, the classification can be performed using~\eqref{eq:hrl} and \eqref{eq:hkkk}. It is important to mention that the proposed method uses $p=2$, which implies that the classifiers are based on the second sample raw moments. Such moments are widely used for the characterization of periodic signals as exposed in \cite{petrovic2012root, van1991periodic, poomjan2013accurate}. Furthermore, as mention in Section~\ref{subsec:srds}, monitoring the second raw moment give enough conditions to ensure wide-sense stationary. Therefore, increasing $p$ makes the smaller transitions negligible and highlights the more abrupt changes in the arc length and curvature of the SMTR curve of the analyzed system.}

\color{black}

\subsection{Limitations}
\label{subsec:ctrc}

The proposed methodology relies on certain assumptions to facilitate the detection of transient and stationary regimes. These assumptions constitute the primary limitations of the method. A discussion of such limitations is next shown.

    \textit{Observability:} The proposed methodology relies on the assumption of full observability as stated in Assumption 2. However, if the dynamic system under study is not completely observable, inferring all transient and stationary regimes of the states solely from monitored outputs becomes unfeasible. Consequently, analyzing the systems outputs alone may not suffice for accurately classifying the regimes of the studied system\footnote{A study assessing the influence of observability on the proposed classifiers is detailed in Section G of the supplementary material, available in \cite{linkannexes}.}.

    \textit{Sampling:} Assumption 1 specifies that the sampling time of the sensors remains constant and adequately captures the dynamics of the studied system. Introducing a variable sampling time could result in inadequate acquisition of the dynamics of the system. Consequently, essential features such as response times, periods, and transitions may not be discernible in the signals monitored by the sensors \footnote{ An analysis of the influence of variable sampling time on the performance of the proposed method is detailed in Section I of the supplementary material, available in \cite{linkannexes}}.

    \textit{Noise:} The effectiveness of the presented classifiers relies on the dominance of the output signals over background noise. The signal-to-noise ratio is a common measure of the presence of noise in signals. Definitions and characteristics of the SNR and its variants are exposed in \cite{welvaert2013definition}. In general, for signal processing a SNR of 100 means a neglected signals noise, a SNR of 0 means the energy of the noise is comparable with the signal, and a negative SNR means a predominant noise in the signal\footnote{An analysis of the influence of signals noise on the performance of the proposed method is detailed in Section H of the supplementary material, available in \cite{linkannexes}.}.

\subsection{\black{Pseudocode}}

In order to ensure precision in the evaluation of the classifiers within an online computation environment, the integrals and derivatives of the signals described in the proposed methodology must be computed using numerical approximations, such as the Stencil method or Savitzky-Golay filters, as presented in \cite{gorry1990general, kamakoti2010high}. By applying finite difference approximations, both the arc length and the plane curvature can be directly obtained from the raw measurements acquired by the sensors monitoring the system. The pseudocode for calculating the proposed transition classifiers using finite differences is provided below.

\begin{algorithm}[!htb]
	\SetAlgoLined
	\KwIn{$\boldsymbol{y}(t)$, $t_s$, $t_h$, $\boldsymbol{k}_{0}$, $l_{r0}$.}
	\KwResult{$h_{\mathcal{T}_{sI}}(t)$, $h_{\mathcal{T}_{sII}}(t)$. }

    Estimate the magnitude of the plane curvatures and arc length of the SMTR curve inside the analyzed time window $t_h $using  $\boldsymbol{y}(t)$ as: \;
      
    $l_r(n) \approx \frac{1}{l_h}\sum_{k=n-l_h}^{n}\sqrt[]{t_h^2+\lVert (\boldsymbol{y}(k))^2-(\boldsymbol{y}(k-\Delta_h))^2\rVert^2},$\;

    $\lVert\boldsymbol{k}_{\boldsymbol{y}}(n)\rVert \approx t_hl_h\sqrt{\sum_{j=1}^{l_y}\left(\frac{\left|\Delta y^2_j(n,l_h) -\Delta y^2_j(n-1,l_h)\right|}{\left( (t_h^2+\Delta y^2_j(n,l_h)^2)\right)^{3/2}}\right)^2},$\;
      
    with, \;

    $ \Delta y^2_j(n,l_h) =  (y_j(n))^2-(y_j(n-l_h-1))^2$, $l_h=\frac{t_h}{t_s}$ and $\Delta_h = l_h + 1$.\;

    Finally, build the classifiers $h_{\mathcal{T}_{sI}}(t)$ and $h_{\mathcal{T}_{sII}}(t),~t \in (0,t_h)$ inside the analyzed the moving time window $t_h$ with $\boldsymbol{k}_{0}$ and $l_{r0}$ as:\; 

    \If{$l_{r\mathcal{T}}(t) > l_{r0}$}{
    $h_{\mathcal{T}_{sI}}(t)=1$.\;
    }
    \Else{
        $h_{\mathcal{T}_{sI}}(t)=0$.\;
    }

    \If{$\lVert \boldsymbol{k}_{\boldsymbol{y}}(t) \rVert > \boldsymbol{k}_{0}$}{
    $h_{\mathcal{T}_{sII}}(t)=1$.\;
    }
    \Else{
        $h_{\mathcal{T}_{sII}}(t)=0$.\;
    }
\caption{Computation of transient regime classifiers using simple finite differences}
\label{alg:3}
\end{algorithm}

Finally, it is worth mentioning that, based on the constructed classifiers, it is possible to detect transient regime events. The value of the classifiers will be equal to 1 in the presence of transient behavior, and equal to 0 in the case of stationary or cyclo-stationary behavior, according to the assumptions established for the proposed methodology.

The pseudocode of the computation of the general form of the classifiers using the SMTR curve is presented in Section N. of the supplementary material presented in \cite{linkannexes}.

The implementation of the classifiers was carried out primarily in C++ and Python, utilizing the  \texttt{lapack}, \texttt{blas}, and \texttt{numpy} libraries.  
\color{black}

\section{Validation}
\label{sec:Validation}
The proposed methodology was validated with three dynamical systems. These systems were subjected to external inputs that varied over time to induce regime transitions. The simulations included a linear system, a non-linear system, and a discontinuous forced system.  The observations of the dynamical systems were sampled at $t_s=0.002~[\si[]{\second}]$, and the external inputs were generated accordingly. \black{The parameters for the proposed method were calibrated for each of the three dynamical systems \footnote{ A proposed approach for the calibration of the methods parameters is reported in the Section E of supplementary material, available in \cite{linkannexes}.}}. The specific parameters utilized for each simulation are detailed in Table~\ref{tb:smtr_gpd_params}.

\begin{figure*}[!t]
    \centering
    \begin{subfigure}{0.32\linewidth}
        \includegraphics[width=\linewidth]{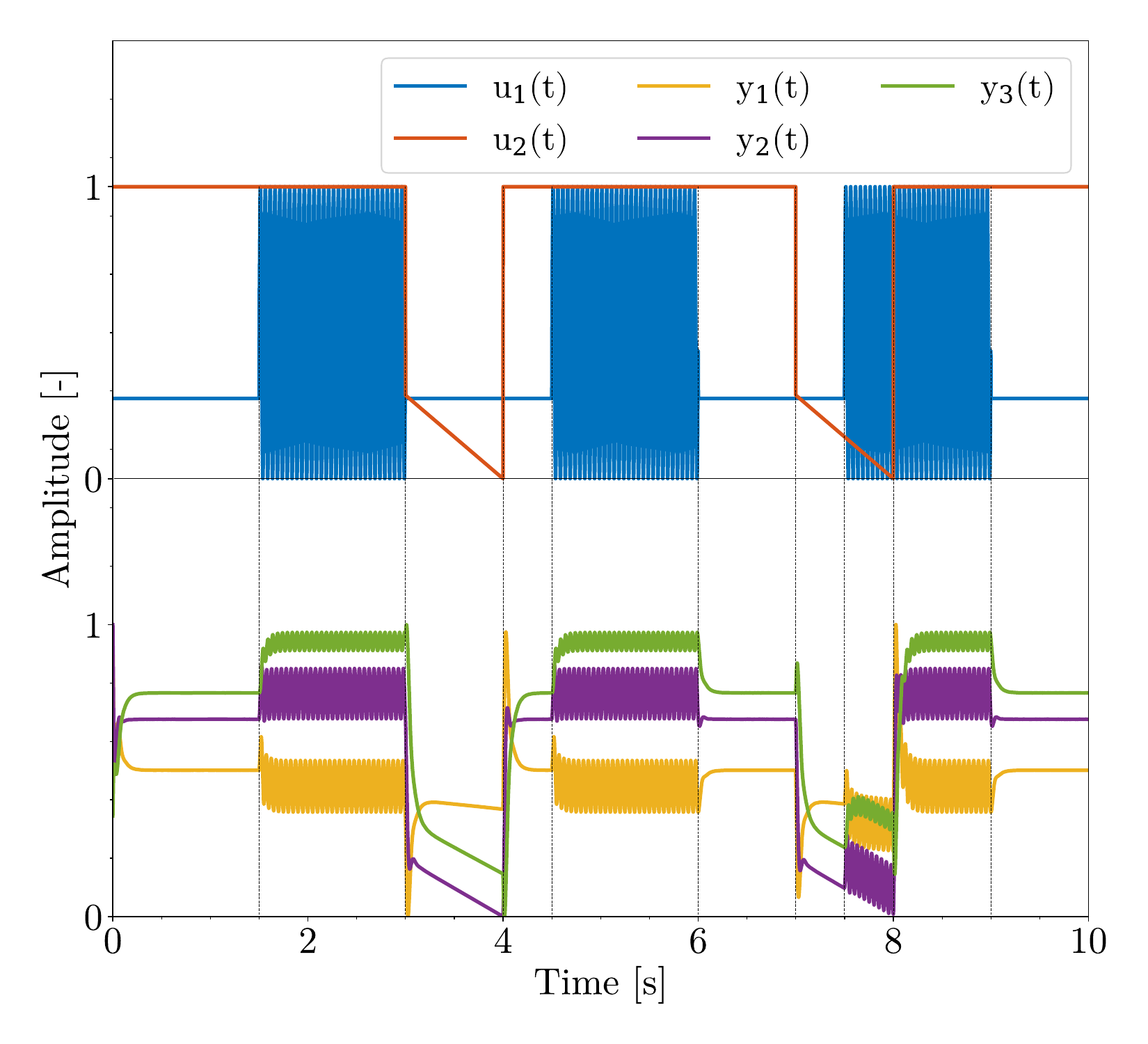}
        \caption{}
        \label{fg:linear_ds_inputs and_output}
    \end{subfigure}
    \centering
    \begin{subfigure}{0.32\linewidth}
        \includegraphics[width=\linewidth]{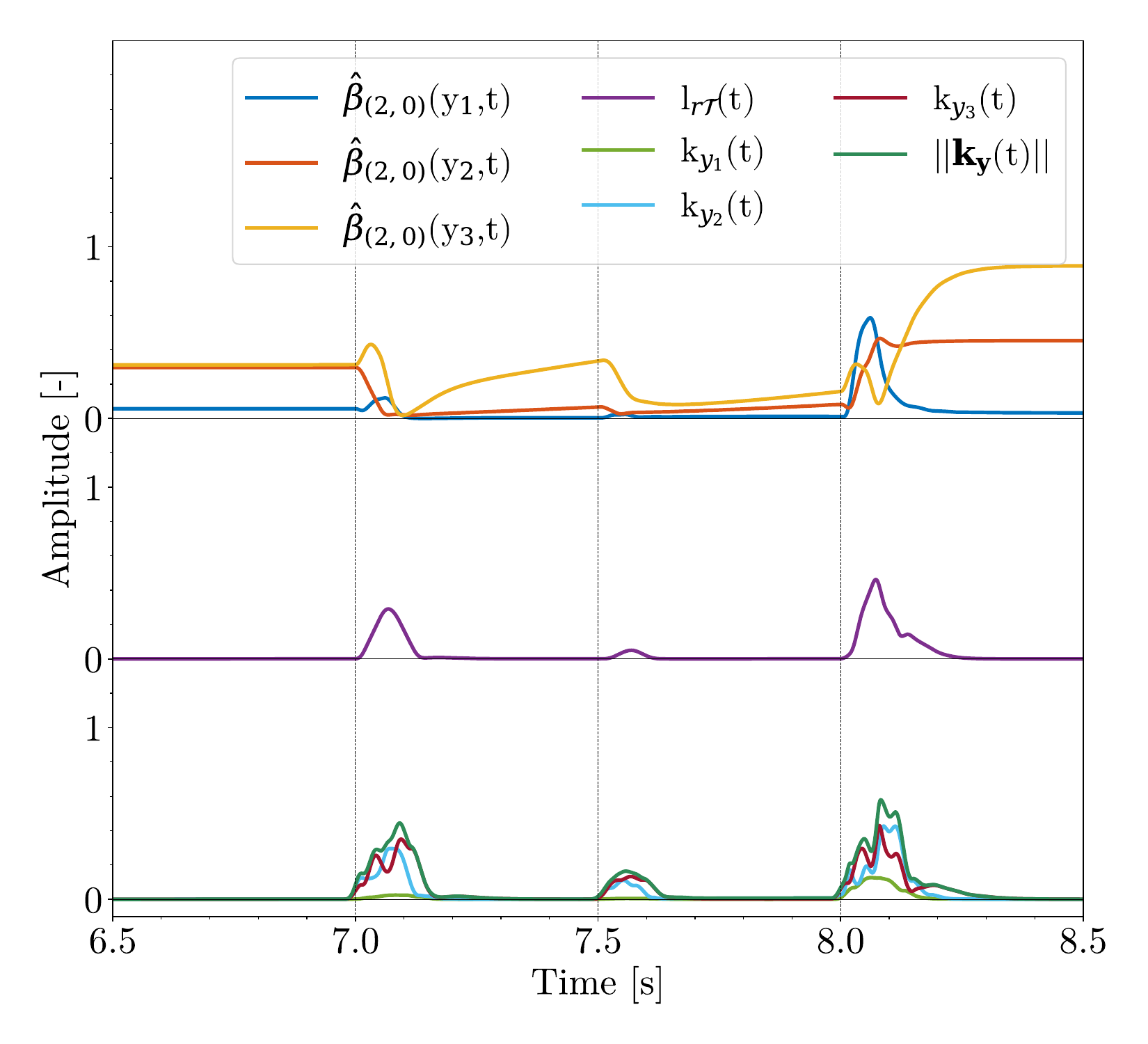}
        \caption{}
        \label{fg:SMTR_linear_DS}
    \end{subfigure}
    \centering
    \begin{subfigure}{0.32\linewidth}
        \includegraphics[width=\linewidth]{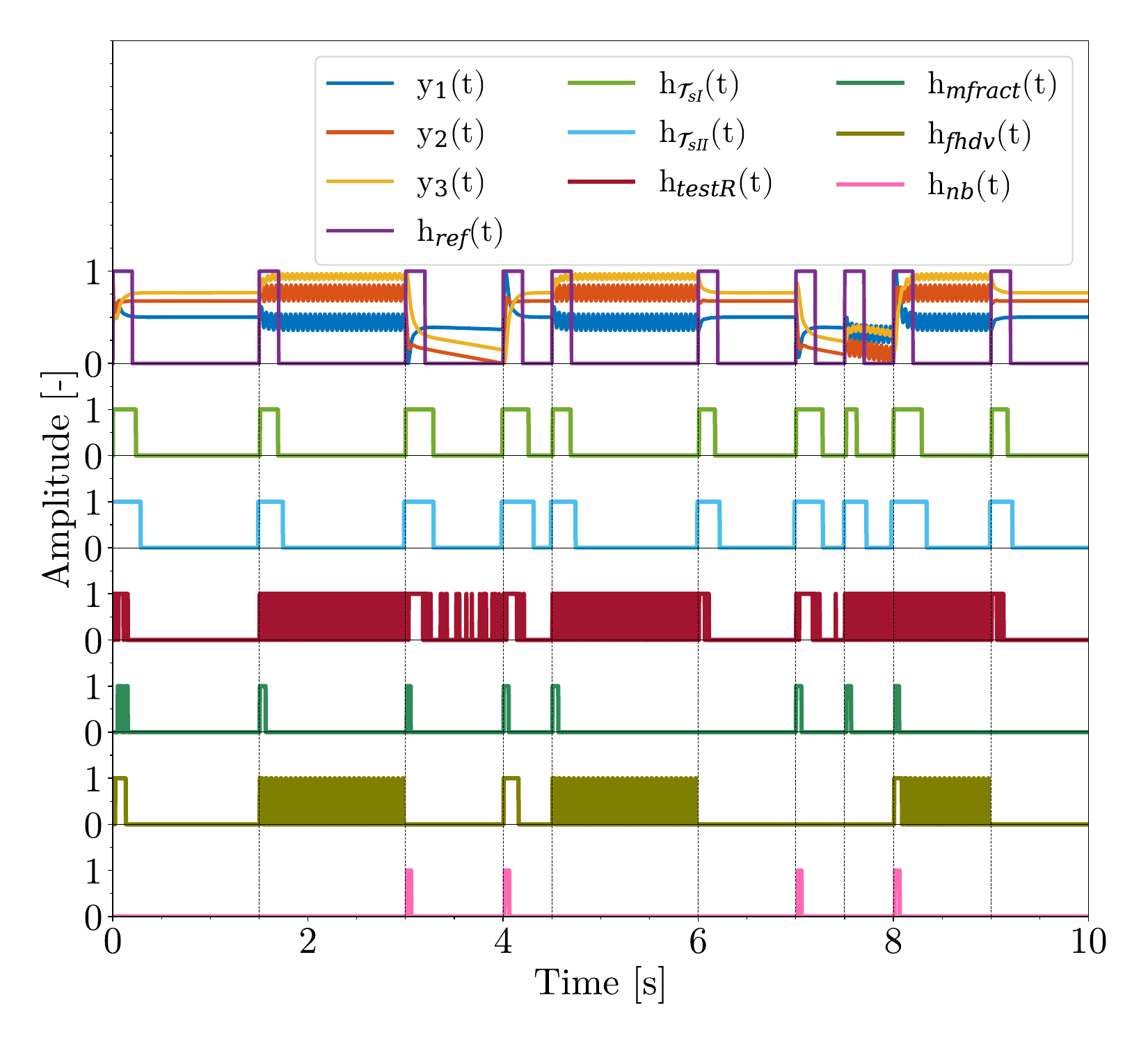}
        \caption{}
        \label{fg:linear_ds_inputs and_output2}
    \end{subfigure}

    \caption{Results of the simulated linear dynamic system. (a) Input and output signals of the simulated linear dynamical system. (b) SMTR curve and its geometrical properties for the considered linear-dynamical system. (c) Comparison between the proposed classifiers and the evaluated classifiers from the literature.}
    \label{fg:linear_ds_inputs and_outputs}
\end{figure*}

\begin{table}[!t]
    \centering
    \small
    \caption{Parameters used for the validation assessment of the proposed methodology for each analyzed dynamical system.}
    \label{tb:smtr_gpd_params}
    \scalebox{0.84}{
    \begin{tabular}{  l  c  c  c }
        dynamical system  &Time window& Arc length & Curvature norm \\ 
        type&$(t_h~[\si[]{\second}])$&threshold $(l_{r_0}~[-])$&threshold $(\bs{k}_{0}~[-])$\\ 
        \hline
        Linear &$0.05$ & $3.2\times10^{-4}$ &  $4.89$   \\
        Non-linear & $0.15$ & $3.50\times10^{-5}$ & $0.02$ \\
        Discontinuous  & $0.32$ & $0.02$ &$2.00$ \\
        \hline
    \end{tabular}}
\end{table}

Subsequently, the computed classifiers $h_{\mc{T}_s I}(t)$ and $h_{\mc{T}_s II}(t)$ were compared with existing methods found in the literature. Table~\ref{tb:h_to_compare} outlines the classifiers utilized for this comparative analysis. Additionally, such a table also provides a comprehensive overview of the underlying principles and concepts upon which each classifier is based, as well as their corresponding reference. The parameters used for each classifier were tuned to ensure optimal performance in the classification assessment for each analyzed system\footnote{The parameters employed for computing each studied classifier in the literature are shown in Section D of the supplementary material, available in \cite{linkannexes}.}.

\begin{table}[!t]
    \centering
    \small
    \caption{Transient regimes classifiers from the literature.}
    \scalebox{0.84}{
    \begin{tabular}{  l  c  c  c }
        Name  & transient regimes classifier description                            & Reference                                         \\ [0.3ex]
        \hline
        $h_{testR}$ & Hypothesis test and R-test estimation & \cite{rhinehartAutomatedSteadyTransient2013}      \\
        $h_{mfrac}$ & Multi-fractal dimension analysis & \cite{shaw1997multifractal}       \\
        $h_{fhdv}$  & HVDC-median statistical measure & \cite{liu2020fast}                    \\
        $h_{nb}$  & Difference signal and dynamic noise band selection & \cite{yu2022identification}                    \\
        \hline
    \end{tabular}}
    \label{tb:h_to_compare}
\end{table}

\subsection{Linear dynamical system simulation}
\label{subsec:sdsv}

As a first validation of the proposed method, a stable linear continuos-time state-space model was simulated\footnote{A complete description of this dynamical system is specified in Section~A of the supplementary material, available in \cite{linkannexes}.}. This dynamical system is commonly written as 
\begin{align*}
    \dot{\bs{x}}(t)&=A\bs{x}(t)+B\bs{u}(t) + \bs{\upsilon}(t),\\
    \bs{y}(t)&=C\bs{x}(t)+D\bs{u}(t) + \bs{\eta}(t).
\end{align*}

 By means of the modulus operator (\%), the inputs of the linear dynamical system were defined using the piece wise functions  
\begin{align*}
u_1(t) &= 
     \begin{cases}
        \text{$-1.0$}&\text{$t\%3.0<1.5$, } \\
        \text{$2.0\sin(40\pi t)$}&\text{$t\%3.0>1.5$,}\\
     \end{cases}
     \\
     u_2(t) &= 
     \begin{cases}
       \text{$2.5$}&\text{$t\%4.0<3.0$,} \\
        \text{$-t$}&\text{$t\%4.0>3.0$.}\\
     \end{cases}
\end{align*}

It is worth to mention that the selected inputs allow to evaluate the classifiers under scenarios of stationarity and cyclo-stationarity. The system was simulated for 10~\si[]{\second} and the classifiers were evaluated using the parameters specified in Table~\ref{tb:smtr_gpd_params}. The inputs and outputs of the simulated linear system are shown in Figure~\ref{fg:linear_ds_inputs and_outputs}(\subref*{fg:linear_ds_inputs and_output}).  For illustrative purposes, it is worth mentioning that the inputs and outputs displayed in Figure~\ref{fg:linear_ds_inputs and_outputs}(\subref*{fg:linear_ds_inputs and_output}) were shifted and normalized.

\subsection{Linear dynamical system results}

A detail view of the normalized SMTR curve, arc length and curvatures of the linear dynamical system are shown in Figure~\ref{fg:linear_ds_inputs and_outputs}(\subref*{fg:SMTR_linear_DS}). It should be noted that the geometric parameters of the curve increase their values when the systems input underwent changes attributable to the presence of a transient regime. Moreover, during the stationary regimes, the parameters tend towards zero following Theorems~\ref{th:lr} and~\ref{th:kur}.

Note that the selection of the time window $t_h$ allows to identify the time intervals were the outputs of the dynamical system are wide-sense cyclo-stationary, for instance, during the time interval between 7.5 and 8.0 ~\si[]{\second}. 



 The comparison between the proposed classifier and the one specified in Table~\ref{tb:h_to_compare} is presented in Figure~\ref{fg:linear_ds_inputs and_outputs}(\subref*{fg:linear_ds_inputs and_output2}). The proposed classifiers $h_{\mc{T}_s I}$ and $h_{\mc{T}_s II}$ show effectiveness in identifying all the changes in the steady regimes of the linear dynamical system. Regarding the classifiers in the literature, the classifiers $h_{mfract}$ and $h_{fhdv}$ showed the best performance for this scenario.

\subsection{Non-linear dynamical system simulation}
\label{subsec:edsv1}

Nonlinear systems are a distinct category of dynamic systems that exhibit a temporary evolution that cannot be described simply by a proportional relationship between their inputs and outputs. Rather, their response is also influenced by their initial conditions and characteristic parameters, resulting in unpredictable and often highly complex behavior. These systems are expressed in state-space form as
\begin{align*}
    \dot{\bs{x}}(t)&=\bs{\alpha}(\bs{x},\bs{u},t)+ \bs{\upsilon}(t),\\
    \bs{y}(t)&=\bs{\gamma}(\bs{x},\bs{u},t) + \bs{\eta}(t),
\end{align*}

\noindent where the functions $\bs{\alpha}(\bs{x},\bs{u},t)$ and $\bs{\gamma}(\bs{x},\bs{u},t)$ represent the nonlinear equations that govern the time evolution of the outputs and states of the system. The non-linear Lorenz attractor was implemented using $a_1=560, a_2=200$, $a_3=53.3$ and $\bs{x}_0=[102,~102,~199]$ to applying the proposed method\footnote{A full description of the Lorenz Attractor is shown in Section~B of the supplementary material, available in \cite{linkannexes}.}. The parameters were selected in order to have a non-chaotically behavior, such that the transient regimes time intervals can be attributed to the external inputs and not to its self excited nature.

The inputs for the validation of the non-linear dynamical systems were defined by the piece wise functions written as
\[   
\resizebox{0.99\linewidth}{!}{$
u_1(t) = 
     \begin{cases}
        \text{$6.0$}&\text{$t\%4.0<2.0$,}\\
        \text{$-30.0$}&\text{$t\%4.0>2.0$,} \\
     \end{cases} \quad u_2(t) = 
     \begin{cases}
        \text{$9.0$}&\text{$t\%5.0<2.5$,}\\
        \text{$21.0$}&\text{$t\%5.0>2.5$.}\\
     \end{cases}
$}
\]
The system was simulated for 10~\si[]{\second} after reaching the initial steady-regime conditions. Figure~\ref{fg:nonlinear_ts_detection2}(\subref*{fg:nonlinear_ts_detection}) illustrates the inputs and outputs of the simulated system, indicating that the systems outputs demonstrate a cyclo-stationary and marginally stable response, even in the presence of constant inputs.




\subsection{Non-linear dynamical system simulation results}
\begin{figure*}[!t]
    \centering
    \begin{subfigure}{0.32\linewidth}
        \includegraphics[width=\linewidth]{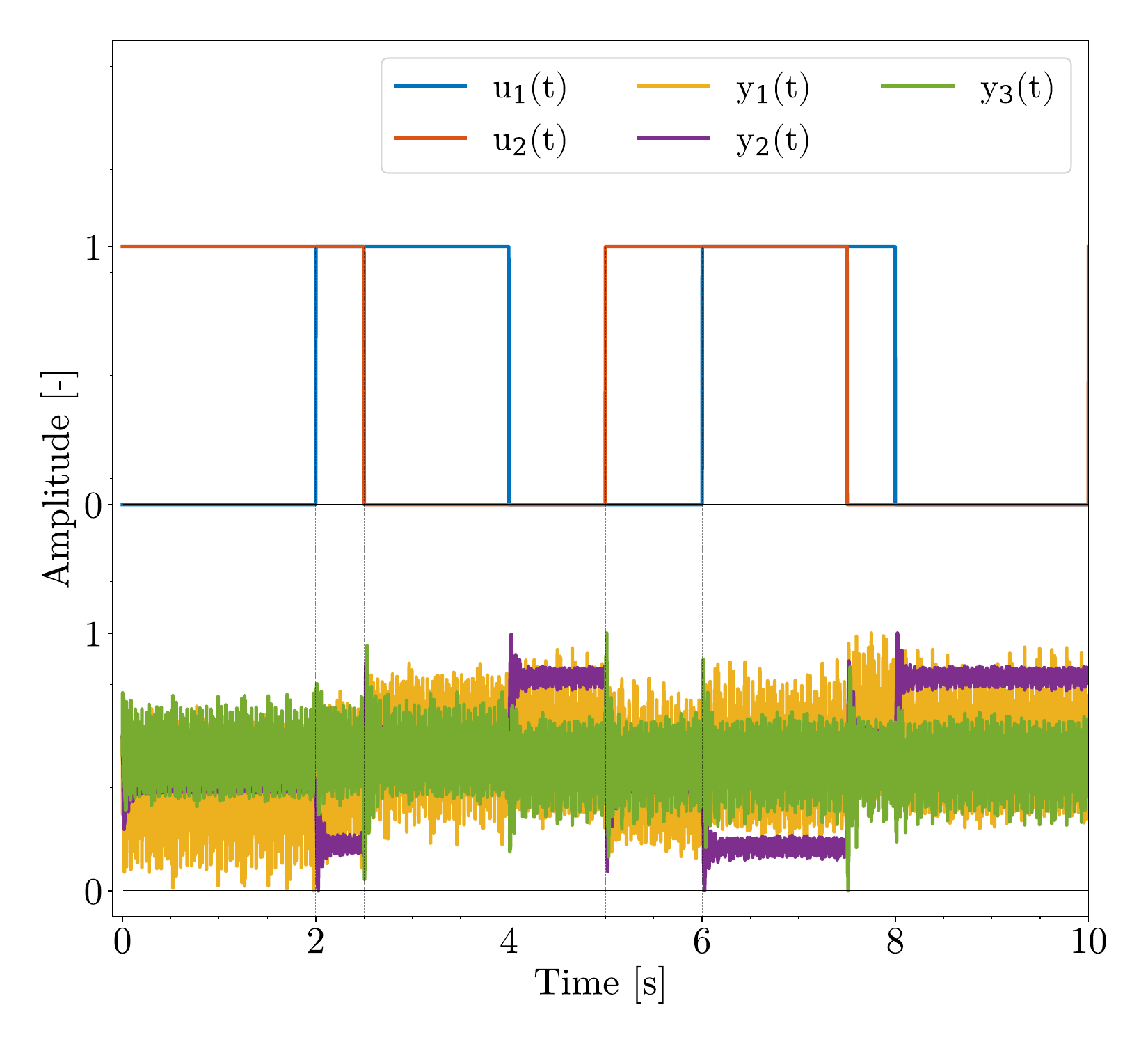}
        \caption{}
        \label{fg:nonlinear_ts_detection}
    \end{subfigure}
    \centering
    \begin{subfigure}{0.32\linewidth}
        \includegraphics[width=\linewidth]{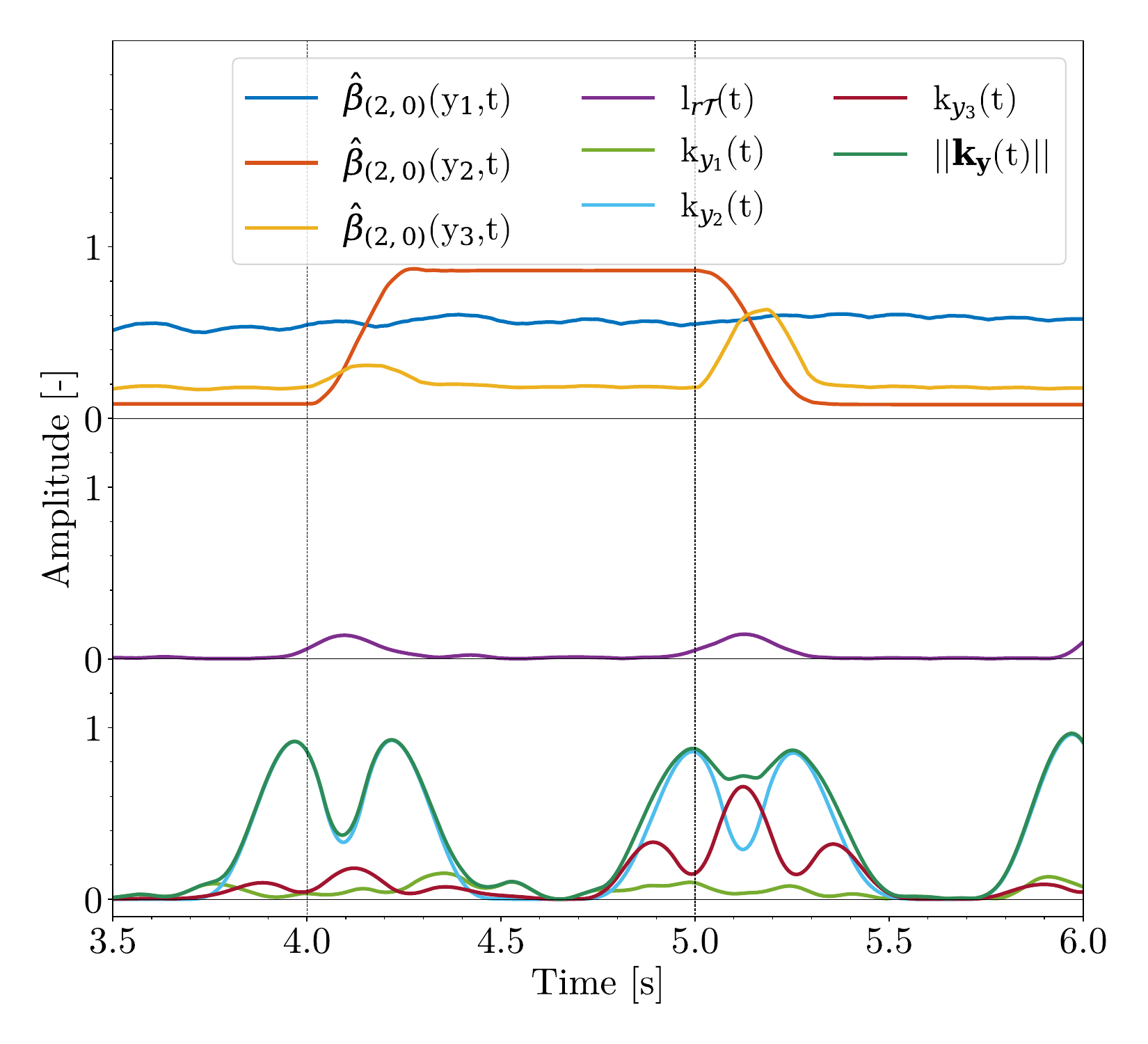}
        \caption{}
        \label{fg:SMTR_nonlinear_DS}
    \end{subfigure}
    \centering
    \begin{subfigure}{0.32\linewidth}
        \includegraphics[width=\linewidth]{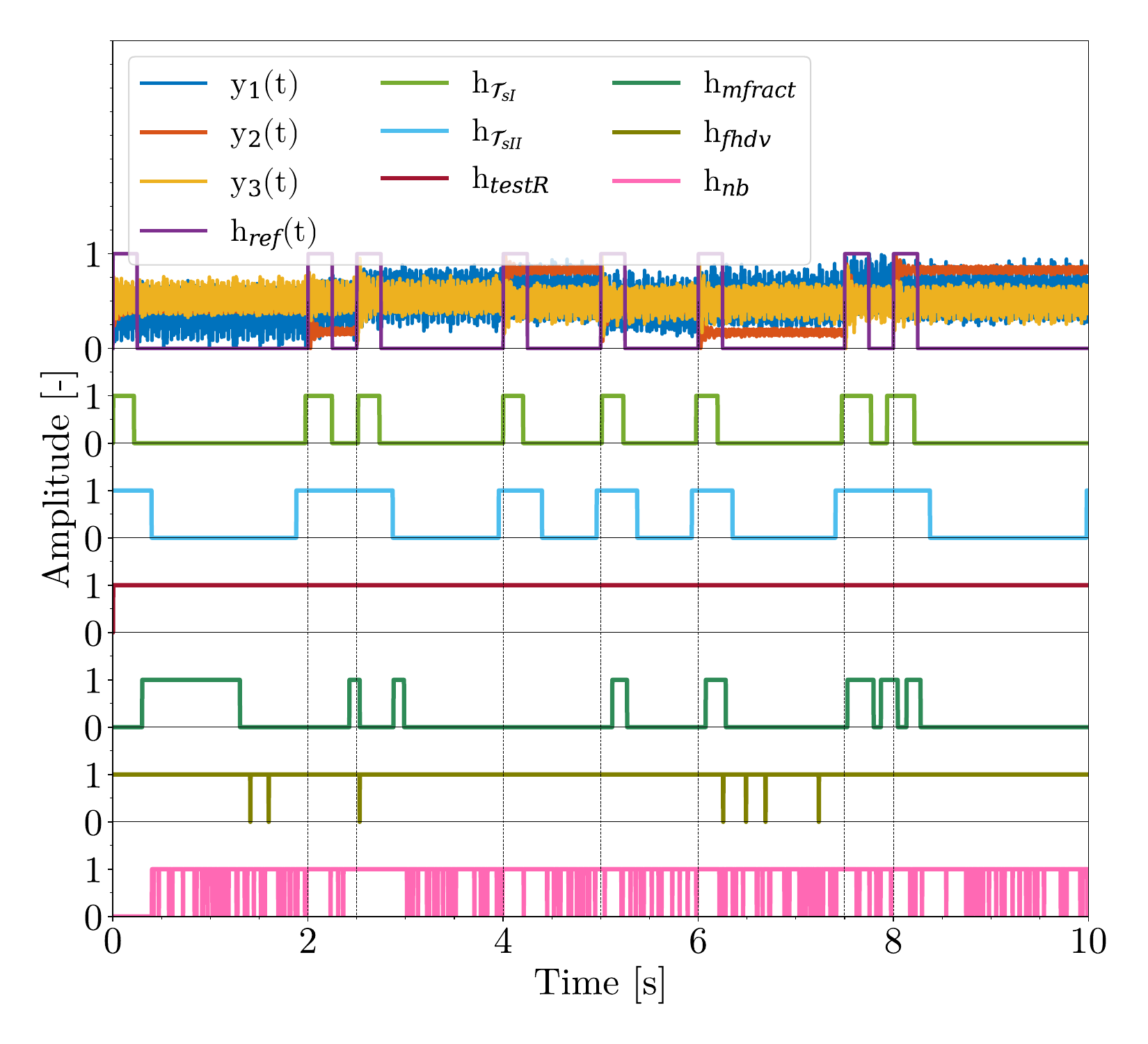}
        \caption{}
        \label{fg:nonlinear_ts_ltcompartison}
    \end{subfigure}

    \caption{Results of the simulated non-linear dynamic system. (a) Input and output signals of the simulated non-linear dynamical system. (b) SMTR curve and its geometrical properties for the Lorenz attractor. (c) Comparison between the non-linear-dynamical system outputs and the  computed classifiers. }
    \label{fg:nonlinear_ts_detection2}
\end{figure*}



Figure~\ref{fg:nonlinear_ts_detection2}(\subref*{fg:SMTR_nonlinear_DS}) provides a detailed view of the geometric properties of the SMTR curve of the Lorenz attractor. Analogous to the linear dynamic system, the findings reveal a direct correlation between input change and an increment in the arc length and curvatures of the SMTR curve. Furthermore, the analysis shows that as the system approaches its steady cyclo-stationary response, these values gradually tends to zero.


The results shown in Figure~\ref{fg:nonlinear_ts_detection2}(\subref*{fg:nonlinear_ts_ltcompartison}) illustrate the outputs of the simulated Lorenz attractor and the evaluated classifiers. In this instance, the function $h_{\mc{T}s I}(t)$ accurately identifies the time intervals of the transient regime in the system. Conversely, $h_{\mc{T}s II}(t)$ is unable to classify all the transient regimes during the simulation. Among the classifiers outlined in the literature, the  $h_{mfract}$ successfully classifies some transient regimes, but misidentifies certain stationary regimes as transient regimes. Meanwhile, the other classifiers are unable to accurately classify any regime.



\subsection{Discontinuous dynamical system simulation}
\label{subsec:edsv2}


Discontinuous dynamic systems are a class of nonlinear dynamical systems that exhibit abrupt changes and discontinuities in their dynamics, which occur at specific temporal or spatial points. A form of linear-discontinuous dynamical systems can be mathematically characterized by the conditions imposed on its states as follows:
\begin{align*}
    \dot{\bs{x}}(t)&=A\bs{x}(t)+B\bs{u}(t) + \bs{\upsilon}(t),\\
    \bs{y}(t)&=C\bs{x}(t)+D\bs{u}(t) + \bs{\eta}(t),
\end{align*}
\noindent where the parameters of the system can change under certain conditions on its states. For instance, consider a dynamical system in which the matrix $A$ undergoes a transition between two distinct modes of operation, triggered by the presence of certain conditions denoted as $\Phi_1(\boldsymbol{x})$ and $\Phi_2(\boldsymbol{x})$ in the states of the system. Accordingly, the matrix of the system can be expressed as a function that takes into account the prevailing conditions as
\[   
 A = 
     \begin{cases}
       \text{$A_1$} & \text{if $\Phi_1(\bs{x})$}\\
       \text{$A_2$} &\text{if $\Phi_2(\bs{x})$}\\
     \end{cases}.
\]
The double discontinuous oscillator is an example of a discontinuous dynamical system with the aforementioned description. This system comprises a double oscillator obstructed by an obstacle that restricts its movement\footnote{A full description of the studied double discontinuous oscillator is depicted in Section~C of the supplementary material, available in \cite{linkannexes}.}. For the simulated double discontinuous oscillator dynamical system, the inputs were defined as follows:
\[   
 u_2(t) = 0.0, \quad
 u_1(t) = 
     \begin{cases}
       \text{$t\%20.0<10.0$ } & \text{$200.0$,}\\
       \text{$t\%20.0>10.0$} &\text{$-100.0$.}\\
     \end{cases}
\]
After the initial steady state of the system, a simulation was conducted for a duration of 50~\si[]{\second}, during which the proposed methodology was applied to compute the regime classifiers. Figure~\ref{fg:disc_ts_detection2}(\subref*{fg:disc_ts_detection}) displays the inputs and outputs of the simulated system, which exhibit a marginal cyclo-stationary response to the constant excitations imposed, owing to the absence of damping parameters in the simulated oscillators.

\subsection{Discontinuous dynamical system results}



The SMTR curve of the discontinuous dynamical system and its geometrical properties are shown in Figure~\ref{fg:disc_ts_detection2}(\subref*{fg:SMTR_disc_DS}). The arc length demonstrates a pattern consistent with the behavior observed in other dynamic systems, wherein transitions and stationary regimes are present. However, in contrast, the curvatures remain non-zero during the stationary regime between 10 and 20~\si[]{\second}, as shown in Figure~\ref{fg:disc_ts_detection2}(\subref*{fg:disc_ts_detection}).

\begin{figure*}[!t]
    \centering
    \begin{subfigure}{0.32\linewidth}
        \includegraphics[width=\linewidth]{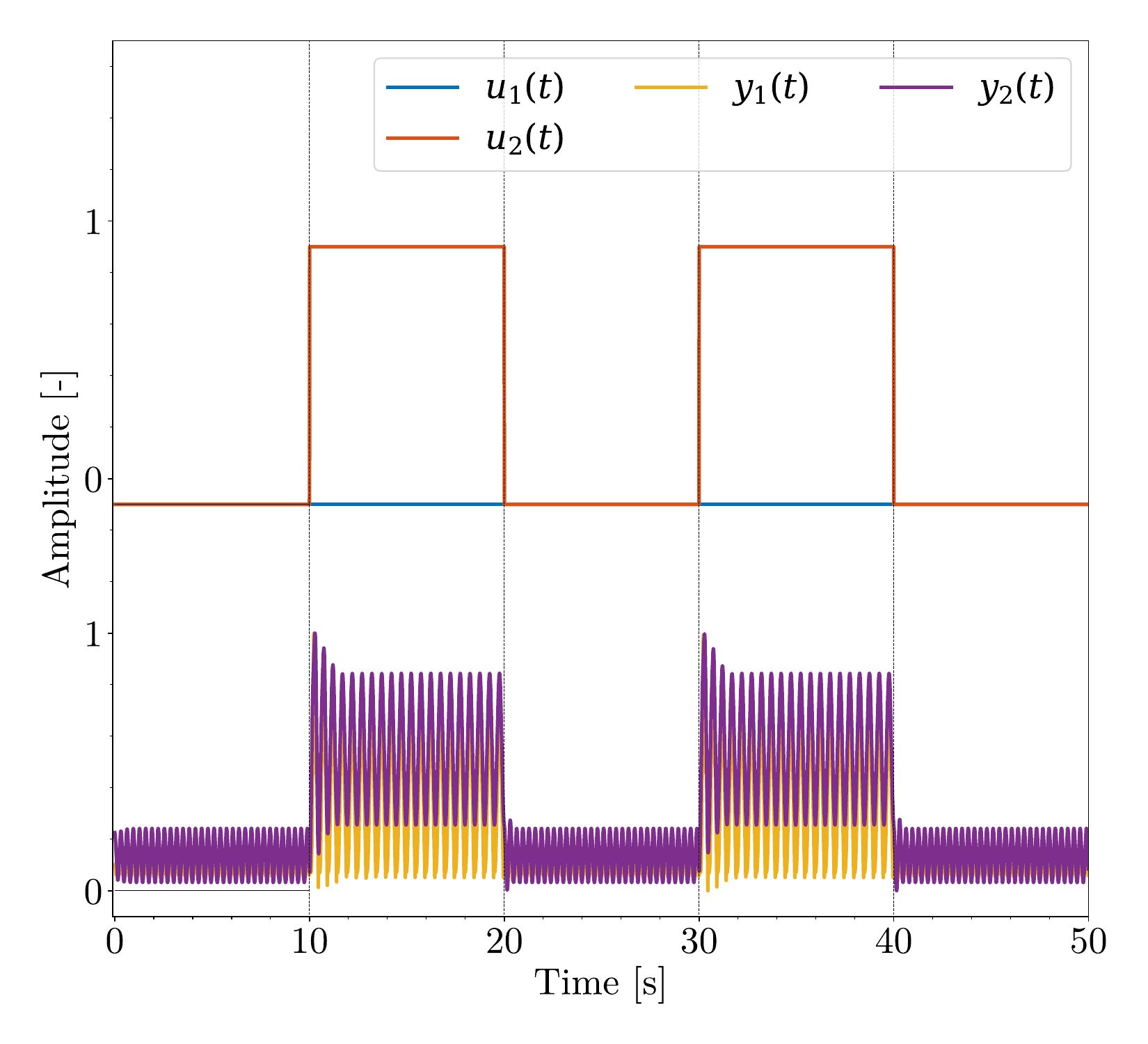}
        \caption{}
        \label{fg:disc_ts_detection}
    \end{subfigure}
    \centering
    \begin{subfigure}{0.32\linewidth}
        \includegraphics[width=\linewidth]{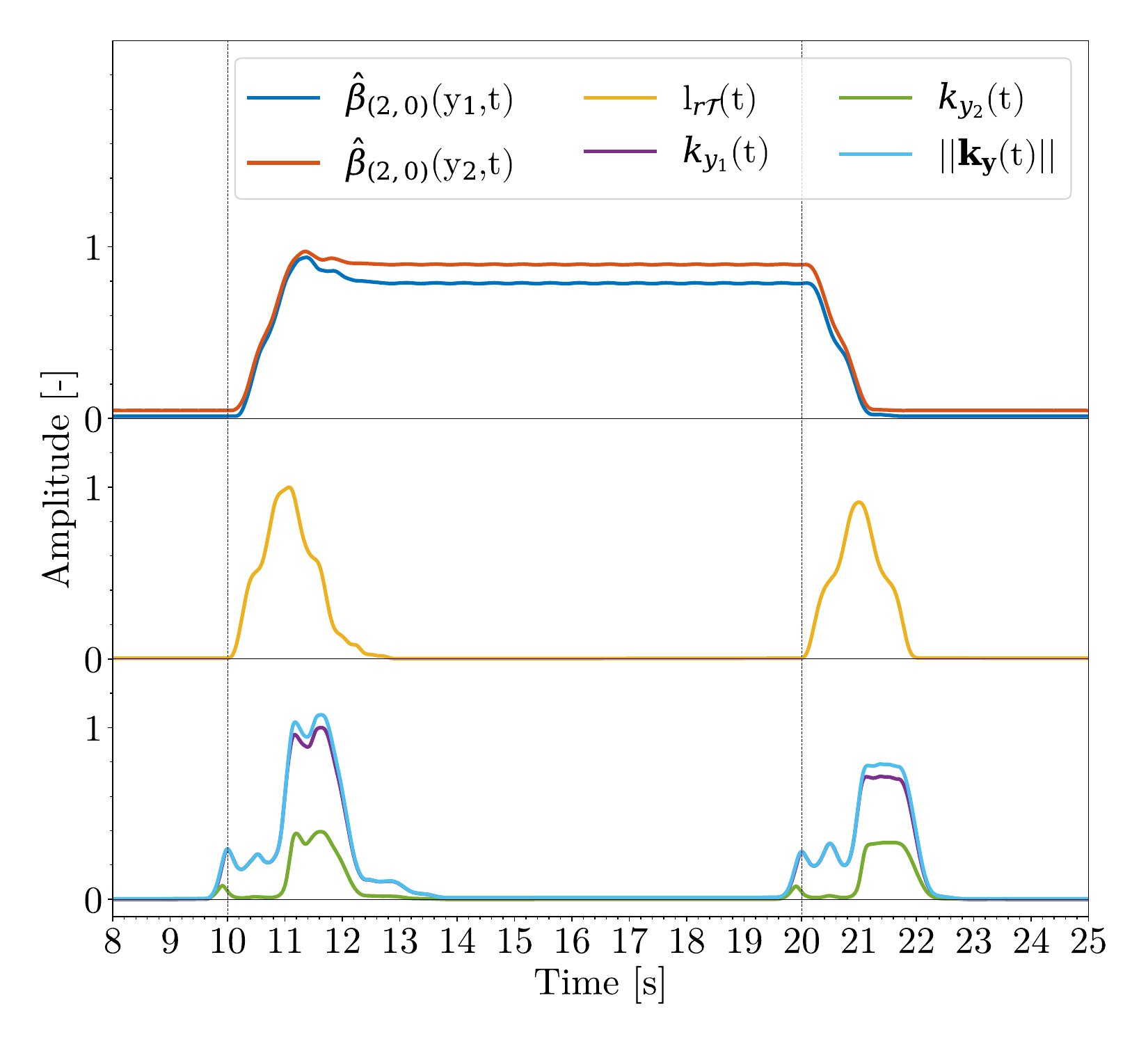}
        \caption{}
        \label{fg:SMTR_disc_DS}
    \end{subfigure}
    \centering
    \begin{subfigure}{0.32\linewidth}
        \includegraphics[width=\linewidth]{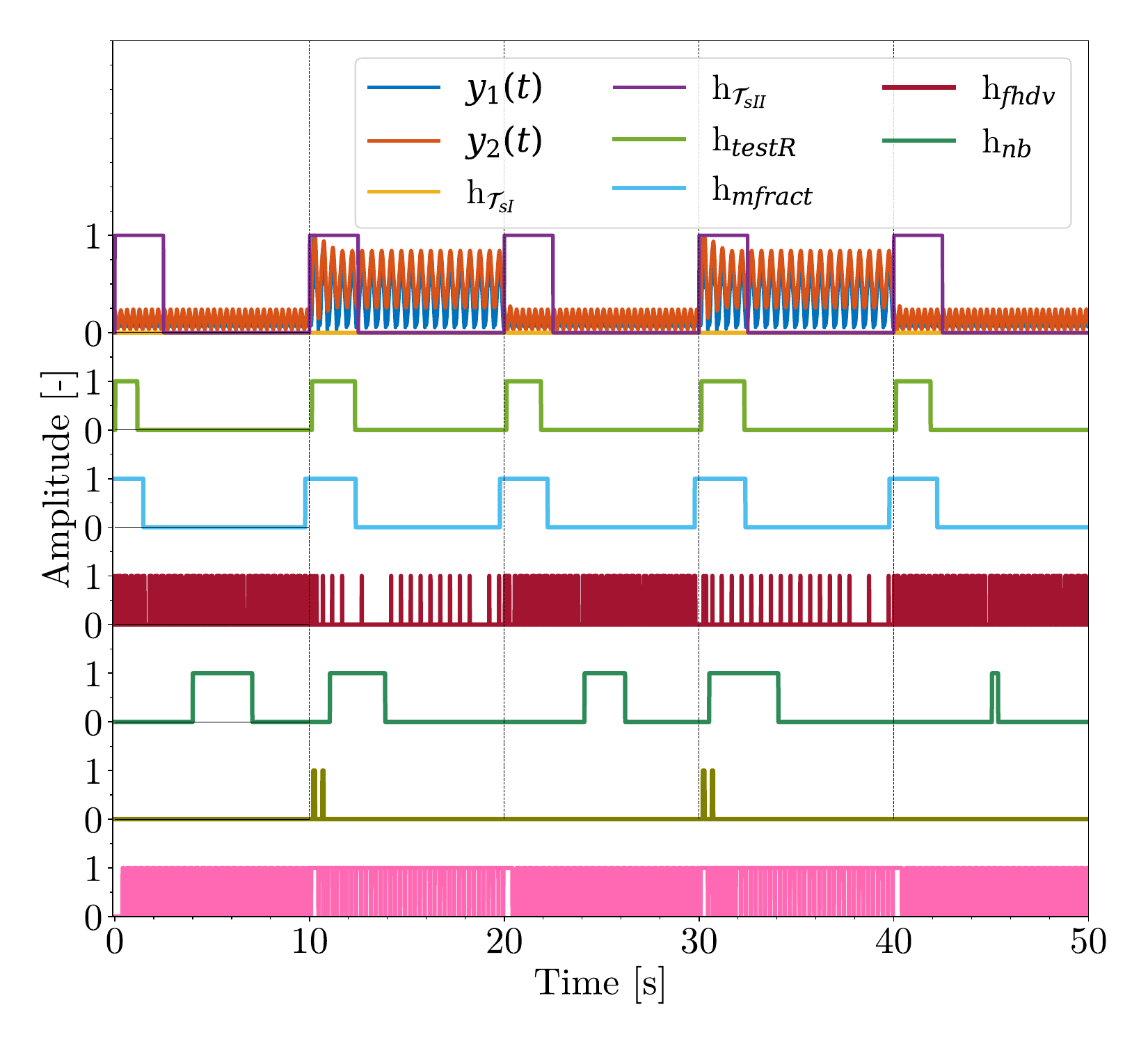}
        \caption{}
        \label{fg:disc_ts_ltcompartison}
    \end{subfigure}

    \caption{Results of the simulated discontinuous dynamic system. (a) Input signals of the simulated discontinuous dynamical system. (b) SMTR curve of the studied discontinuous dynamical system and its geometrical properties. (c) Comparison between the discontinuous dynamical system outputs and the classifiers computed.}
    \label{fg:disc_ts_detection2}
\end{figure*}

The validation assessment of the proposed classifiers as well as the classifiers from the literature for the discontinuous dynamical system are illustrated in Figure~\ref{fg:disc_ts_detection2}(\subref*{fg:disc_ts_ltcompartison}). For the discontinuous dynamical system, $h_{\mc{T}_s I}(t)$ and  $h_{\mc{T}_s II}(t)$ manage to classify all transient regimes time intervals. Among the other classifiers from the literature, $h{mfract}$ successfully classifies all transient regimes, whereas the others fail to classify any regime.

\color{black}
Finally, a reference classifier was computed based on the theoretical dynamic time responses of the considered dynamic systems. Thus, the analyzed classifiers were quantitatively compared with the reference classifier by means of the probability of error of Type I and II\footnote{The procedures for computing the reference classifier $h_{ref}(t)$ and computing probability of Type I and Type II errors  have been detailed in Section~K of the supplementary material, available in \cite{linkannexes}.}. Furthermore, the computational times and memory resources of each classifier were also computed. The probabilities of wrong classification and the computational resources used by each studied classifier are shown in Tables~\ref{tab:simple_tablec} and \ref{tab:resources}, respectively.

\begin{table}[!t]
    \color{black}
    \centering
    \caption{\black{Probability of Type I and type II errors for the tested transient regime classifiers based on the reference classifier $h_{ref}(t)$.}}
    \scalebox{0.78}{
    \begin{tabular}{ l l c c c c c c}
        \hline
        \textbf{Probability} &\textbf{Dynamic system}&$h_{\mathcal{T}_{sI}}$&$h_{\mathcal{T}_{sII}}$&$h_{testR}$&$h_{mfrac}$&$h_{fhdv}$&$h_{nb}$\\
        \hline
        \multirow{3}{0.06\textwidth}{Type I error [\%]}&Linear&\textbf{9.6} &    9.8 &   39.6 &    0.1 &    5.2 &    0.0\\
        &Non-linear&\textbf{2.1} &   20.9 &   95.1 &   17.6 &   99.9 &   91.1\\
        &Discontinuous&0.0 &   \textbf{2.4} &    0.2 &   22.4 &    0.0 &   74.1\\
        \hline
        \multirow{3}{0.06\textwidth}{Type II error [\%]}&Linear&\textbf{7.1} &    0.0 &   21.5 &   79.7 &   74.7 &   91.3\\
        &Non-linear&\textbf{11.4} &    0.0 &    0.0 &   58.9 &    0.2 &    1.5\\
        &Discontinuous&27.1 &   \textbf{14.7}  &   99.4 &   72.5 &   97.4 &   27.8\\
        \hline
    \end{tabular}
    }
    \label{tab:simple_tablec}
\end{table}

\begin{table}[!t]
    \color{black}
    \centering
    \caption{\black{Computational time and memory usage required for the computation of the tested classifiers.}}
    \scalebox{0.72}{
    \begin{tabular}{ l l c c c c c c}
        \hline
        \textbf{Resource}&\textbf{Dynamic system}&$h_{\mathcal{T}_{sI}}$&$h_{\mathcal{T}_{sII}}$&$h_{testR}$&$h_{mfrac}$&$h_{fhdv}$&$h_{nb}$\\
        \hline
        \multirow{3}{0.08\textwidth}{Comp. Time [\si[]{ m\second}]}&Linear&420.00 &  157.00 &  119.00 &  9310.00 & \textbf{12.00} &   15.00\\
        &Non-linear&171.00 &   83.00 &   57.00 &  2575.00 &    \textbf{1.00} &    8.00\\
        &Discontinuous&2347.00 &  1263.00 &  277.00 &  43385.00 &    \textbf{2.00} &   31.00\\
        \hline
        \multirow{3}{0.08\textwidth}{Mem. Usage [\si{MB}]}&Linear&1.03 &    1.76 &    1.68 &    2.16 &    \textbf{0.93} &    1.05\\
        &Non-linear&0.55 &    1.04 &    0.84 &    1.07 &    \textbf{0.4}7 &    0.52\\
        &Discontinuous&1.87 &    3.60 &    2.45 &    3.15 &    \textbf{1.70} &    2.02\\
        \hline
    \end{tabular}
    }
    \label{tab:resources}
\end{table}

The analysis of Table~\ref{tab:resources} reveals that while the classifier $h_{fhdv}$ excels in terms of computational resource utilization, it lags behind other classifiers in transient regime identification for the analyzed dynamic systems. Conversely, the proposed classifier $h_{\mathcal{T}_{sII}}$ demonstrates a faster computational time compared to $h_{\mathcal{T}_{sI}}$ at the expense of increased memory usage.

Furthermore, the findings from Table~\ref{tab:resources} indicate that the proposed classifiers $h_{\mathcal{T}_{sI}}$ and $h_{\mathcal{T}_{sII}}$ exhibit greater resilience to errors in regimen classification compared to $h_{fhdv}$, which is more prone to erroneous classifications under the specified test conditions. Based on the evaluation assessment, the classifier leveraging the arc length of the SMTR curve, denoted as $h_{\mc{T}_{s I}}$, consistently outperformed other classifiers across all studied dynamical systems. 
\color{black}
\section{Concluding Remarks}
\label{sec:Concluding_Remarks}
This article proposed a novel methodology for transient regime classification in dynamical systems, considering the challenges of transient and stationary regime classification algorithms in recent years. The proposed methodology aimed to classify regimes for systems that may contain cyclo-stationary responses, noisy signals, and multiple outputs. Therefore, two transient-regime classifiers were proposed using tools from signal processing, dynamical systems, stochastic processes, and geometry of spatial curves. The proposed classifiers were validated with simulated linear, non-linear, and discontinuous dynamical systems, and they were compared quantitatively with other existing transient regime classifiers in the literature. The results showed that the transient-regime classifier based on the arc-length of the proposed sample moment time representation curve outperformed other classifiers in all scenarios, while the classification of transient regimes based on curvature showed an appropriate performance for the linear and discontinuous dynamical systems.
The proposed methodology has the potential to enhance several applications in diverse fields, such as structural health monitoring, fault detection and diagnosis, and control design for dynamical systems \footnote{\black{An example of an application of this method regarding control design can be found in Section~J of the supplementary material, available in \cite{linkannexes}.}}. It is worth mentioning that the proposed methodology established solely the spatial representation of dynamical systems, utilizing the second sampling moment to ensure weakly stationary conditions. An experimental and theoretical study of the effect of the order of the sampling moment on the transient regime classification will be carried out in future works of this research.
\section{Acknowledgment}
This work has been supported by the Spanish project SEAMLESS: Sustainable learning-based Management of Multi-resource Large-scale Systems (ref. PID2023-148840OB-I00), funded by MCIN/AEI/10.13039/501100011033/FEDER, UE. Also, it was partially supported by the INDUSTR-IA project and the Spanish Ministry of Science and Innovation with the PID2023-148840OB-I00 (SEAMLESS).

\bibliographystyle{unsrt}
\bibliography{./bibliography/transient_Detection.bib}
\vskip 0pt plus -1fil
\begin{IEEEbiography}[{\includegraphics[width=1in,height=1.25in,clip,keepaspectratio]{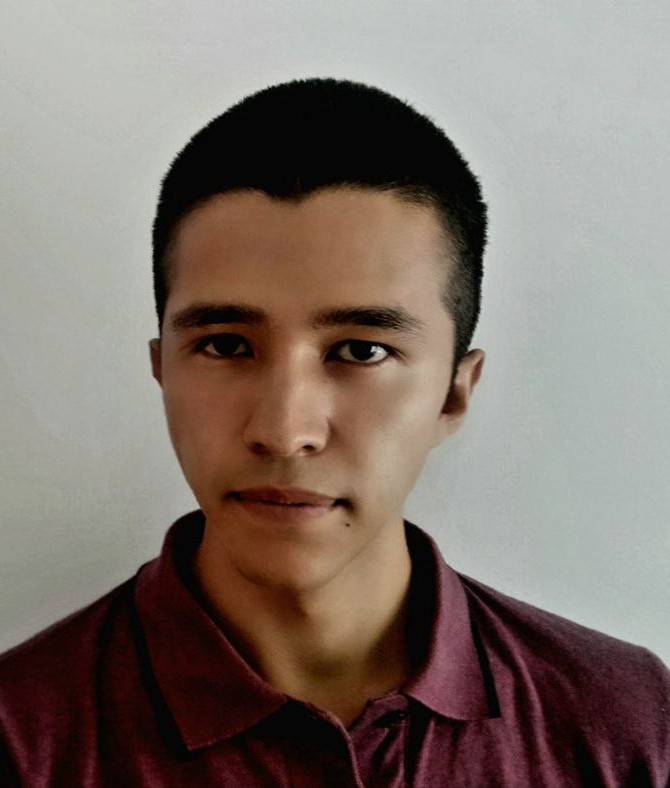}}]{Cristian Puerto-Santana}
	received his bachellor's degree in Mechanical and Electrical Engineer from Universidad de los Andes, Bogot\'{a}, Colombia, in 2016 and 2017, respectively. He obtained his master's degree in  Automation, Electronics and Industrial Control from Universidad  de Deusto, Bilbao, Spain, in 2019. He is currently a doctoral student at Universitat Polit\`{e}cnica de Catalunya, Barcelona, Spain.
\end{IEEEbiography}	
\vskip 0pt plus -1fil
\begin{IEEEbiography}[{\includegraphics[width=1in,height=1.25in,clip,keepaspectratio]{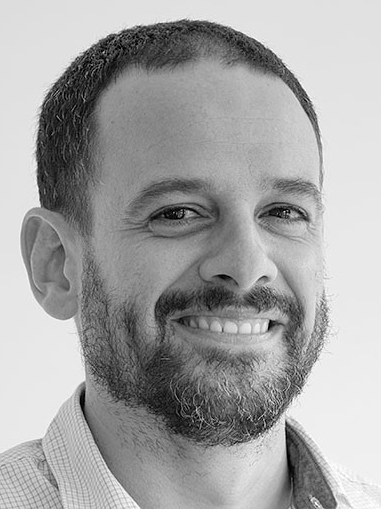}}]{Javier Diaz-Rozo}
Javier Diaz-Rozo holds an M. Eng. in Mechanical Engineering from the University of Los Andes, Colombia (2001) and an M.Sc. in Advanced Manufacturing Technologies and Productive Systems from the University of Manchester, UK (2003). In 2019, he earned a PhD in Artificial Intelligence from Universidad Polit\'ecnica de Madrid, Spain. Prior to becoming the Chief Technology Officer at Aingura IIoT, leading R\&D and industrial data analytics product development, he accrued nearly 20 years of industrial experience. This includes roles such as Project Manager at Ikergune, Senior Consultant in an R\&D consulting firm, R\&D Director in a wind energy business group, and Director for the Advanced Manufacturing Area at ASCAMM Technology Centre.
\end{IEEEbiography}	
	\vskip 0pt plus -1fil
\begin{IEEEbiography}[{\includegraphics[width=1in,height=1.25in,clip,keepaspectratio]{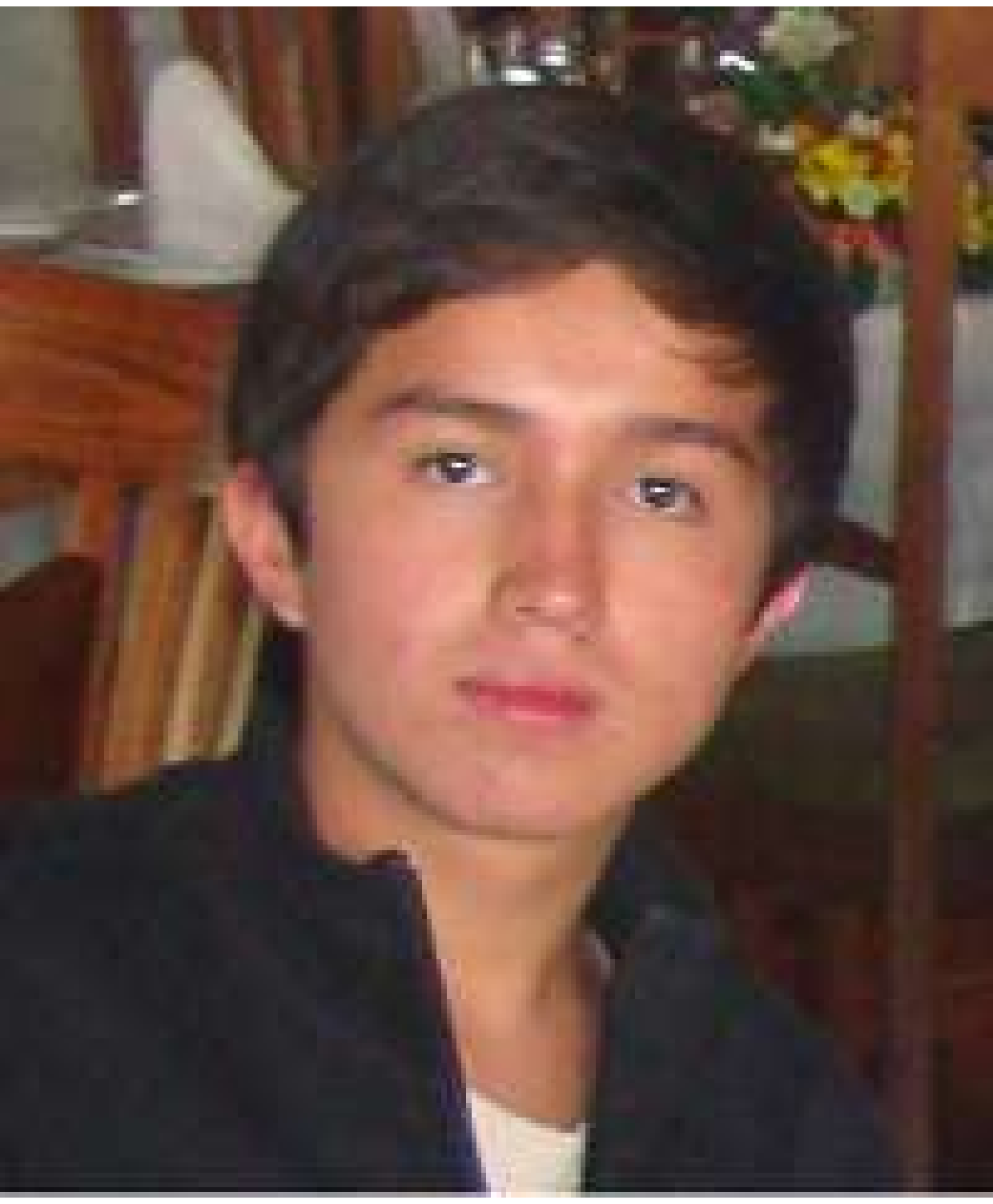}}]{Carlos Puerto-Santana}
received his bachellor's degree in Mathematics from Universidad de los Andes, Bogot\'{a}, Colombia, in 2016 and his master's degree in artificial intelligence from Universidad Polit\'{e}cnica de Madrid, Madrid, Spain, in 2018. He obtained his PhD. degree in 2023 at Universidad Polit\'{e}cnica de Madrid, Madrid, Spain.
\end{IEEEbiography}	
	\vskip 0pt plus -1fil
\begin{IEEEbiography}[{\includegraphics[width=1in,height=1.25in,clip,keepaspectratio]{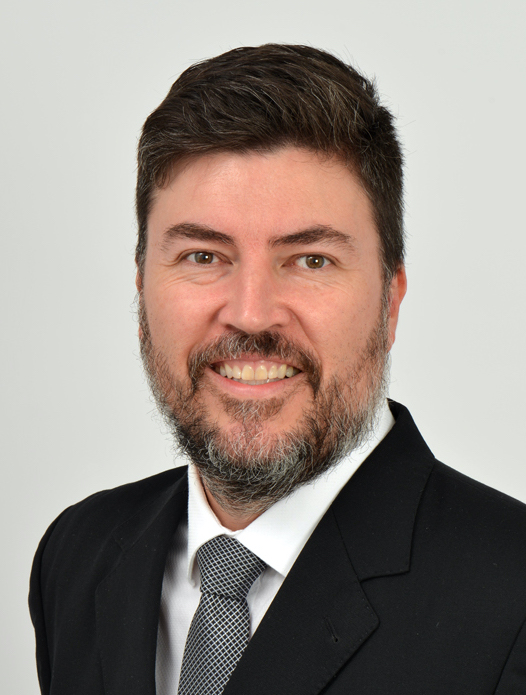}}]{Carlos Ocampo-Martinez}
Carlos Ocampo-Martinez received his Ph.D. degree in Control Engineering from the Universitat Polit\`ecnica de Catalunya - BarcelonaTECH (UPC), Spain. From 2007 to 2010, he held postdoctoral positions at the University of Newcastle (Australia) and at the Institut de Rob\`otica i Inform\`atica Industrial, CSIC-UPC (IRI). Since 2011, he is with UPC, Automatic Control Department (ESAII) as an Associate Professor. His main research interests include constrained model predictive control, large-scale systems management (partitioning and non-centralized control), process control and industrial applications (mainly related to the key scopes of water and energy, and smart manufacturing under the IoT framework).
\end{IEEEbiography}	
\vfill

\newpage
\section*{Appendices}
\setcounter{figure}{0}
\setcounter{table}{0}
\setcounter{equation}{0}
\renewcommand{\thetable}{A.\arabic{table}}
\renewcommand{\thefigure}{A.\arabic{figure}}
\renewcommand{\theequation}{A.\arabic{equation}}

\appendices
\subsection*{A. Linear dynamical system model}
\label{an:a}
A linear continuous-time dynamical system with $l_a \in \mb{N}$ states, $l_u \in \mb{N}$ inputs and $l_y \in \mb{N}$ measured outputs can be written in state-space form as
\begin{align*}
    \dot{\bs{x}}(t)&=A\bs{x}(t)+B\bs{u}(t) + \bs{\upsilon}(t),\\
    \bs{y}(t)&=C\bs{x}(t)+D\bs{u}(t) + \bs{\eta}(t),
\end{align*}
where $A \in \mb{R}^{l_a \times l_a}$ represents the dynamic matrix of the autonomous response of the dynamical system. Whereas $B \in \mb{R}^{l_a \times l_u}$ is the matrix that illustrate how the inputs affects the states of the system, and $C \in \mb{R}^{l_y \times l_a}$ is the matrix that maps the states in the measured outputs of the system. $D \in {R}^{l_y \times l_u}$ is the feedforward matrix which is usually a zero-matrix when the inputs do not affect the output measurements of the system. 

The vector field of the state-space and the outputs measurements are accompanied by unmeasurable zero-mean stationary Gaussian noise $\bs{\upsilon}(t)\sim \mc{N}(0,\Sigma_{\eta}),~\bs{\upsilon}(t) \in \mb{R}^{l_a}$ and $\bs{\eta}(t)\sim \mc{N}(0,\Sigma_{\upsilon}),~\bs{\eta}(t) \in \mb{R}^{l_y}$, respectively.   

Consider a dynamical system written in state-space form as  

\color{black}
\begin{align*}
    \begin{bmatrix}\dot{x}_1(t)\\\dot{x}_2(t)\\\dot{x}_3(t)
\end{bmatrix}&= \begin{bmatrix}  30.0& -102.0&  -60.0\\
                                90.0&  -72.0&  -30.0\\
                                30.0&  -90.0&  -73.2\end{bmatrix}\begin{bmatrix}x_1(t)\\x_2(t)\\x_3(t)\end{bmatrix}\\
&+
\begin{bmatrix}
    1.0& 2.0\\
    0.0&  -1.0\\
    0.0&  1.0
\end{bmatrix}
\begin{bmatrix}u_1(t)\\u_2(t)\end{bmatrix}+\bs{\upsilon}(t),\\
    \begin{bmatrix}y_1(t)\\y_2(t)\\y_3(t)
\end{bmatrix}&= \begin{bmatrix}  0.5& 0.1&  0.6\\
                                1.0& 2.0& 1.0\\
                                0.0& 1.0&-2.0\end{bmatrix}\begin{bmatrix}x_1(t)\\x_2(t)\\x_3(t)\end{bmatrix} + \bs{\eta}(t),\\
                            \end{align*}
\noindent with,

\resizebox{.98\linewidth}{!}{
  \begin{minipage}{\linewidth}
\begin{align*}
    \Sigma_{\eta}= 10^{-4}\begin{bmatrix}  0.13& 0.0&  0.0\\ 0.0& 0.16& 0.0\\0.0& 0.0&0.15\end{bmatrix}, \Sigma_{\upsilon}=10^{-4}\begin{bmatrix}  0.10& 0.0&  0.0\\0.0& 0.33& 0.0\\0.0& 0.0&0.45\end{bmatrix}.
\end{align*}
\end{minipage}
}

The Signal-to-noise ratio of the states and the outputs under the simulated noise is presented in Table~\ref{tab:example2}.
\begin{table}[H]
    \color{black}
    \caption{\black{Values of SNR of $\boldsymbol{x}(t)$ and $\boldsymbol{y}(t)$ for the simulated system.}}
    \centering
    \begin{tabular}{cc}
        \hline
         SNR($\boldsymbol{x}(t)$) [dB] & SNR($\boldsymbol{y}(t)$) [dB] \\
        \hline
         (68, 76, 73) & (76, 75, 70) \\
        \hline
    \end{tabular}
    \label{tab:example2}
\end{table}
\color{black}
The mapping between the states and the outputs assures complete observability of the systems states. Furthermore, for linear systems the observability of the states is equal to the rank of the observability matrix written as 
\begin{equation*}
    \mc{O} = [C~CA~CA^2~\cdots~CA^{l_a-1}]^\intercal , C \in \mb{R}^{l_y \times l_a}, A \in \mb{R}^{l_a\times l_a}.
\end{equation*}
For the analyzed state-space model, rank($\mc{O}$)=3. Then, all the states of the systems can be inferred from the measured outputs. On the other hand, the studied dynamical system has a stable and appropriate dynamic response. The natural frequencies $\bs{f}_n$, damping factors $\bs{\xi}_n$ and time constants $\bs{\tau}_n$ of the system are shown in Table~\ref{tb:linear_param}. The autonomous attractor of the systems is shown in Figure~\ref{fg:linear_attractor}.
\begin{table}[H]
    \caption{Eigenvalue parameters of the linear dynamical system.}
    \centering
    \begin{tabular}{c c} 
     \hline
     Parameter & Value \\ [0.5ex] 
     \hline
      $\bs{f}_n$~[Hz]& [13.57] \\ 
      $\bs{\xi}_n$~[\%] & [55.70] \\
      $\bs{\tau}_n$~[s] & [0.04] \\
     \hline
    \end{tabular}
    \label{tb:linear_param}
    \end{table}

\begin{figure}[H]
	\centering
	\includegraphics[width=0.85\linewidth]{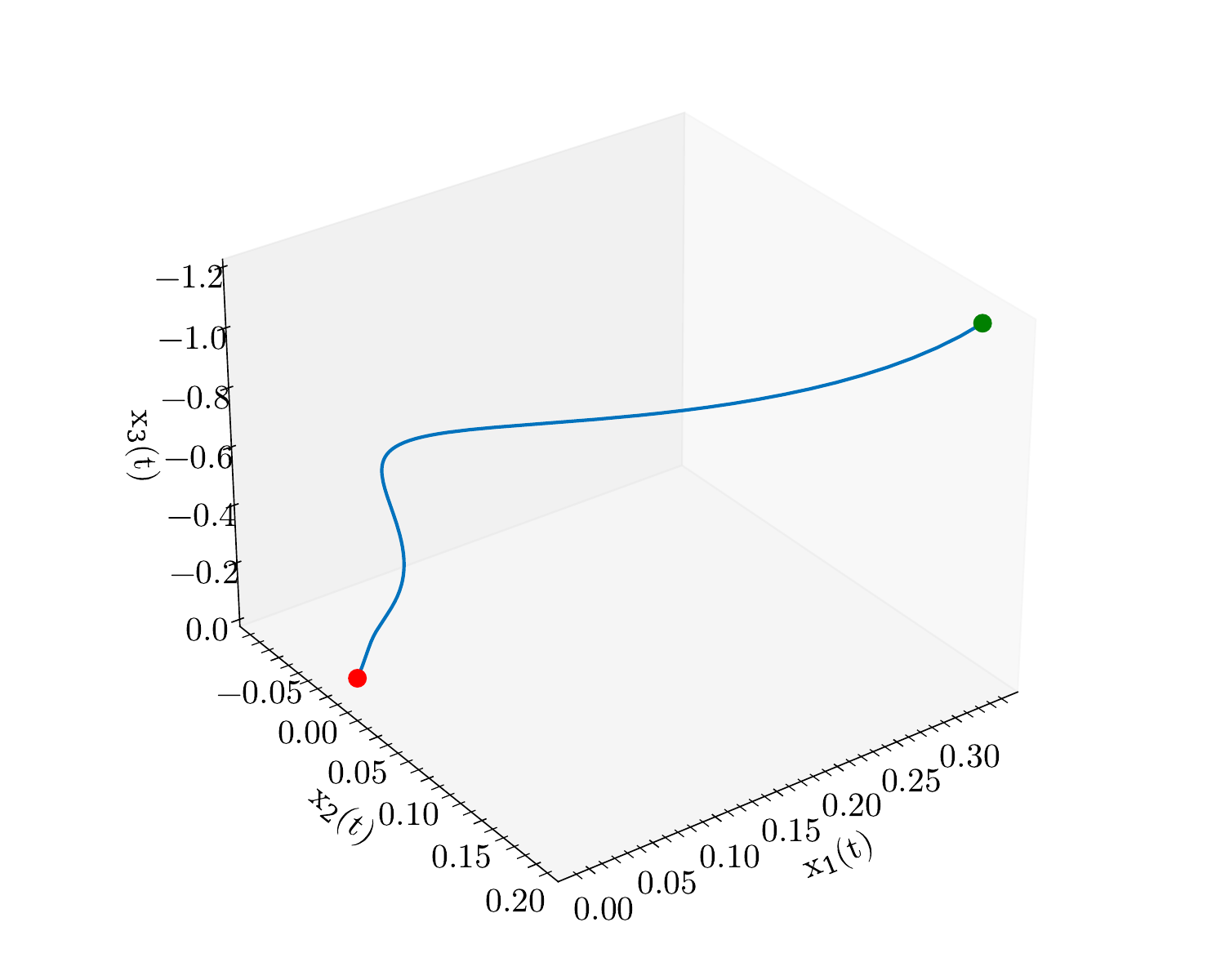}
	\caption{Free response attractor of the studied dynamical system. The green and red dot represent the initial and final state of the simulated dynamical system, respectively.}
	\label{fg:linear_attractor}
\end{figure}
In terms of implementation, the models and simulations were implemented using  Python3.10 with the help of the \texttt{scipy} and \texttt{numpy} standard libraries. The \texttt{scipy} library was used to simulate dynamical systems in their state space representation with the functions \texttt{lsim} for linear systems and \texttt{solveivp} for dynamical systems in general. The noise generation was done with \texttt{numpy} through the \texttt{random.normal} function that allows generating random samples from the mean and variance of a Gaussian distribution.

\setcounter{figure}{0}
\setcounter{table}{0}
\setcounter{equation}{0}
\renewcommand{\thetable}{B.\arabic{table}}
\renewcommand{\thefigure}{B.\arabic{figure}}
\renewcommand{\theequation}{B.\arabic{equation}}
\subsection*{B. Non-linear dynamical system model}
\label{an:b}
A general non-linear and forced dynamical system can be written in state-space form as
\begin{align*}
    \dot{\bs{x}}(t)&=\bs{\alpha}(\bs{x},\bs{u},t)+ \bs{\upsilon}(t),\\
    \bs{y}(t)&=\bs{\gamma}(\bs{x},\bs{u},t) + \bs{\eta}(t),
\end{align*}
where $\bs{\alpha}: \mb{R}^{l_a} \rightarrow \mb{R}^{l_a}$ is the vector field that derives the state transitions function of the dynamical system. Whereas $\bs{\gamma} : \mb{R}^{l_a} \rightarrow \mb{R}^{l_y}$ is the vector field that maps the states of the system into the measured outputs. $\bs{\upsilon}(t)$ and $\bs{\eta}(t)$ are unmeasurable zero-mean Gaussian noise as explained in the linear dynamical system of Section~A. The Lorenz attractor is a dynamical system an example of a non-linear dynamical system, the Lorenz attractor  written as

\color{black}
\begin{align*}
    \begin{bmatrix}\dot{x}_1(t)\\\dot{x}_2(t)\\\dot{x}_3(t)
\end{bmatrix}&= \begin{bmatrix} a_1(x_2(t)-x_1(t)) +u_1(t)\\
    x_1(t)(a_2-x_3(t))-x_2(t)+u_2(t)\\
    x_1(t)x_2(t)-a_3 x_3(t)
\end{bmatrix}+ \bs{\upsilon}(t),\\
\begin{bmatrix}y_1(t)\\y_2(t)\\y_3(t)
\end{bmatrix}&= \begin{bmatrix}  1.0& 0.0&0.0\\
                                0.0& 1.0&0.0\\
                                0.0& 0.0&1.0\\\end{bmatrix}
                                \begin{bmatrix}x_1(t)\\x_2(t)\\x_3(t)\end{bmatrix}  + \bs{\eta}(t),\\
\Sigma_{\eta}&=\Sigma_{\upsilon}= 10^{-4}\begin{bmatrix}  0.95& 0.0&  0.0\\
                                    0.0& 0.95& 0.0\\
                                    0.0& 0.0&1.11\end{bmatrix}.
\end{align*}

This system has the characteristic of having a chaotic oscillating free response for certain parameters of $a_1, a_2, a_3$ and initial conditions $\bs{x}_0$. This system was selected to evaluate the proposed method in scenarios where the analyzed non-linear marginal stable dynamical systems. Note that each measured  output is mapped with each of the dynamical systems states through the identity matrix, therefore, the system is completely observable. The Signal-to-noise ratio of the states and the outputs under the simulated noise is presented in Table~\ref{tab:example22}.
\begin{table}[H]
    \color{black}
    \caption{\black{Values of SNR of $\boldsymbol{x}(t)$ and $\boldsymbol{y}(t)$ for the simulated non-linear system.}}
    \centering
    \begin{tabular}{cc}
        \hline
         SNR($\boldsymbol{x}(t)$) [dB] & SNR($\boldsymbol{y}(t)$) [dB] \\
        \hline
         (59, 52, 48) & (59, 52, 48) \\
        \hline
    \end{tabular}
    \label{tab:example22}
\end{table}
\color{black}

 The autonomous attractor of the systems is shown in Figure \ref{fg:nolinear_attractor}.

\begin{figure}[H]
	\centering
	\includegraphics[width=0.85\linewidth]{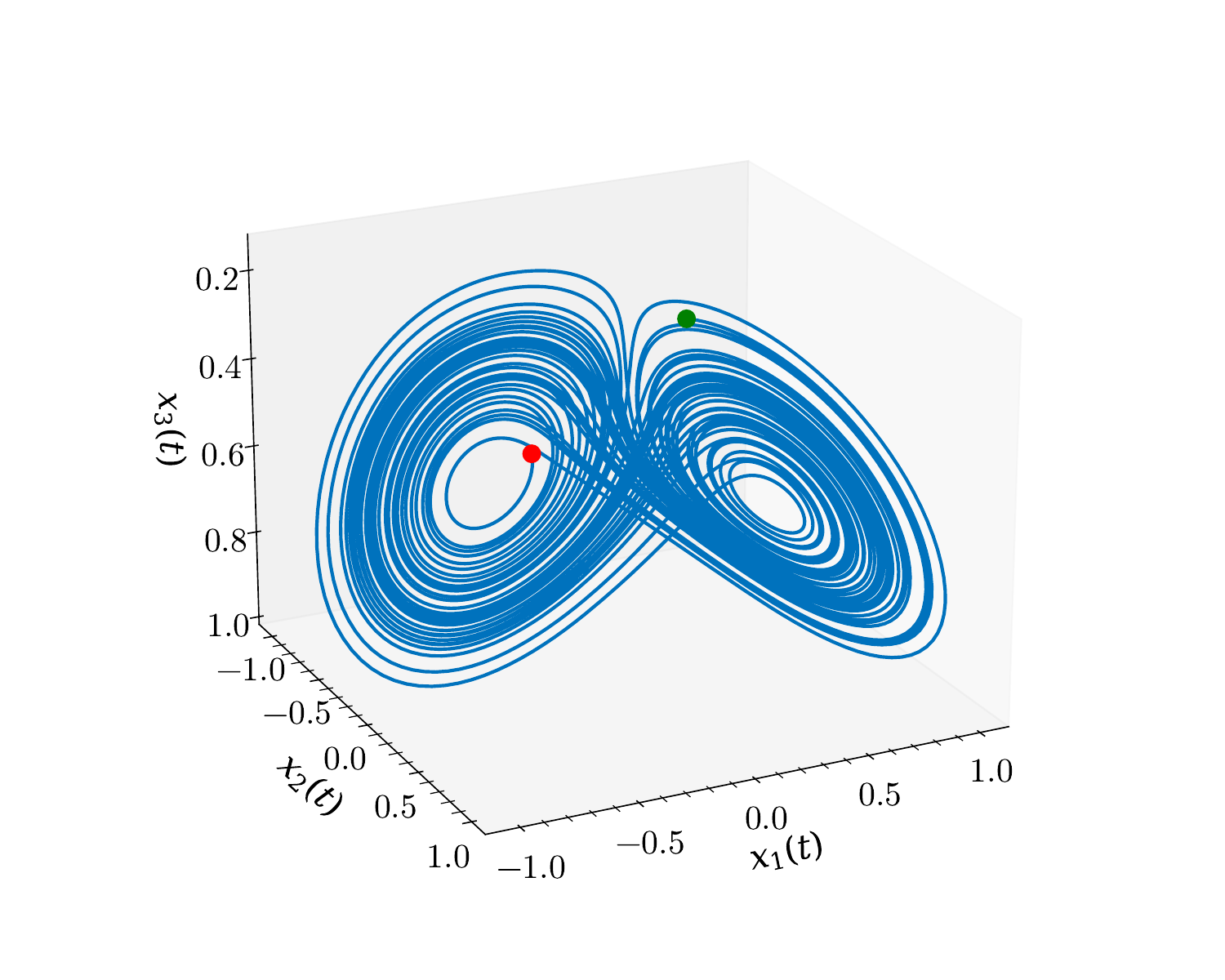}
	\caption{Free response attractor of the studied dynamical system. A double basing is presented with a chaotic behavior($a_1=50.0, a_2=140.0$ and $a_3=13.3$). The green and red dot represent the initial and final state of the dynamical system, respectively.}
	\label{fg:nolinear_attractor}
\end{figure}

\setcounter{figure}{0}
\setcounter{table}{0}
\setcounter{equation}{0}
\renewcommand{\thetable}{C.\arabic{table}}
\renewcommand{\thefigure}{C.\arabic{figure}}
\renewcommand{\theequation}{C.\arabic{equation}}

\subsection*{C. Discontinuous dynamical system model}
\label{an:c}
Discontinuous dynamical systems are characterized by having a vector field that is a discontinuous function of the states, i.e., piece wise functions driven by boundaries on the states. The double discontinuous oscillator is an example of a discontinuous dynamical system described as
\begin{align*}
\begin{bmatrix}\dot{x}_1(t)\\\dot{x}_2(t)\\\dot{x}_3(t)\\\dot{x}_4(t)
\end{bmatrix}&= A\begin{bmatrix}x_1(t)\\x_2(t)\\x_3(t)\\x_4(t)\end{bmatrix}+
\begin{bmatrix}
    0.0& 0.0\\
    0.0&  0.0\\
    0.0&0.0\\
    0.0&  1.0
\end{bmatrix}
\begin{bmatrix}u_1(t)\\u_2(t)\end{bmatrix}+\bs{\upsilon}(t),\\
    \begin{bmatrix}y_1(t)\\y_2(t)
\end{bmatrix}&= \begin{bmatrix}  1.0& 0.0&  0.01&0.0\\
                                0.0& 1.0& 0.0&0.01\\
                                \end{bmatrix}\begin{bmatrix}x_1(t)\\x_2(t)\\x_3(t)\\x_4(t)\end{bmatrix}+ \bs{\eta}(t),\\
\Sigma_{\eta}&=\Sigma_{\upsilon}= 10^{-4}\begin{bmatrix}  2.7& 0.0\\
                                    0.0& 8.5\end{bmatrix}.
\end{align*}

 On the one hand, if $\Phi_1(\bs{x})=\{(x_2(t)\geq0.0) \wedge (x_4(t)>0.0)\}$ (unconstrained). Hence,
\begin{align*}
    A= \begin{bmatrix}  0.0& 0.0&1.0&  0.0\\
        0.0&0.0&0.0&1.0\\
    -200.0&  100.0&  0.0&0.0\\
    0.0&100.0&  0.0&  0.0\end{bmatrix}.
\end{align*} 
On the other hand, if $\Phi_2(\bs{x})=\{((x_1(t)<0.0)\wedge((x_3(t)\leq0.0)\vee(((x_3(t)\leq 0.0)\wedge((x_2(t)<0)\vee$ $((x_2(t)\leq 0.0)\wedge(x_4(t)<0)))))))\}$ (constrained), then,
\begin{align*}
    A= \begin{bmatrix}  0.0&0.0& 1.0&  0.0\\
        0.0&0.0&0.0&1.0\\
    -200.0&  100.0&  0.0&0.0\\
    100.0&  -100.0&  0.0&0.0\end{bmatrix}.
\end{align*} 

This dynamical system represents an oscillator with an obstacle that limits the oscillator movement. Moreover, this system has the characteristic of being marginal stable in both constrained and unconstrained mode. This system has been selected to evaluate the proposed method in scenarios with marginal stable dynamical systems and discontinuous dynamical responses. 

\color{black}
The signal-to-noise ratio of the states and the outputs under the simulated noise is presented in Table~\ref{tab:example22}.
\begin{table}[H]
    \color{black}
    \caption{\black{Values of SNR of $\boldsymbol{x}(t)$ and $\boldsymbol{y}(t)$ for the simulated discontinuous system.}}
    \centering
    \begin{tabular}{cc}
        \hline
         SNR($\boldsymbol{x}(t)$) [dB] & SNR($\boldsymbol{y}(t)$) [dB] \\
        \hline
         (69, 65) & (69, 65) \\
        \hline
    \end{tabular}
    \label{tab:example23}
\end{table}
\color{black}

For both the constrained and unconstrained operations the rank($\mc{O}$)=4. Then, all the states of the systems can be inferred from the measured outputs. The natural frequencies and damping factors of the system are shown in Table~\ref{tb:disc_param}. The autonomous attractor of the systems is shown in Figure \ref{fg:disc_attractor}.

\begin{table}[H]
    \caption{Eigenvalues parameters of the linear dynamical system.}
    \centering
    \begin{tabular}{c c c} 
     \hline
     Parameter & Constrained model & Unconstrained model \\ [0.5ex] 
     \hline
      $\bs{f}_n$~[Hz]& [4.50, 3.18] &[5.15, 1.96]\\ 
      $\bs{\xi}_n$~[\%] & [0.00, 0.00]&[0.00, 0.00]\\
     \hline
    \end{tabular}
    \label{tb:disc_param}
    \end{table}
Since the damping factors of the system are zero, then, the analyzed dynamical system is marginal stable having the poles in the imaginary axis. The permanent oscillatory behavior of the dynamical system is shown in Figure~\ref{fg:disc_attractor}. 
 \begin{figure}[H]
	\centering
	\includegraphics[width=0.85\linewidth]{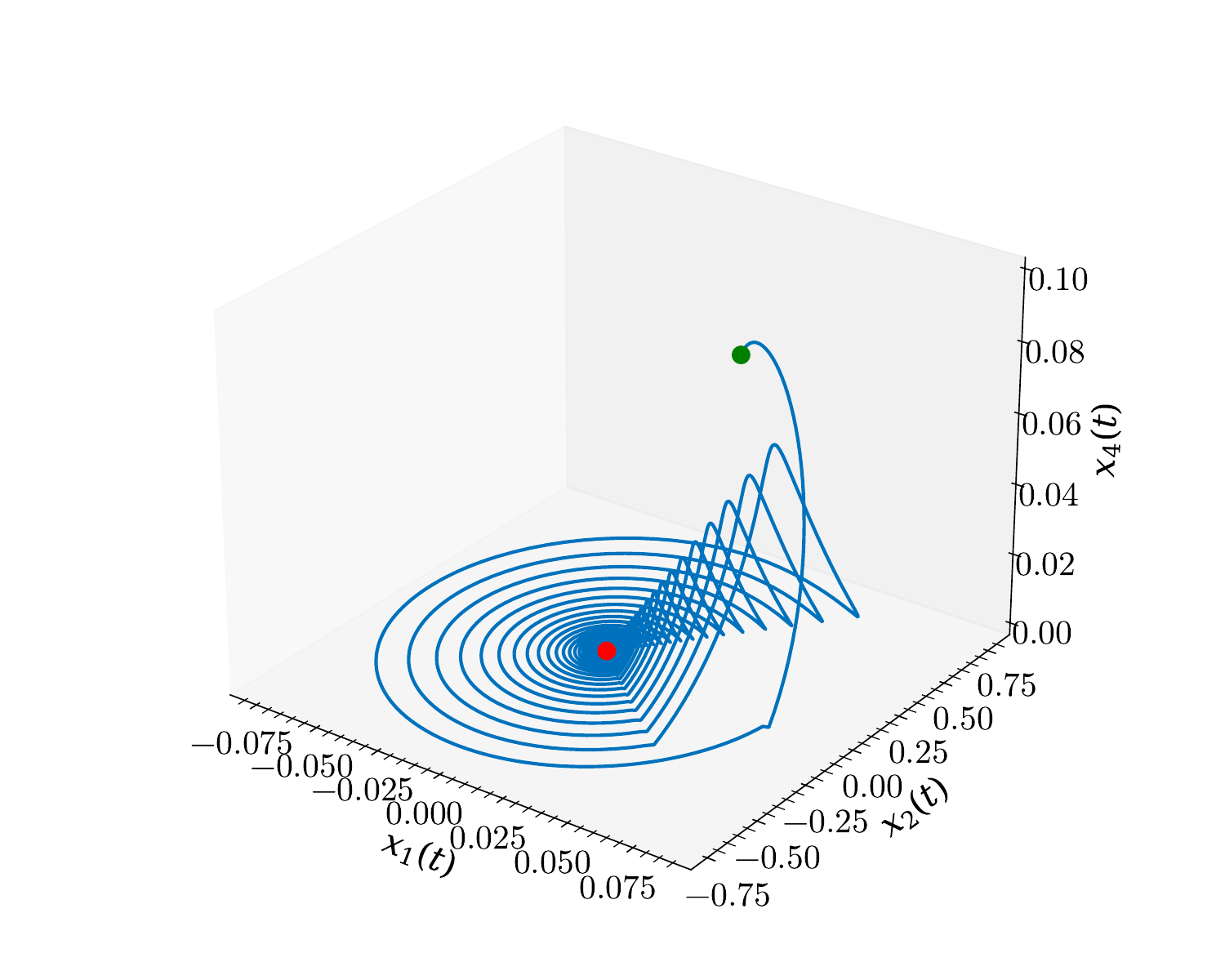}
	\caption{Free response attractor of the studied dynamical system. The dynamical system state-space is bounded in the plane $x_4(t)=0.0$. The green and red dots represent the initial and final state of the dynamical system, respectively.}
	\label{fg:disc_attractor}
\end{figure}
\setcounter{figure}{0}
\setcounter{table}{0}
\setcounter{equation}{0}
\renewcommand{\thetable}{D.\arabic{table}}
\renewcommand{\thefigure}{D.\arabic{figure}}
\renewcommand{\theequation}{D.\arabic{equation}}

\subsection*{D. Parameters used for the evaluated classifiers in the literature}

\begin{table}[H]
	\caption{Parameters used for the $h_{Rtest}$ classifier.}
	\centering
	\small
	\begin{tabular}{c c c c c c c c c c}
		\hline
		Parameter&\multicolumn{3}{c}{Linear}&\multicolumn{3}{c}{Non-linear}&\multicolumn{3}{c}{Discontinuous} \\ [0.5ex]
		\hline
        $R_{critical}$&\multicolumn{3}{c}{[3.0, 0.7]}&\multicolumn{3}{c}{[3.0, 2.0]}&\multicolumn{3}{c}{[20.0, 0.8]} \\ [0.5ex]
        $\lambda_1$&\multicolumn{3}{c}{0.6}&\multicolumn{3}{c}{0.3}&\multicolumn{3}{c}{0.4} \\ [0.5ex]
        $\lambda_2$&\multicolumn{3}{c}{0.6}&\multicolumn{3}{c}{0.3}&\multicolumn{3}{c}{0.4} \\ [0.5ex]
        $\lambda_3$&\multicolumn{3}{c}{0.6}&\multicolumn{3}{c}{0.3}&\multicolumn{3}{c}{0.4} \\ [0.5ex]
		\hline
	\end{tabular}
\end{table}
\begin{table}[H]
	\caption{Parameter used for the $h_{mfractal}$ classifier.}
	\centering
	\small
	\begin{tabular}{c c c c c c c c c c}
		\hline
		Parameter&\multicolumn{3}{c}{Linear}&\multicolumn{3}{c}{Non-linear}&\multicolumn{3}{c}{Discontinuous} \\ [0.5ex]
		\hline
        $threshold$&\multicolumn{3}{c}{0.9}&\multicolumn{3}{c}{0.5}&\multicolumn{3}{c}{0.7} \\ [0.5ex]
		\hline
	\end{tabular}
\end{table}
\begin{table}[H]
	\caption{Parameter used for the $h_{fhvdc}$ classifier.}
	\centering
	\small
	\begin{tabular}{c c c c c c c c c c}
		\hline
		Parameter&\multicolumn{3}{c}{Linear}&\multicolumn{3}{c}{Non-linear}&\multicolumn{3}{c}{Discontinuous} \\ [0.5ex]
		\hline
        $threshold$&\multicolumn{3}{c}{0.2}&\multicolumn{3}{c}{0.0002}&\multicolumn{3}{c}{4} \\ [0.5ex]
		\hline
	\end{tabular}
\end{table}
\begin{table}[H]
	\caption{Parameters used for the $h_{nb}$ classifier.}
	\centering
	\small
	\begin{tabular}{c c c c c c c c c c}
		\hline
		Parameter&\multicolumn{3}{c}{Linear}&\multicolumn{3}{c}{Non-linear}&\multicolumn{3}{c}{Discontinuous} \\ [0.5ex]
		\hline
        $P$&\multicolumn{3}{c}{1.3}&\multicolumn{3}{c}{1.1}&\multicolumn{3}{c}{1.1} \\ [0.5ex]
        $m$&\multicolumn{3}{c}{50}&\multicolumn{3}{c}{200}&\multicolumn{3}{c}{200} \\ [0.5ex]
        $k$&\multicolumn{3}{c}{10}&\multicolumn{3}{c}{250}&\multicolumn{3}{c}{250} \\ [0.5ex]
		\hline
	\end{tabular}
\end{table}

\color{black}
\setcounter{figure}{0}
\setcounter{table}{0}
\setcounter{equation}{0}
\renewcommand{\thetable}{E.\arabic{table}}
\renewcommand{\thefigure}{E.\arabic{figure}}
\renewcommand{\theequation}{E.\arabic{equation}}
\subsection*{E. Calibration procedure of the classifiers parameters}

In practical applications, the proposed algorithm parameters are typically calibrated using a training dataset. To calibrate the threshold of plane curvatures and arc length, the sample moment time representation (SMTR) curve is computed over a training dataset spanning stationary and transient time ranges. Subsequently, the difference between the standard deviation and the mean value of the transient regime geometrical properties $l_{r0}$ and $k_{c0}$ is used as thresholds for transient classification.

Moreover, the selection of the time window $t_h$ is determined through a study of the spectral content of the transient and stationary training dataset. If the analyzed system exhibits cyclo-stationary segments, $t_h$ ideally exceeds the longest period of the stationary cyclical components, ensuring consistent geometrical properties during stationary periods. It is crucial to note that $t_h$ should be minimized to detect small transitions within the system accurately, as a high $t_h$ could hinder the correct detection of consecutive transitions.

For further clarity, the calibration of the proposed method for the linear system presented in Section~A is outlined next.

Figure~\ref{fig:cali_panc} depicts the transient regime and geometrical properties of the SMTR curve during a training transient segment. Thus, the threshold for the arc length and curvature are computed as $l_{r0}=\sigma_{l_r}-\mu_{l_r}$ and $k_{c0}=\sigma_{k_c}-\mu_{k_c}$, where $\sigma_{x}$ and $\mu_{x}$ are the standard deviation and mean value of the evaluated geometrical property $x$ during the transient regime.

    \begin{figure}[H]
        \centering
        \includegraphics[width=0.45\textwidth]{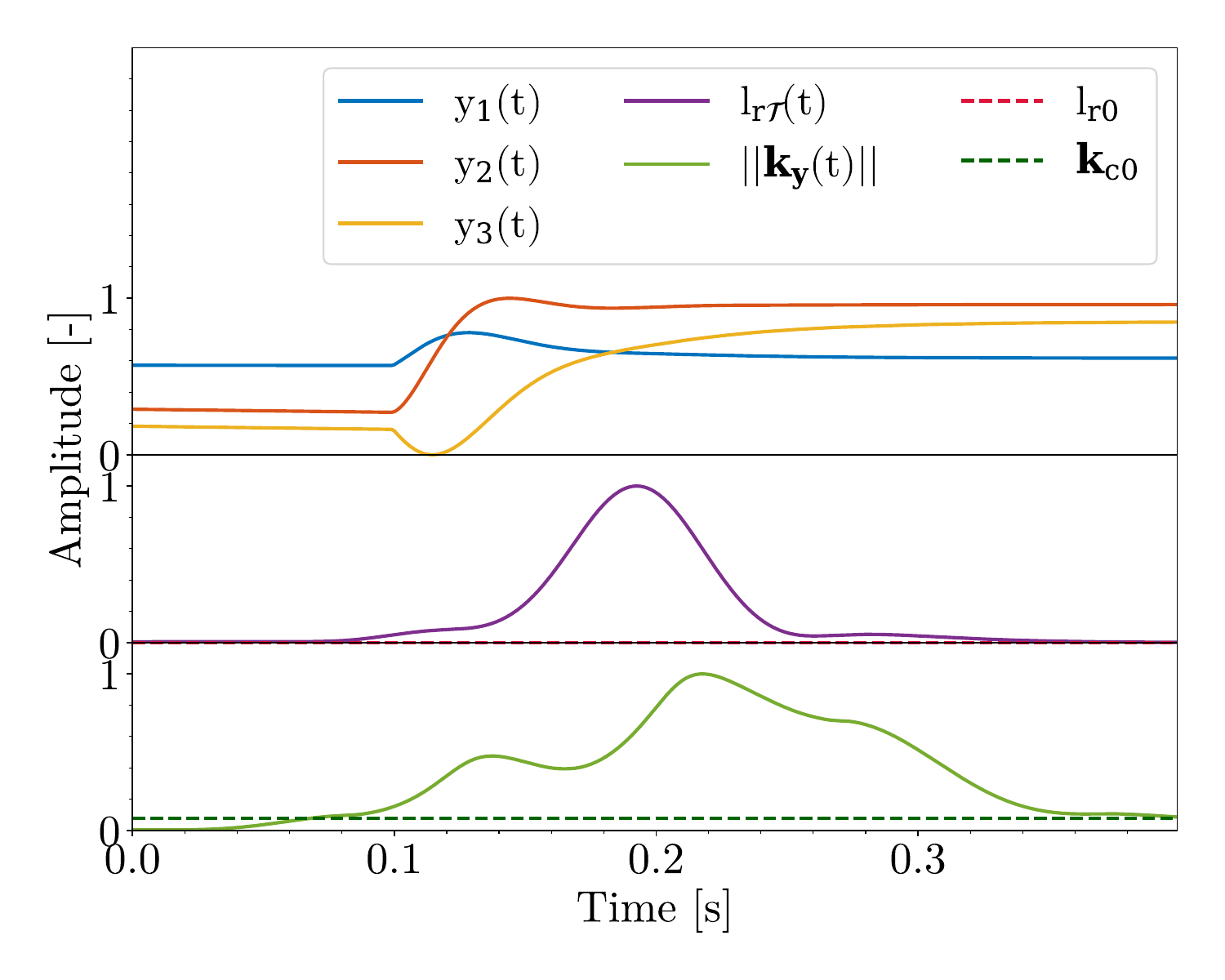}
        \caption{Calibration of the SMRT curve arc length and plane curvature threshold for a transient regime. Training transient regime for the threshold calibration of the geometrical properties.}    
        \label{fig:cali_panc}
    \end{figure}

    \begin{figure}[H]
        \centering
        \includegraphics[width=0.45\textwidth]{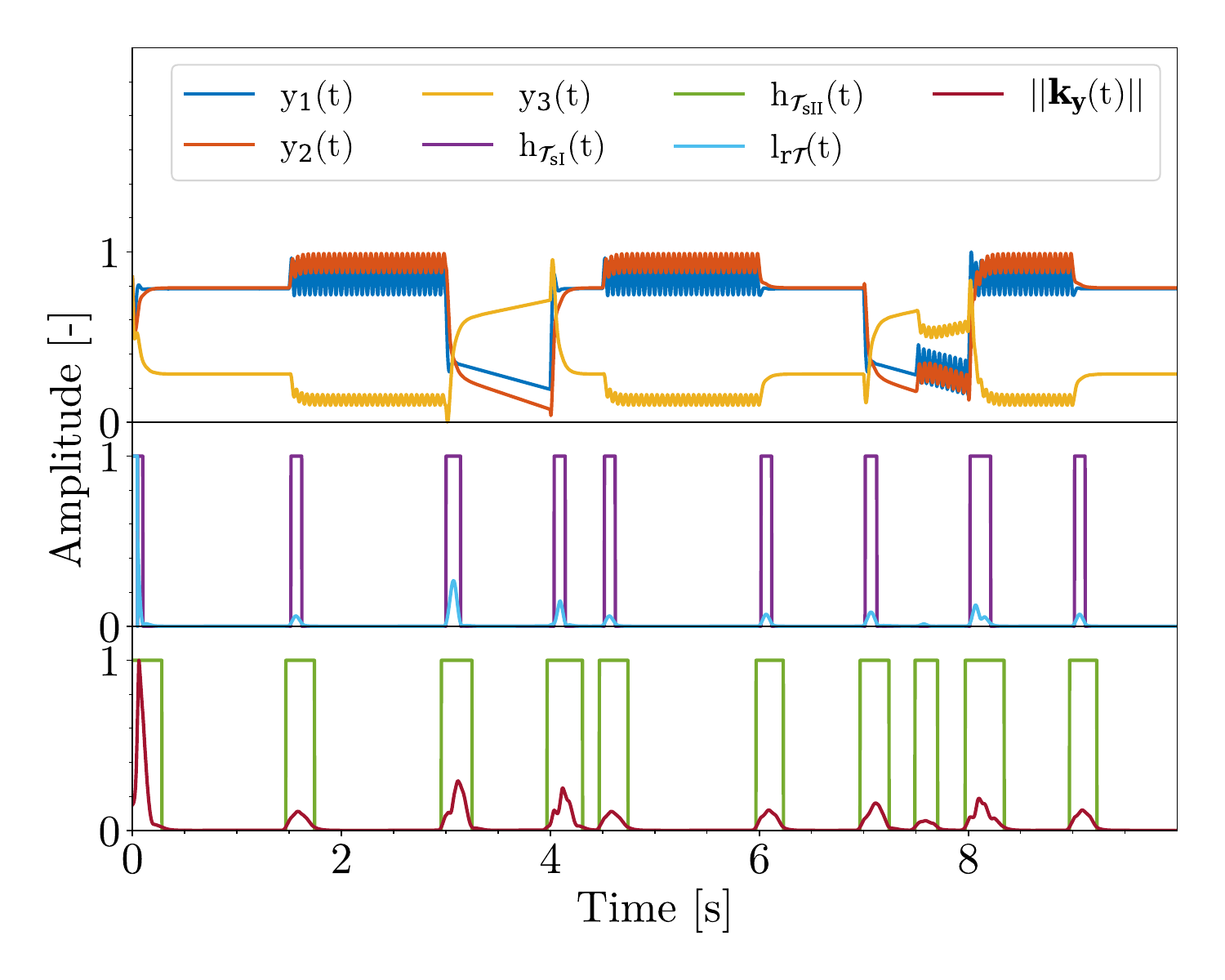}
        \caption{Calibration of the SMRT curve arc length and plane curvature threshold for a transient regime. Performance of the classifiers $h_{\mathcal{T}_{s_I}}$ and $h_{\mathcal{T}_{s_II}}$ for the calibrated thresholds.}    
        \label{fig:cali_panc2}
    \end{figure}

    To calibrate the time window $t_h$, the duration of the transient regime $t_{ss}$ is analyzed. Then, $t_{h}<t_{ss}$ based on Figure \ref{fig:cali_panc} and~\ref{fig:cali_panc2}. Nevertheless, for this application, there are cyclo-stationary periods of time. Therefore, it is necessary to use a training dataset of the cyclo-stationary regime to identify the longest period within the signal. A sample of the time and frequency domain characteristics of the cyclo-stationary regime is depicted in Figure~\ref{fig:ciclo} and~\ref{fig:ciclo2}. The longest period of the cyclo-stationary signal corresponds to the reciprocal value of the lowest frequency component $f_{lw}$ within the frequency domain information of the regime, i.e., $t_h>\frac{1}{f_{lw}}$.

    \begin{figure}[H]
        \centering
        \includegraphics[width=0.45\textwidth]{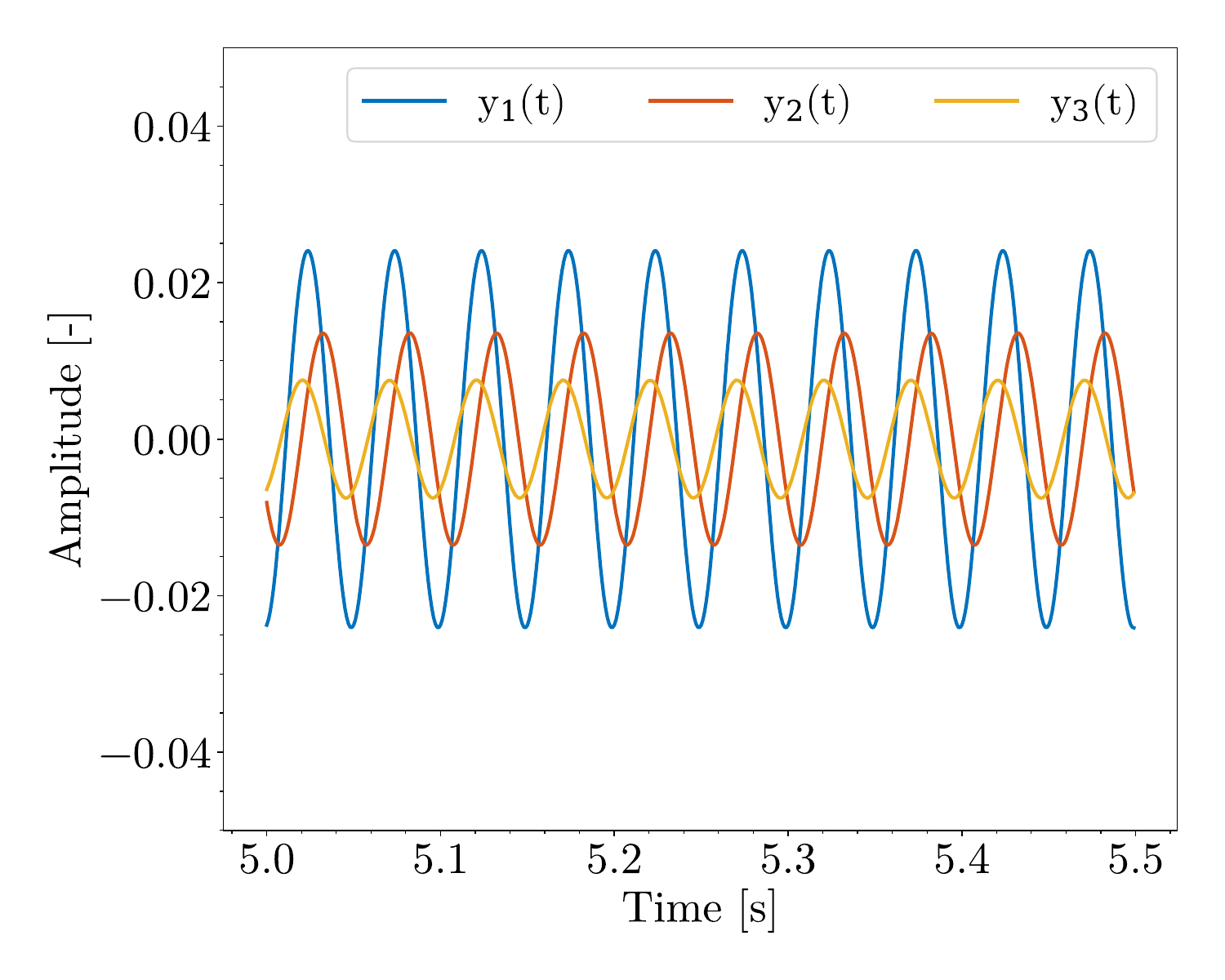}
        \caption{Time and frequency domain characteristics of the monitored outputs during their cyclo-stationary period of time. Time domain response of the outputs under cyclical input excitations.}
        \label{fig:ciclo}   
    \end{figure}
    \begin{figure}[H]
        \centering
        \includegraphics[width=0.45\textwidth]{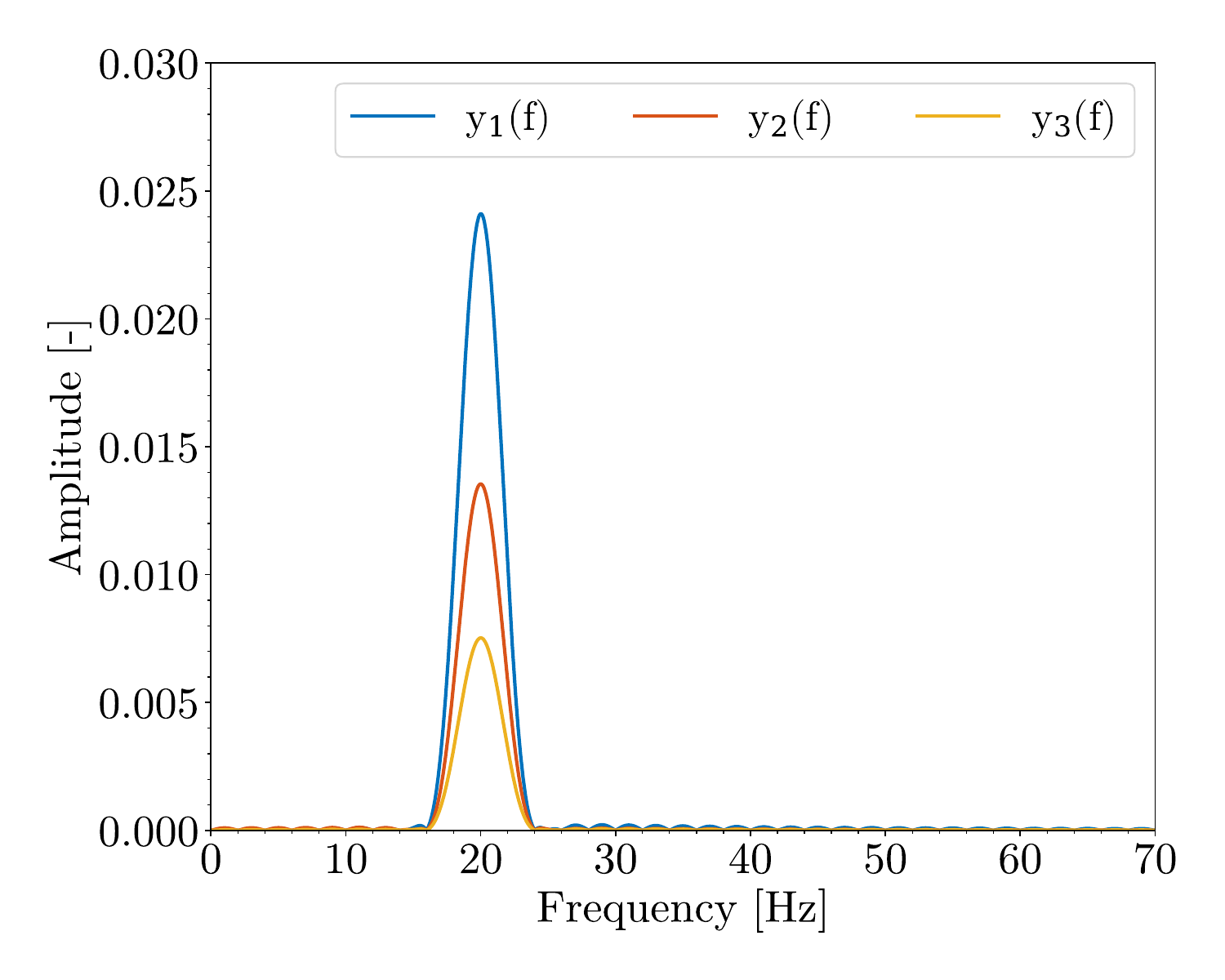}
        \caption{Cyclo-stationary time and frequency domain characteristics from the outputs of the studied dynamic system. Frequency domain content of the systems outputs under cyclical input excitations.}
        \label{fig:ciclo2}   
    \end{figure}

    In summary, it is recommended to adjust the parameters of the proposed approach by employing a training dataset that captures both the transient and stationary characteristics of the analyzed system. Subsequently, statistical analyses can be used to fine-tune the thresholds of the arc length and curvature for transient detection. Meanwhile, the choice of moving window size should be determined based on the presence of periodic patterns within the system being analyzed.

    \color{black}

\color{black}
\setcounter{figure}{0}
\setcounter{table}{0}
\setcounter{equation}{0}
\renewcommand{\thetable}{F.\arabic{table}}
\renewcommand{\thefigure}{F.\arabic{figure}}
\renewcommand{\theequation}{F.\arabic{equation}}

\subsection*{F. Derivation of numerical approximation of the arc length and curvatures of the SMTR curve}

Consider a dynamic system monitored by a group of sensors with constant sampling time $t_s$, then, the signals $\boldsymbol{y}(t) \in \mathbb{R}^{l_y}$ can be interpreted as discrete stochastic processes such that $t\approx nt_s, n\in \mathbb{Z_{+}}$. Furthermore, consider the analysis of the studied system under a moving window of size $l_h$ consisting of the current sensors measurement and the previous stored measurements as depicted in Figure~\ref{fig:sampled}.

Consider a dynamic system monitored by a group of sensors with a constant sampling time $t_s$. Then, the signals $\boldsymbol{y}(t) \in \mathbb{R}^{l_y}$ can be interpreted as discrete stochastic processes such that $t\approx nt_s$, where $n\in \mathbb{Z_{+}}$. Furthermore, consider the studied system under a moving window of size $l_h$, which comprises the current sensor measurements along with the previously stored measurements, as illustrated in Figure~\ref{fig:sampled}.
\begin{figure}[H]
    \centering
    \includegraphics[width=0.45\textwidth]{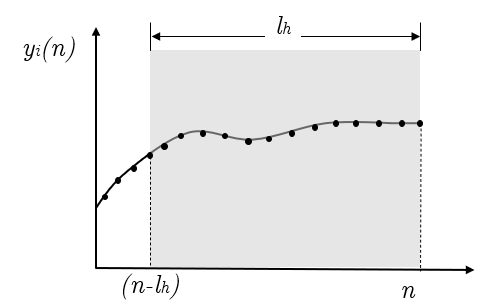}
    \caption{Example of the data sampling and transient regime classification in the sensor $y_i(n)$ for a moving window of size $l_h$. }
    \label{fig:sampled}
\end{figure}
The $p$-th sample moment of each signal in the studied window centered in $y_{i_0}$ with $i=\{0,1...,l_y-1\}$ can be written as
\begin{equation}
    \hat{\beta}_{\left(p,y_{i_0}\right)}(y_{i},n) = \frac{1}{l_h}\sum_{k=n-l_h}^{n}(y_i(k)-y_{i_0})^p,
    \label{eq:dist_moment}
\end{equation}

\noindent where $l_h \in \mathbb{Z}_{+}$ is the length of data analyzed for a moving window. Assuming that the probability distribution of the measurements within $l_h$ is uniform, \eqref{eq:dist_moment} is a discrete approximation of $\hat{\beta}_{\left(p,y_{i_0}\right)}(y_{i},t)$ for a duration time $t_h=l_ht_s$. 

The proposed SMTR $r_{\mathcal{G}}(t) \in \mathbb{R}^{l_y+1}$ written as $r_{\mathcal{G}}(t)=[t,\hat{\beta}_{\left(p,y_{0_0}\right)}(y_{0},t),\hat{\beta}_{\left(p,y_{1_0}\right)}(y_{1},t),\\
\dots,\hat{\beta}_{\left(p,y_{l_y-1}\right)}(y_{{l_y-1}_0},t)]$ is intended to be used as an indicator of transient and stationary regimes of dynamic systems. The presented classification is performed with both arc length and the norm of the plane curvatures of the SMTR. An analysis of both geometrical properties based on numerical approximation is next shown
\begin{itemize}
    \item Arc length:
    The arc length $l_r(t)$ of $r_{\mathcal{G}}(t)$ within the interval $[t-t_h,t]$ expressed as:
    
    \begin{equation}
        l_r(t) = \int_{t-t_h}^{t} \lVert r^{(1)}_{\mathcal{G}}(t)\rVert dt.
        \label{eq:lr}
    \end{equation}
    The time derivative in \eqref{eq:lr} can be expressed as

    \begin{equation}
        \resizebox{.99\linewidth}{!}{$r^{(1)}_{\mathcal{G}}(t) = [1,\hat{\beta}^{(1)}_{\left(p,y_{0_0}\right)}(y_{0},t),\hat{\beta}^{(1)}_{\left(p,y_{1_0}\right)}(y_{1},t),\dots,\hat{\beta}^{(1)}_{\left(p,y_{{l_y-1}_0}\right)}(y_{l_y-1},t)].$}
        \label{eq:rdiff}
    \end{equation}

    By means of finite difference within the studied time interval, the first time derivative of each sample moment in \eqref{eq:rdiff} can be approximated as

\resizebox{.99\linewidth}{!}{
  \begin{minipage}{\linewidth}
    \begin{subequations}
        \label{eq:lrRiemann}
        \begin{align*}
        \hat{\beta}^{(1)}_{\left(p,y_{i_0}\right)}(y_{i},t) &\approx \frac{\hat{\beta}_{\left(p,y_{i_0}\right)}(y_{i},n)-\hat{\beta}_{\left(p,y_{i_0}\right)}(y_{i},n-1)}{t_s}\\
        &\approx \frac{\sum_{k=n-l_h}^{n}(y_i(k)-y_{i_0})^p - \sum_{k=n-1-l_h}^{n-1}(y_i(k)-y_{i_0})^p}{t_s*l_h}\\
        & \approx \frac{(y_i(n)-y_{i_0})^p-(y_i(n-\Delta_h)-y_{i_0})^p}{t_h}.
    \end{align*}
    \end{subequations}
\end{minipage}
    }

    With $\Delta_h=l_h+1$ and $i=\{0,1,\dots,l_y-1\}$. Therefore, the norm of the first derivative of $r_{\mathcal{G}}(t)$ can be written as

    \resizebox{.99\linewidth}{!}{
        \begin{minipage}{\linewidth}
    \begin{subequations}
    \begin{align*}
        \lVert r^{(1)}_{\mathcal{G}}(t) \rVert &\approx \sqrt[]{1+\frac{1}{t_h^2}\sum_{i=0}^{l_y-1}\left((y_i(n)-y_{i_0})^p-(y_i((n-\Delta_h))-y_{i_0})^p\right)^2},\\
        &\approx \sqrt[]{1+\frac{1}{t_h^2} \lVert (\boldsymbol{y}(n)-\boldsymbol{y}_{0})^p-(\boldsymbol{y}((n-\Delta_h))-\boldsymbol{y}_{0})^p\rVert^2}.
        \label{eq:lrRiemann1}
    \end{align*}
    \end{subequations}
\end{minipage}
}

    Finally, the arc length described in \eqref{eq:lr} can be approximated as
    \resizebox{.9\linewidth}{!}{
        \begin{minipage}{\linewidth}
    \begin{subequations}
        \begin{align}
        l_r(t)\approx& \frac{1}{l_h}\sum_{k=n-l_h}^{n} \lVert r^{(1)}_{\mathcal{G}}(k) \rVert,\\
        \approx& \frac{1}{l_h}\sum_{k=n-l_h}^{n}\sqrt[]{t_h^2+\lVert (\boldsymbol{y}(k)-\boldsymbol{y}_{0})^p-(\boldsymbol{y}(k-\Delta_h)-\boldsymbol{y}_{0})^p\rVert^2}.
        \end{align}
        \label{eq:lrRiemann}
    \end{subequations}
    \end{minipage}
    }

    A numerical approximation of the relationship between the acquired output signals and the arc length of the proposed spatial curve is shown in (\ref{eq:lrRiemann}b). For the purpose of clarity, consider the case where $p=1$ and $\boldsymbol{y}_0=\emptyset$, for that scenario the arc length is written as
    \begin{equation}
        l_r(t) \approx \frac{1}{l_h}\sum_{k=n-l_h}^{n}\sqrt[]{t_h^2+\lVert(\boldsymbol{y}(k))-(\boldsymbol{y}(k-\Delta_h))\rVert^2},
        \label{eq:lrapp}
    \end{equation}

    It is proposed to use $l_r(t)-t_h$ to generate a transition classifier. For the case illustrated in \eqref{eq:lrapp}, $l_r(t)-t_h=0$ when $\lVert(\boldsymbol{y}(k))-(\boldsymbol{y}(k-\Delta_h))\rVert^2=0~\forall k \in\{n-l_h,n\}$. Therefore, if the norm of the difference between all the intervals of $\boldsymbol{y}(k)$ and $\boldsymbol{y}(k-\Delta_h)$ are null, then the output signals will be invariant in said intervals, the arc length will be $t_h$, and the system will be in a stationary regime.

    It is important to mention that the methodology has been proposed with $p=2$ and $\boldsymbol{y}_{0}= \emptyset$. These parameters generate the computation of the second sampled raw moment. In the field of signal processing, the use of this moment is widely used for the characterization of periodic signals as exposed in \cite{petrovic2012root, van1991periodic, poomjan2013accurate}. Increasing $p$ for the sample moment computation makes the smaller transient fluctuations negligible and highlights the more abrupt changes. An experimental and theoretical study of the effect of the order of the sampling moment will be carried out in future research.

    \item Plane curvatures:
    The curvature of each plane curve $[t,\hat{\beta}_{\left(p,y_{j_0}\right)}(y_{j},t)]$can be written as:
    
    \begin{equation}
        \resizebox{.99\linewidth}{!}{$ k_{y_j}(t) =\frac{\left| \hat{\beta}^{(2)}_{\left(p,y_{j_0}\right)}(y_{j},t) \right|}{\left( 1+\left(\hat{\beta}^{(1)}_{\left(p,y_{j_0}\right)}(y_{j},t) \right)^2 \right)^{\frac{3}{2}}},~j=0,1,\dots,l_y-1.$}
        \label{eq:kurr}
    \end{equation} 

    By means of finite differences, the plane curvature in \eqref{eq:kurr} can be approximate as
    
    \begin{subequations}
        \label{eq:lrRiemannq}
        \begin{minipage}{\linewidth}
            \resizebox{0.95\linewidth}{!}{
                \begin{minipage}{\linewidth}
        \begin{align}
        &k_{y_j}(n) \approx t_hl_h\left(\frac{\left|\Delta y^p_j(n,l_h,y_{j0}) -\Delta y^p_j(n-1,l_h,y_{j0})\right|}{\left( (t_h^2+\Delta y^p_j(n,l_h,y_{j0})^2)\right)^{3/2}}\right),\\
        &\Delta y^p_j(n,l_h,y_{j0}) =  (y_j(n)-y_{j0})^p-(y_j(n-l_h-1)-y_{j0})^p,
        \end{align}
    \end{minipage}
    }
  \end{minipage}
    \end{subequations} 

    Consider the case where $p=1$ and $\boldsymbol{y}_0=\emptyset$, for that scenario the norm of the plane curvature vector is written as

    \begin{subequations}
        \label{eq:kurrt}
        \begin{minipage}{\linewidth}
            \resizebox{0.95\linewidth}{!}{
                \begin{minipage}{\linewidth}
        \begin{align}
        &\lVert\bs{k}_{\bs{y}}(n)\rVert \approx t_hl_h\sqrt{\sum_{j=1}^{l_y}\left(\frac{\left|\Delta y_j(n,l_h) -\Delta y_j(n-1,l_h)\right|}{\left( (t_h^2+\Delta y_j(n,l_h)^2)\right)^{3/2}}\right)^2},\\
        &\Delta y_j(n,l_h) =  (y_j(n))-(y_j(n-l_h-1)),
        \end{align}
    \end{minipage}
    }
  \end{minipage}
    \end{subequations} 

It is proposed to use $\lVert\bs{k}{\bs{y}}(n)\rVert $ for designing a transition classifier. In the scenario outlined in equation \eqref{eq:kurrt}, $\lVert\bs{k}{\bs{y}}(n)\rVert =0$ when the difference between the intervals $[n-l_h-1,n]$ and $[n-l_h-2,n-1]$ converges to zero across all outputs of the monitored dynamic system. Consequently, the output signals remain unchanged within these intervals, leading to the curvature norm approaching zero and indicating a stationary regime.

It is crucial to apply the same considerations to the parameters $p$ and $\boldsymbol{y}_0$. When dealing with this geometrical property, increasing $p$ for sample moment computation renders smaller differences negligible and accentuates the more abrupt changes in curvature norms.

Conversely, this geometric property is contingent on the first and second numerical time derivatives of the sample moments. Consequently, employing smoothing filters, such as the Savitzky-Golay filter outlined in \cite{gorry1990general}, becomes imperative. These filters facilitate the computation of numerical time derivatives of signals perturbed by noise. Furthermore, their application is crucial for precisely identifying the regions where curvature becomes zero, indicating stationary regimes, while disregarding the inflection points in the sample moments.

\end{itemize}
\color{black}

\color{black}
\setcounter{figure}{0}
\setcounter{table}{0}
\setcounter{equation}{0}
\renewcommand{\thetable}{G.\arabic{table}}
\renewcommand{\thefigure}{G.\arabic{figure}}
\renewcommand{\theequation}{G.\arabic{equation}}

\subsection*{G. Study of the effect of observability in the classification performance}
Consider the nonlinear systems written as
    \begin{align*}
        \dot{\boldsymbol{x}}(t)&=\boldsymbol{\alpha}(\boldsymbol{x}(t),\boldsymbol{u}(t))\\
        \boldsymbol{y}(t)&=\boldsymbol{\gamma}(\boldsymbol{x}(t),\boldsymbol{u}(t)),        
    \end{align*}

    \noindent techniques such as the empirical observability Gramian, as discussed in \cite{powel2015empirical}, and Sedoglavic's algorithm for observability determination of rational functions, presented in \cite{sedoglavic2001probabilistic}, can be employed to estimate the observability characteristics of the system. However, it is worth to mention that these methods may incur significant computational costs in real-time applications. Hence, alternatives such as linearization around an operating point can be utilized to assess this condition during online analysis.
    
    The observability matrix of a linear or linearized system around an operating point is expressed as
    
    \begin{equation}
        \mathcal{O} = \begin{bmatrix}
            J_{\boldsymbol{\gamma}}(\boldsymbol{x}_0,\boldsymbol{u}_0)\\
            J_{\boldsymbol{\gamma}}(\boldsymbol{x}_0,\boldsymbol{u}_0)J_{\boldsymbol{\alpha}}(\boldsymbol{x}_0,\boldsymbol{u}_0)\\
            J_{\boldsymbol{\gamma}}(\boldsymbol{x}_0,\boldsymbol{u}_0){J^2_{\boldsymbol{\alpha}}(\boldsymbol{x}_0,\boldsymbol{u}_0)}\\
        \end{bmatrix},
    \label{eq:obs}
    \end{equation}

    \noindent where, $J_{\boldsymbol{\alpha}}(\boldsymbol{x}_0,\boldsymbol{u}_0)$ and $J_{\boldsymbol{h}}(\boldsymbol{x}_0,\boldsymbol{u}_0)$ are the Jacobian matrices of the nonlinear mappings $\boldsymbol{\alpha}(\boldsymbol{x}(t),\boldsymbol{u}(t))$ and $\boldsymbol{\gamma}(\boldsymbol{x}(t),\boldsymbol{u}(t))$, respectively.

    For instance, consider the studied Lorenz system and the nonlinear mapping written as
    \begin{subequations}
    \begin{align}
        \begin{bmatrix}\dot{x}_1(t)\\\dot{x}_2(t)\\\dot{x}_3(t)
    \end{bmatrix}&= \boldsymbol{\alpha}(\boldsymbol{x,\boldsymbol{u}})=\begin{bmatrix} a_1(x_2(t)-x_1(t)) +u_1(t)\\
        x_1(t)(a_2-x_3(t))-x_2(t)+u_2(t)\\
        x_1(t)x_2(t)-a_3 x_3(t)
    \end{bmatrix},\\
    \begin{bmatrix}y_1(t)\\y_2(t)\\y_3(t)
    \end{bmatrix}&= \boldsymbol{\gamma}(\boldsymbol{x})=\begin{bmatrix}x_{1} \log{\left(x_{2} \right)}\\
        \frac{x_{2}^{2}}{x_{3}^{2} + 1}\\
        x_{3} e^{\sin{\left(x_{2} \right)}}
    \end{bmatrix}.  
    \end{align}
    \label{eq:nonmap}
    \end{subequations}
    Under the mappings of \eqref{eq:nonmap}, the Jacobian matrices can be expressed as
    \begin{subequations}
        \begin{align*}
        J_{\boldsymbol{\gamma}}(\boldsymbol{x})&=
        \begin{bmatrix}
        \log{\left(x_{2} \right)} & \frac{x_{1}}{x_{2}} & 0\\
        0 & \frac{2 x_{2}}{x_{3}^{2} + 1} & - \frac{2 x_{2}^{2} x_{3}}{\left(x_{3}^{2} + 1\right)^{2}}\\
        0 & x_{3} e^{\sin{\left(x_{2} \right)}} \cos{\left(x_{2} \right)} & e^{\sin{\left(x_{2} \right)}}
        \end{bmatrix},\\
        J_{\boldsymbol{f}}(\boldsymbol{x})&=
        \begin{bmatrix}
             -a_{1} & a_{1} & 0\\
            a_{2} - x_{3} & -1 & - x_{1}\\
            x_{2} & x_{1} & - a_{3}
        \end{bmatrix}.
    \end{align*}
        \end{subequations}   

        Thus, the observability matrix $\mathcal{O}$ of the system can be computed with \eqref{eq:obs}, the parameters of the system $[a_1,a_2,a_3]=[560.0,200.0,53.3]$ and the operation point $\boldsymbol{x}_0=[102,102,199]$, $\boldsymbol{u}_0=[0,0]$. For this scenario, $\mathcal{O}$ is full rank and therefore, the system is completely observable. Furthermore, the geometrical properties of the SMTR curve are sensible to the transient regimes of the system. The results of the classification for the non-linear mapping are shown in Figure~\ref{fig:nonmapingdd}.
        \begin{figure}[H]
        \centering
        \begin{subfigure}{0.45\textwidth}
            \includegraphics[width=\textwidth]{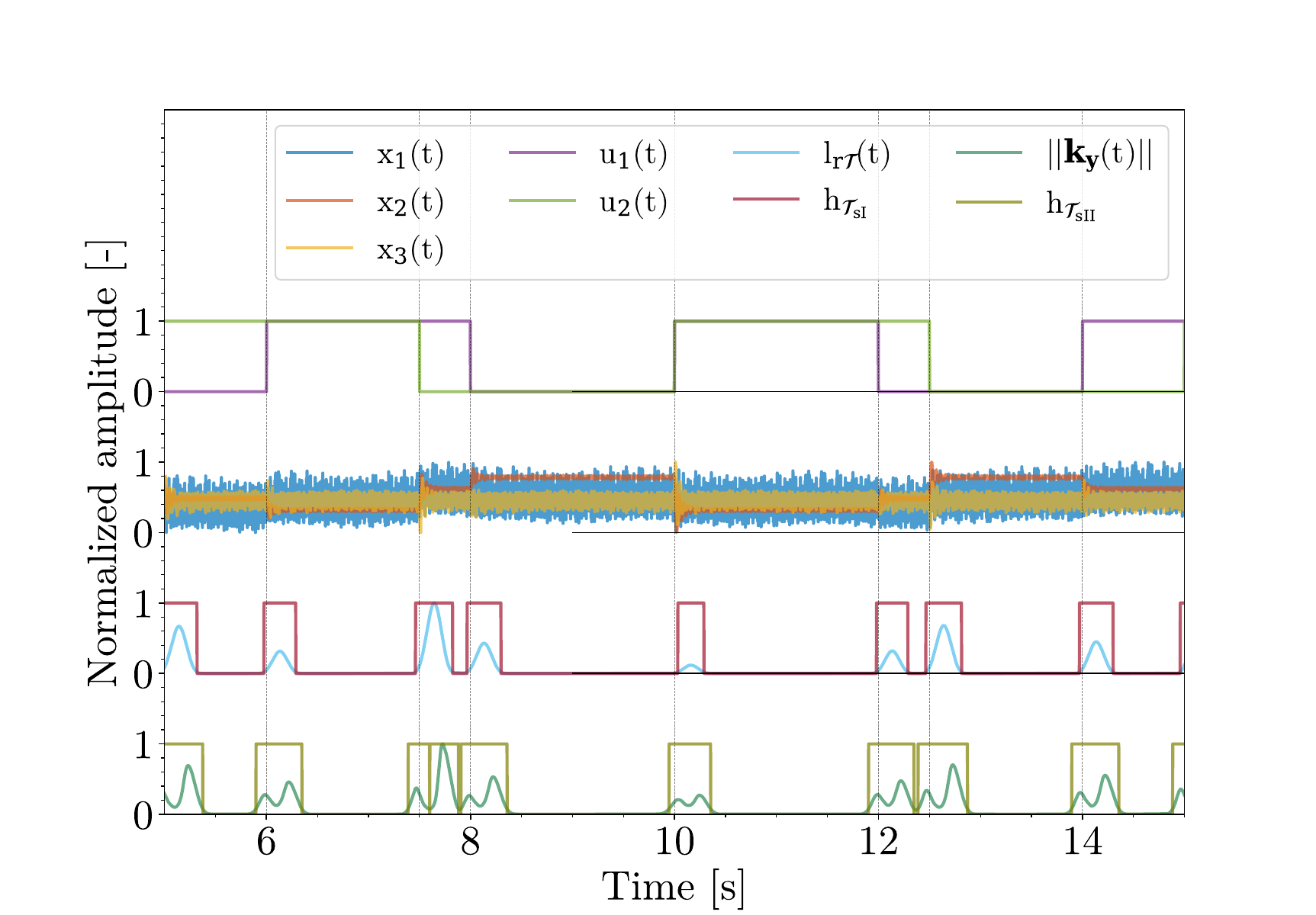}
            \caption{}
        \end{subfigure}
    \end{figure}
    \begin{figure}[H]\ContinuedFloat
        \centering
        \begin{subfigure}{0.45\textwidth}
            \includegraphics[width=\textwidth]{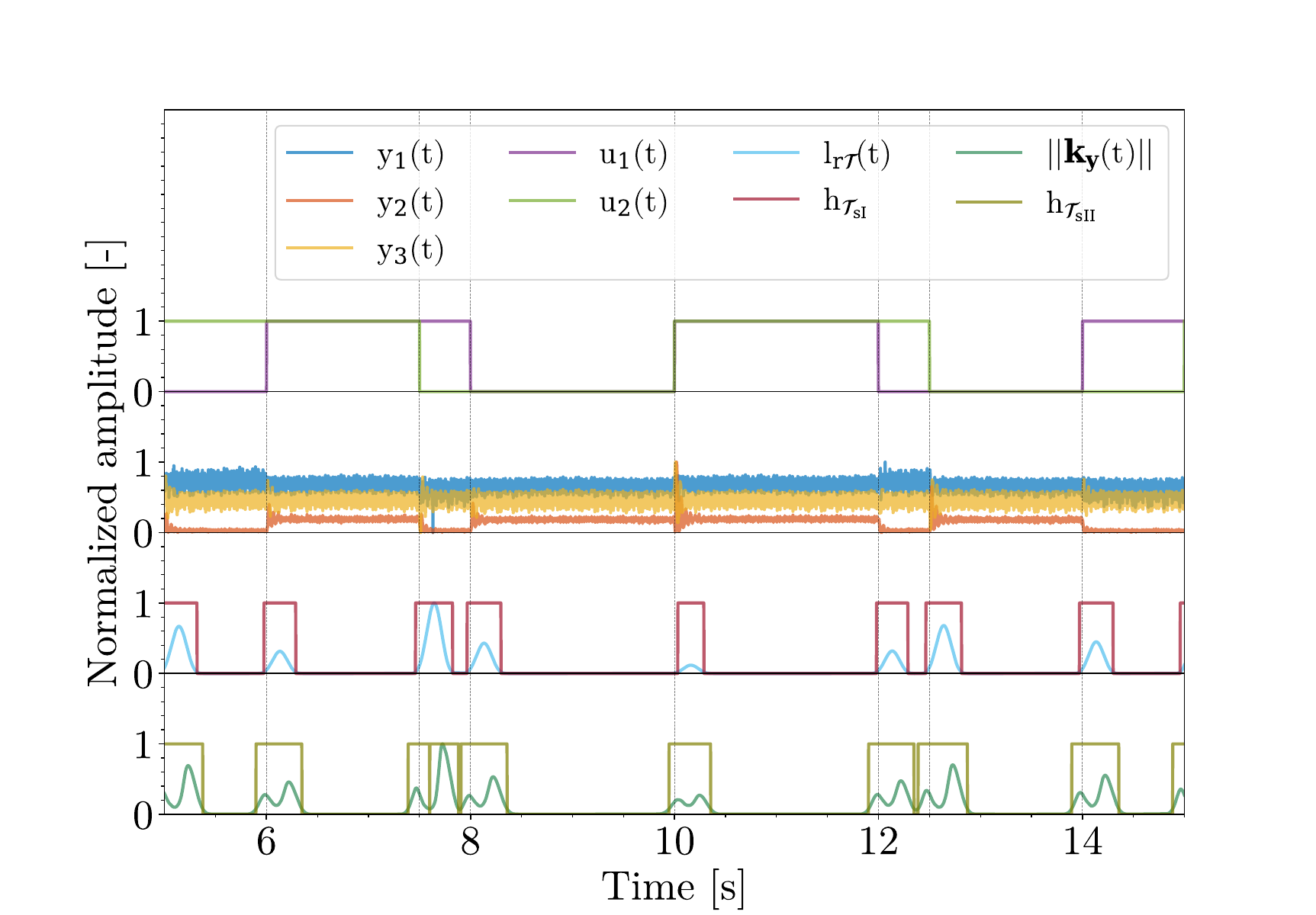}
            \caption{}
        \end{subfigure}
        \caption{Classification of the transient regimes using the proposed methodology for the observable mapping of \eqref{eq:nonmap}. (a) Time response of the inputs, states, geometrical properties of the SMTR curve and proposed classifier. (b) Time response of the inputs, outputs, geometrical properties of the SMTR curve and proposed classifier.}
        \label{fig:nonmapingdd}
        \end{figure}

    Now, consider the non-linear output-state mapping written as
        \begin{equation}
            \begin{bmatrix}y_1(t)\\y_2(t)\\y_3(t)
            \end{bmatrix}= \boldsymbol{\gamma}(\boldsymbol{x})=\begin{bmatrix}- \frac{x_{1}^{2}}{2} + x_{1} x_{2} - \frac{x_{2}^{2}}{2}\\0\\0\end{bmatrix}
            \label{eq:nonmap2}
        \end{equation}
    For the mapping showed in \eqref{eq:nonmap2}, the observability matrix is not full rank and therefore the complete information of the dynamic system cannot be inferred from the outputs. The results of applying the classifier to this scenario are shown in Figure~\ref{fig:nonmaping_nb22}.
    \begin{figure}[H]
        \centering
        \begin{subfigure}{0.45\textwidth}
            \includegraphics[width=\textwidth]{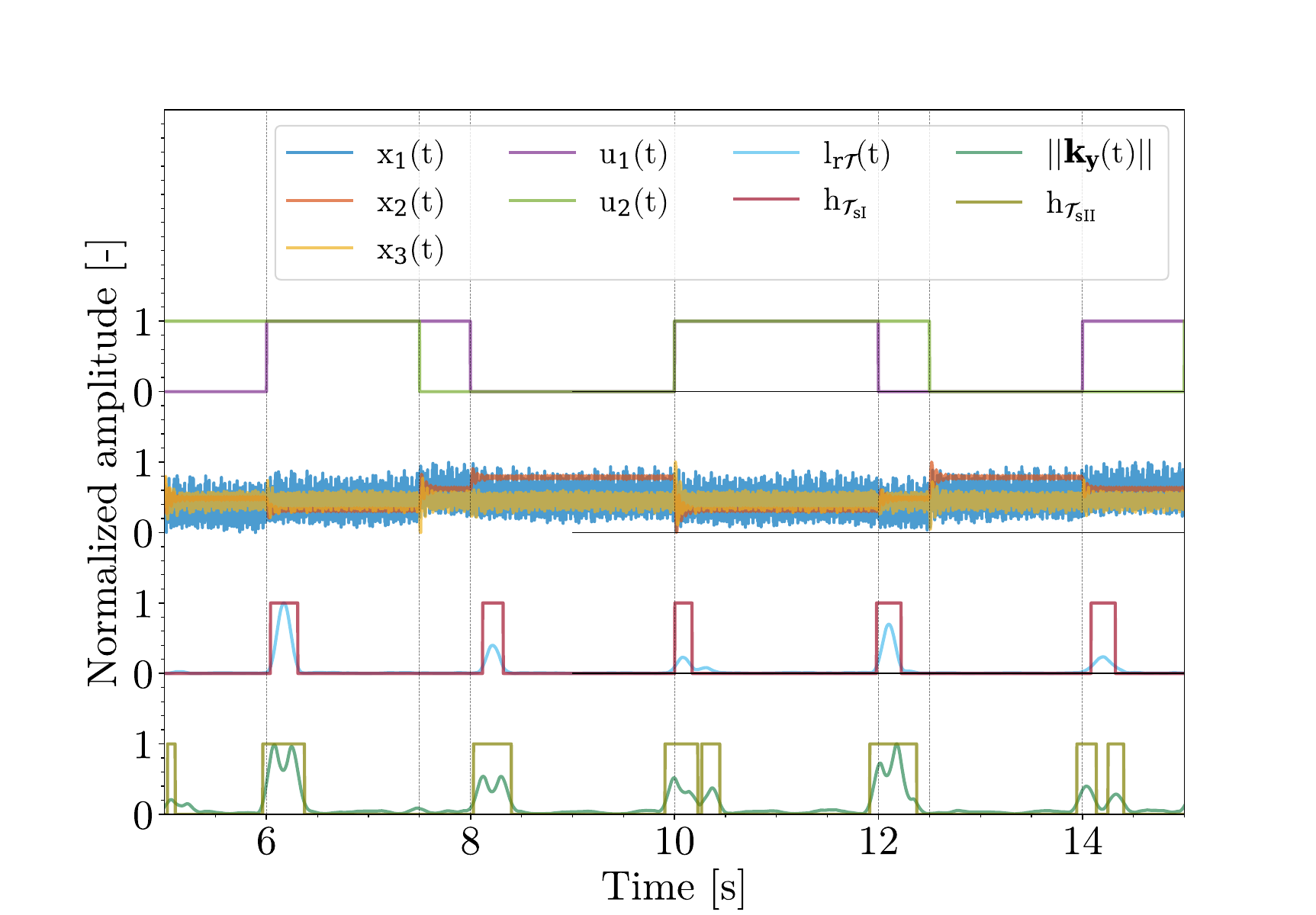}
            \caption{}
        \end{subfigure}
    \end{figure}
    \begin{figure}[H]\ContinuedFloat
        \centering
        \begin{subfigure}{0.45\textwidth}
            \includegraphics[width=\textwidth]{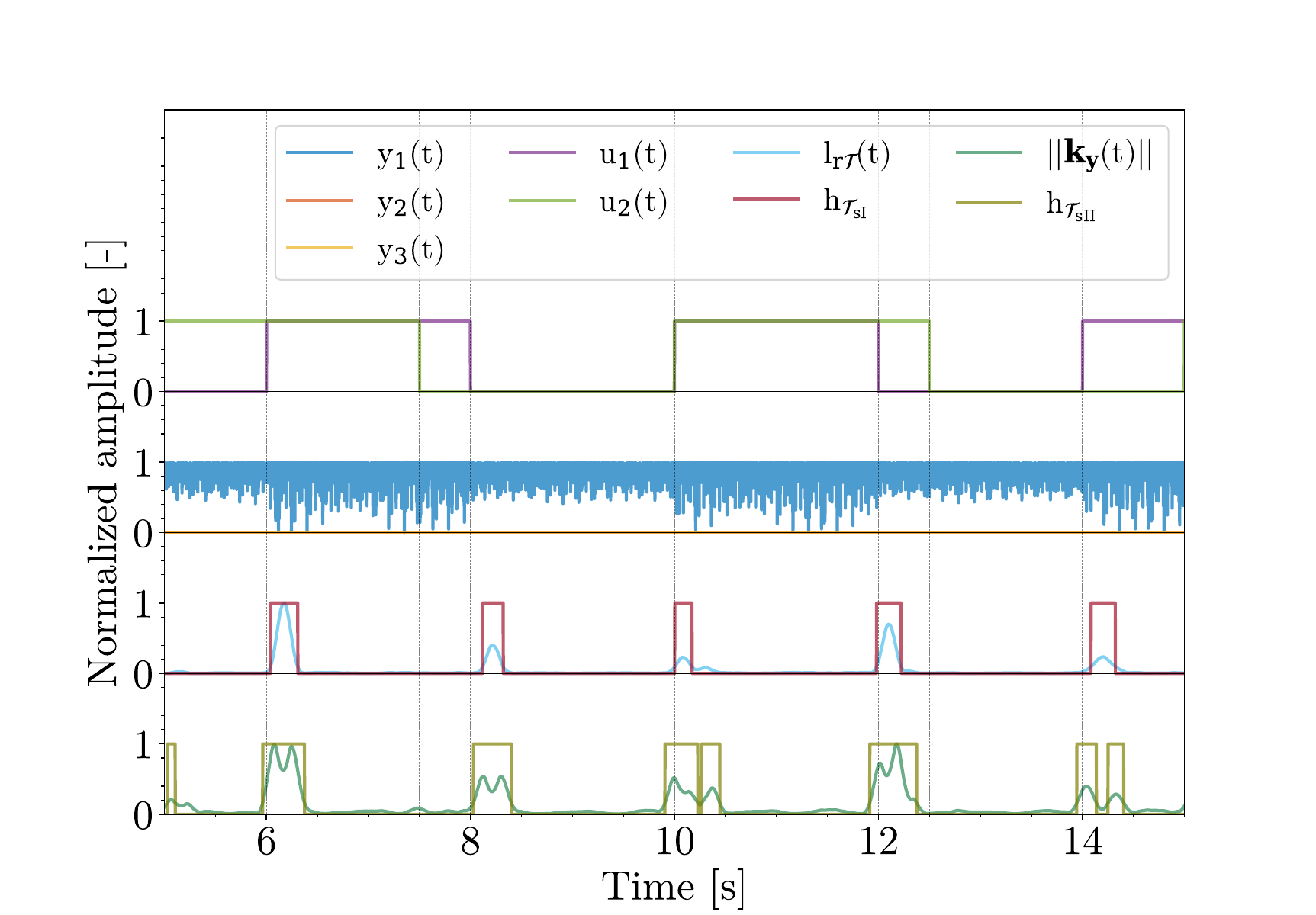}
            \caption{}
        \end{subfigure}
        \caption{Classification of the transient regimes using the proposed methodology for the non-observable mapping of \eqref{eq:nonmap}. (a) Time response of the inputs, states, geometrical properties of the SMTR curve and proposed classifier. (b) Time response of the inputs, outputs, geometrical properties of the SMTR curve and proposed classifier.}
        \label{fig:nonmaping_nb22}
        \end{figure}
    In summary, the proposed algorithm is effective for identifying transient regimes within nonlinear mappings from the states of the system to its outputs, as long as the outputs can accurately represent the dynamics of the studied system.
\color{black}

\color{black}
\setcounter{figure}{0}
\setcounter{table}{0}
\setcounter{equation}{0}
\renewcommand{\thetable}{H.\arabic{table}}
\renewcommand{\thefigure}{H.\arabic{figure}}
\renewcommand{\theequation}{H.\arabic{equation}}

\subsection*{H. Study of the effect of noise in the classification performance}

Consider the dynamic system with noise in the states and outputs mappings in state space form written as
\begin{align*}
    \dot{\boldsymbol{x}}(t)&=\boldsymbol{\alpha}(\boldsymbol{x}(t),\boldsymbol{u}(t)) + \boldsymbol{\upsilon}(t),\\
    \boldsymbol{y}(t)&=\boldsymbol{\gamma}(\boldsymbol{x}(t)) + \boldsymbol{\eta}(t),        
\end{align*}
\noindent where $\boldsymbol{\upsilon}(t) \sim \mathcal{N}(0,\Sigma_{\upsilon})$ and $\boldsymbol{\eta}(t) \sim \mathcal{N}(0,\Sigma_{\eta})$. Moreover, $\Sigma_{\upsilon} \neq O$, $\Sigma_{\eta} \neq O$  with $O$ as the null matrix.

It is worth mentioning that the covariance matrices of the noise signals were chosen to uphold a signal-to-noise ratio (SNR) sufficient for distinguishing between the system dynamics and the background noise in each analyzed signal. The signal-to-noise ratio can be expressed as
\begin{equation}
    \resizebox{.95\linewidth}{!}{$
    SNR(s(t)) = 10\log_{10}\left(\frac{p_{signal}(t)}{p_{noise}(t)}\right)=10\log_{10}\left(\frac{\int_{0}^{t_{sim}}s(t)dt}{\int_{0}^{t_{sim}}s_n(t)dt}\right),$}
    \label{eq:snr}
\end{equation}
where $s(t)$, $s_{n}(t)$ and $t_{sim}$ represent the studied signal, its noise, and the simulation period, respectively. The definitions and characteristics of SNR and its variants are elaborated in \cite{welvaert2013definition}. Generally, in signal processing, an SNR of 100 implies negligible noise in the signal, an SNR of 0 indicates comparable energy between the signal and the noise, and a negative SNR denotes predominant noise in the signal. A low SNR can lead to inaccuracies in the performance of the algorithm. For clarity, Figure~\ref{fg:noise_exp} illustrates the classification of transitions under various signal-to-noise ratios computed using \eqref{eq:snr} for the linear dynamic system presented in Section~A.
\begin{figure}[H]
    \centering
    \begin{subfigure}{0.4\textwidth}
        \includegraphics[width=\textwidth]{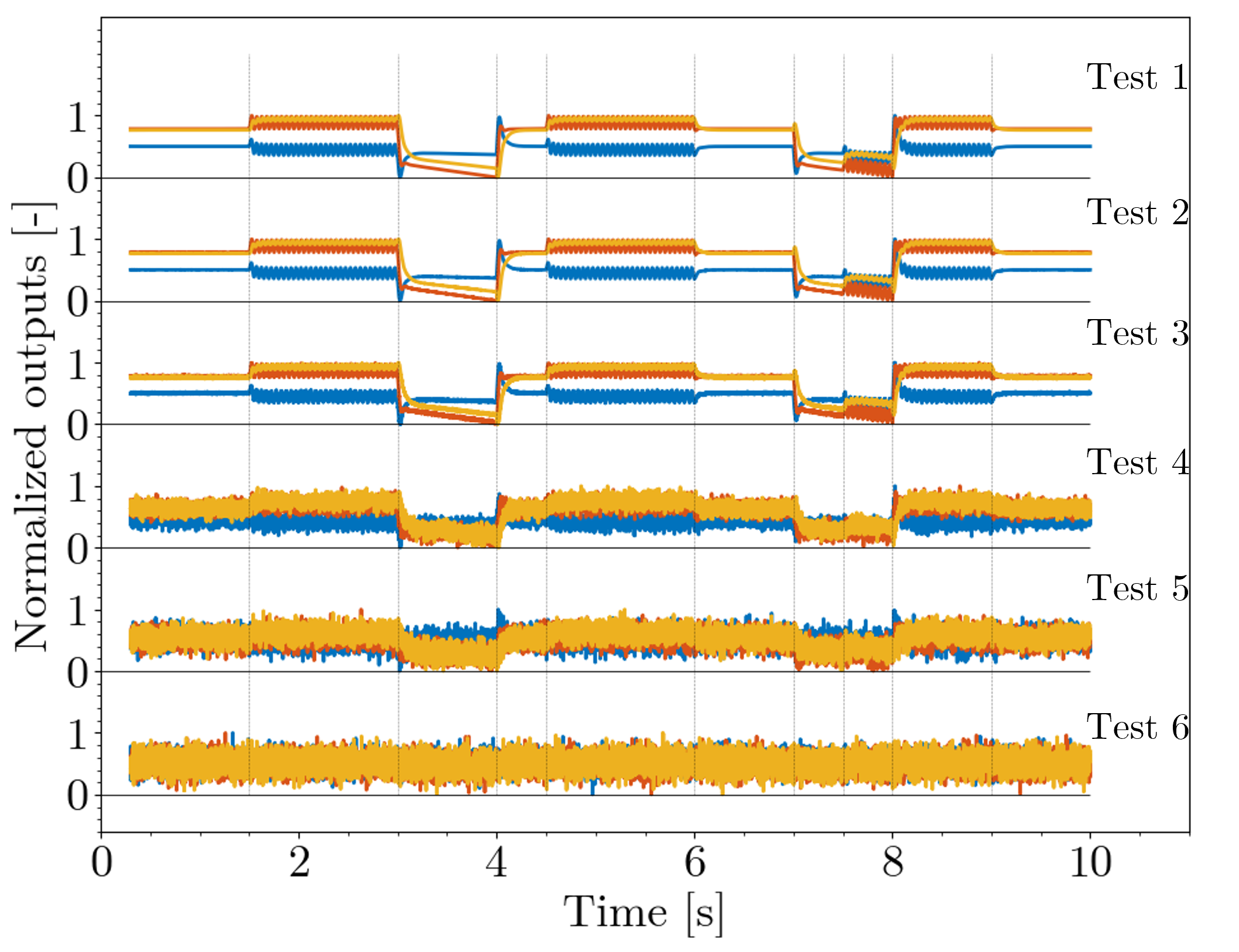}
        \caption{}
    \end{subfigure}
\end{figure}
\begin{figure}[H]\ContinuedFloat
    \centering
    \begin{subfigure}{0.4\textwidth}
        \includegraphics[width=\textwidth]{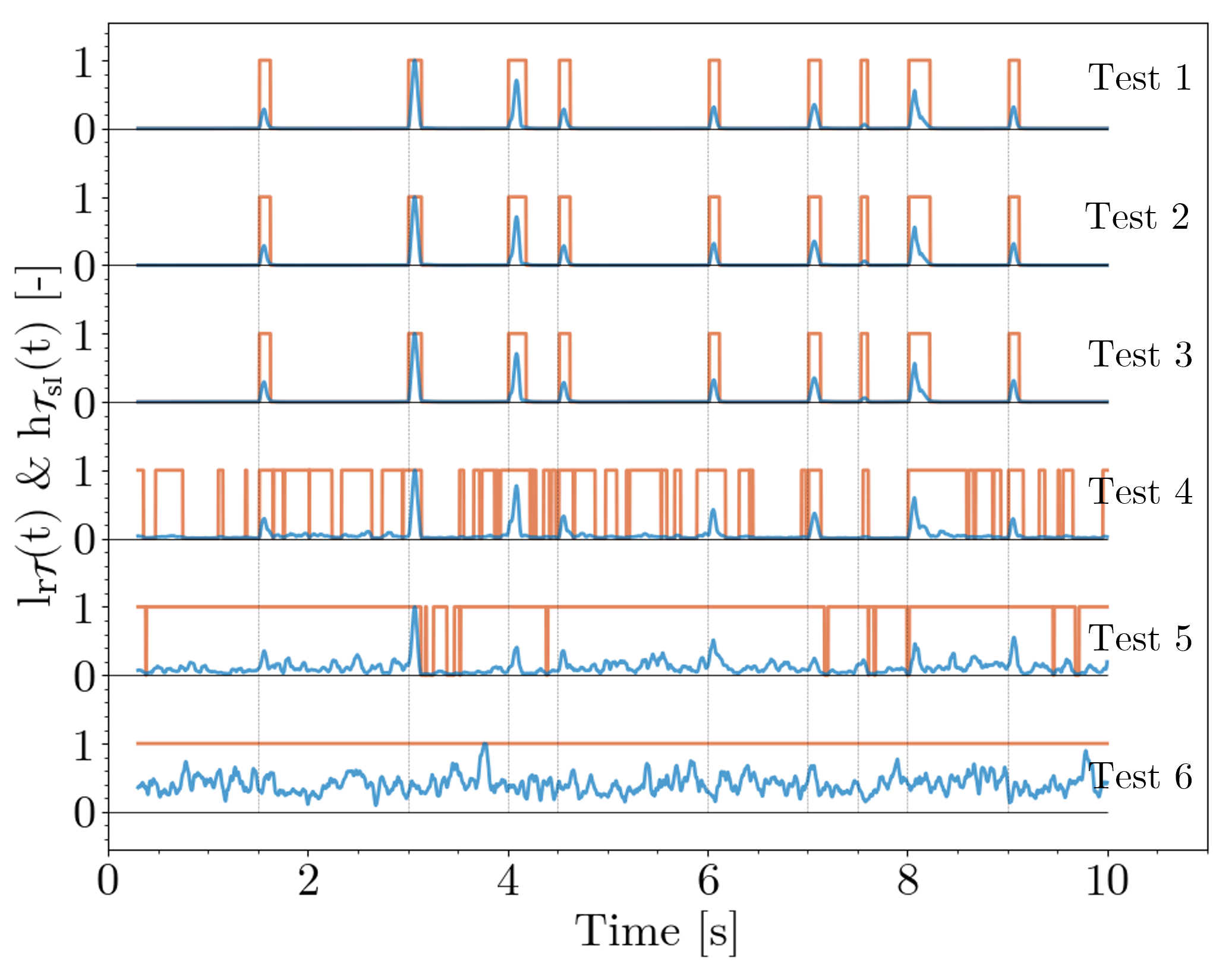}
        \caption{}
    \end{subfigure}
\end{figure}
\begin{figure}[H]\ContinuedFloat
    \centering
    \begin{subfigure}{0.4\textwidth}
        \includegraphics[width=\textwidth]{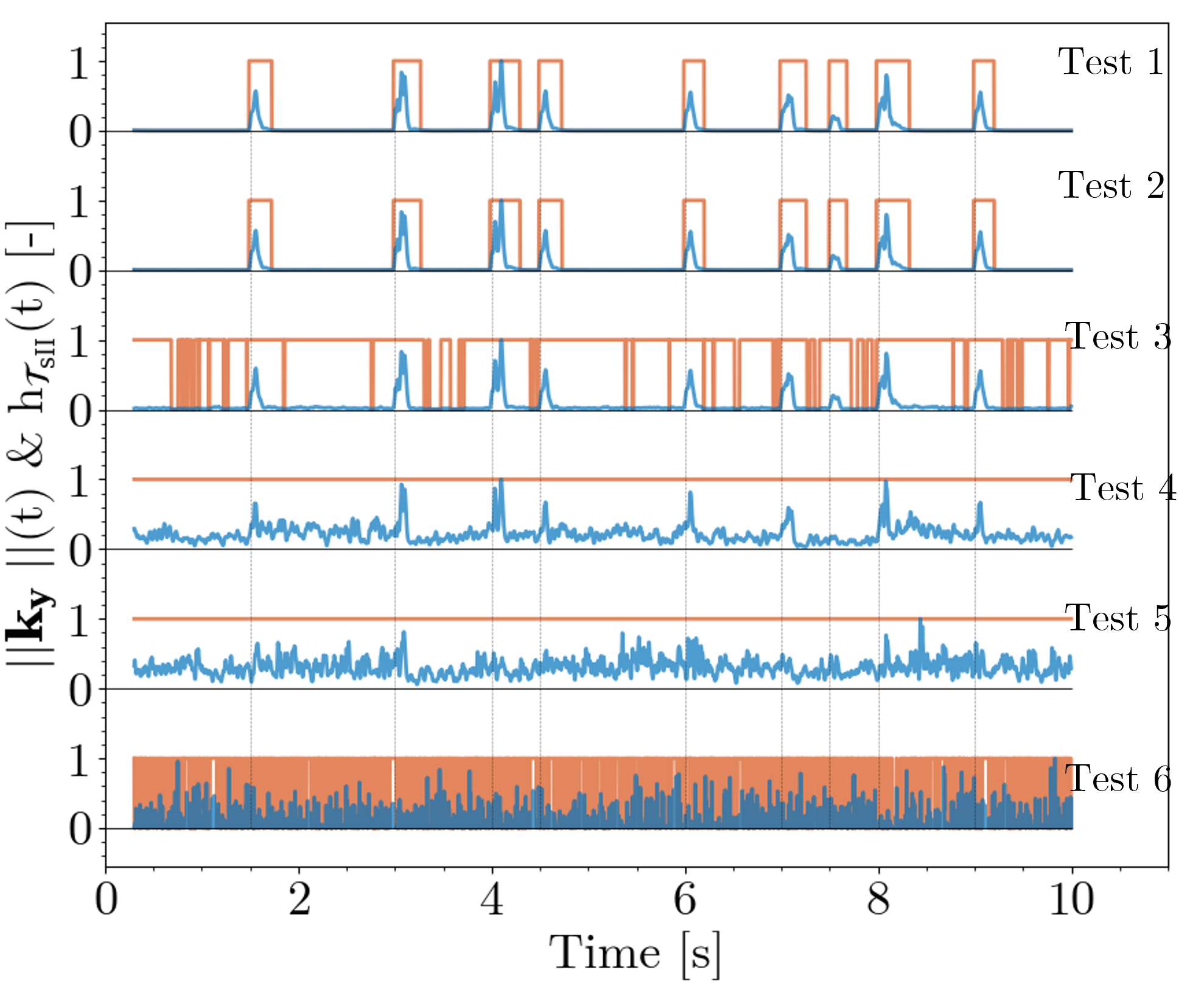}
        \caption{}
    \end{subfigure}
    \caption{Example of the performance of the proposed transient classifier with different SNRs. (a) Output signals of the simulated linear systems for the tested SNR scenarios. (b) Arc length of the SMTR curve and its classifier for each studied scenario. (c) Plane curvature vector magnitude of the SMTR curve and its classifier for each studied scenario.}
    \label{fg:noise_exp}
\end{figure}
\begin{table}[H]
    \caption{Values of SNR of $\boldsymbol{x}(t)$ and $\boldsymbol{y}(t)$ for the evaluated tests.}
    \centering
    \small
    \begin{tabular}{ccc}
        \hline
        Test & SNR($\boldsymbol{x}(t)$) [dB] & SNR($\boldsymbol{y}(t)$) [dB] \\
        \hline
        1 & (96, 95, 90) & (88, 96, 93) \\
        2 & (56, 55, 50) & (48, 56, 53) \\
        3 & (36, 35, 30) & (27, 36, 33) \\
        4 & (15, 15, 10) & (7, 14, 12) \\
        5 & (10, 9, 4) & (1, 7, 5) \\
        6 & (-3, -4, -9) & (-10, -11, -11) \\
        \hline
    \end{tabular}
    \label{tab:example}
\end{table}
Based on the conducted experiments, it is evident that the $h_{\mathcal{T}_{sI}}$ classifier demonstrates effective transition classification even when SNR values are below 50. Conversely, the $h_{\mathcal{T}_{sII}}$ classifier exhibits inadequate performance when SNR falls below 50. Ultimately, both classifiers cease to function effectively when SNR levels drop below 30, highlighting a limitation on the permissible noise levels in the signals of the dynamic systems under analysis.
\color{black}

\color{black}
\setcounter{figure}{0}
\setcounter{table}{0}
\setcounter{equation}{0}
\renewcommand{\thetable}{I.\arabic{table}}
\renewcommand{\thefigure}{I.\arabic{figure}}
\renewcommand{\theequation}{I.\arabic{equation}}

\subsection*{I. Study of the effect of variable sampling in the classification performance}

As outlined in the assumptions, the algorithm relies on a constant sampling rate in the data to ensure the reliability of geometric properties for identifying transitions. In situations where the data exhibits a variable sampling rate, it becomes necessary to preprocess the signals to align the input data of the algorithm as closely as possible with this assumption.

Resampling algorithms, such as those described in \cite{russell2006regular}, can be used to fix variable sampling rate in the sensor data. These algorithms entail the application of filters and interpolation techniques to estimate samples at desired time intervals. For the sake of clarity, a study of the effect of variable sampling time is next presented.

Consider the time vector with variable sampling time written as

\begin{equation}
    T_{data}:=\{ t_k \in \mathbb{R} | t_k=t_{k-1}+\delta \},
    \label{eq:time_sta}
\end{equation}
\noindent where $\delta \sim U(-\hat{\delta}a_1,\hat{\delta}a_2)$ is a random variable that follows a uniform distribution in the interval $-\hat{\delta}a_1$ and $\hat{\delta}a_2$ with $a_2>a_1$. Where $\delta$ is the ideal sampling time of the system.

To evaluate the impact of variable sampling time in the performance of the proposed method, $T_{data}$ was generated with \eqref{eq:time_sta} using the parameters described in Table~\ref{table:param_sampling}, the mean and standard deviation of the sampling time $t_s$ is also shown in this table. Then, the linear system described in Section~A was simulated and $\boldsymbol{y}(t)$ was sampled in the times specified by $T_{data}$. Finally, the proposed transient detection algorithm was applied. It should be noted that the selection of sampling with uniform distribution allows evaluating scenarios with the greatest sampling time uncertainty.

\begin{table}[H]
    \caption{Values of $a_1$ and $a_2$ for the evaluated tests.}
    \centering
    \small
    \begin{tabular}{ccccc}
        \hline
        Test & $a_1$ & $a_2$&Mean$t_s$ [ms] & Std. deviation $t_s$ [ms]\\
        \hline
        1 & 1.0 &  0.0 & 0.50&0.29\\
        2 & 5.0 &  1.5 & 1.81&1.45\\
        3 & 10.0 & 7.0 & 1.54&1.90\\
        4 & 15.0 & 8.0 & 3.37&3.44\\
        5 & 20.0 & 10.0& 4.72&4.74\\
        \hline
    \end{tabular}
    
    \label{table:param_sampling}
\end{table}

The transient detection algorithm was evaluated for each tested scenario using the same input parameters for the proposed transient regime detection algorithm. The Distribution of $\delta$ and an example of a signal sampled with variable sampling time are shown in Figure~\ref{fig:sampling}.
\begin{figure}[H]
        \centering
        \begin{subfigure}{0.4\textwidth}
            \includegraphics[width=\textwidth]{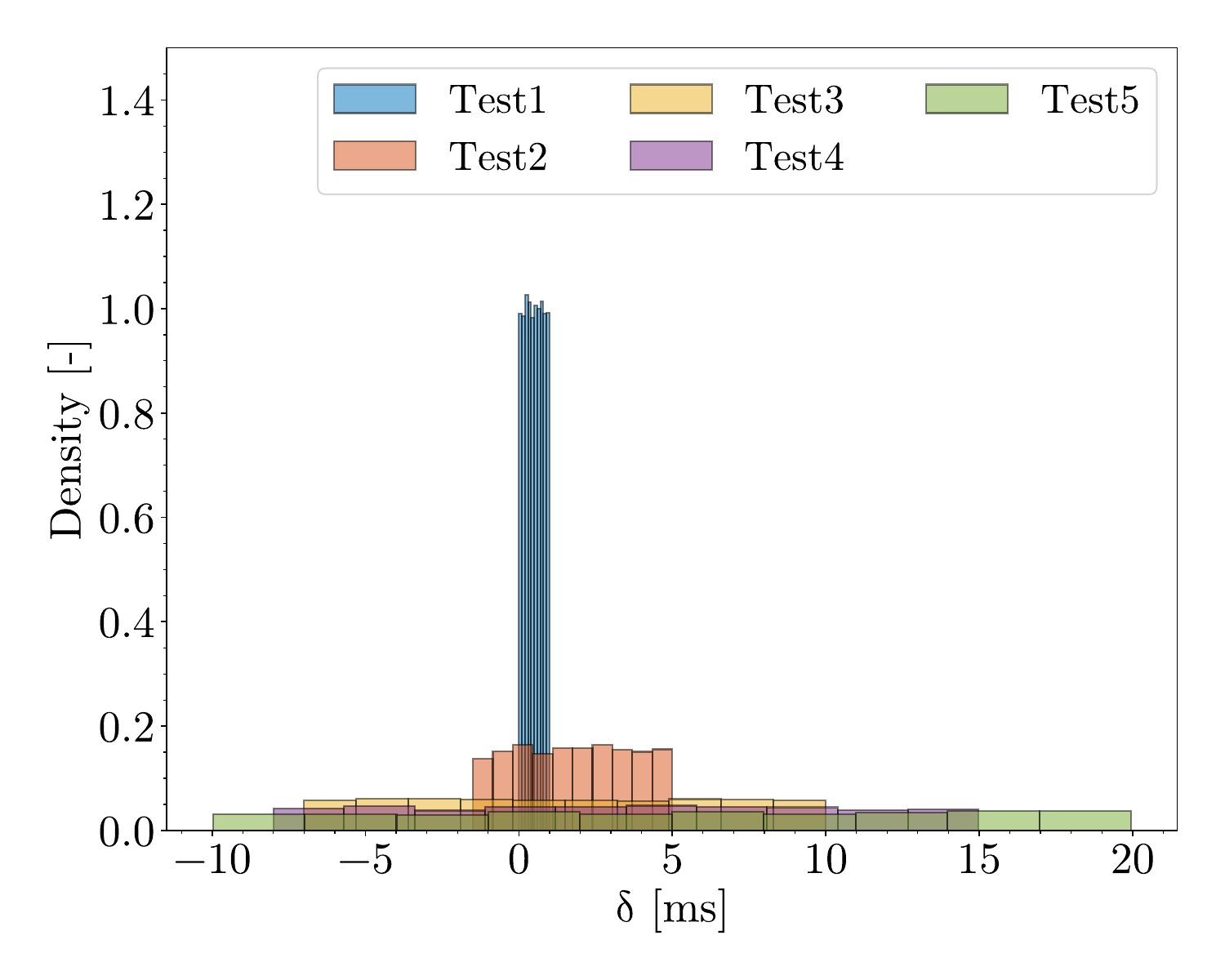}
            \caption{}
        \end{subfigure}
    \end{figure}
    \begin{figure}[H]\ContinuedFloat
        \centering
        \begin{subfigure}{0.4\textwidth}
            \includegraphics[width=\textwidth]{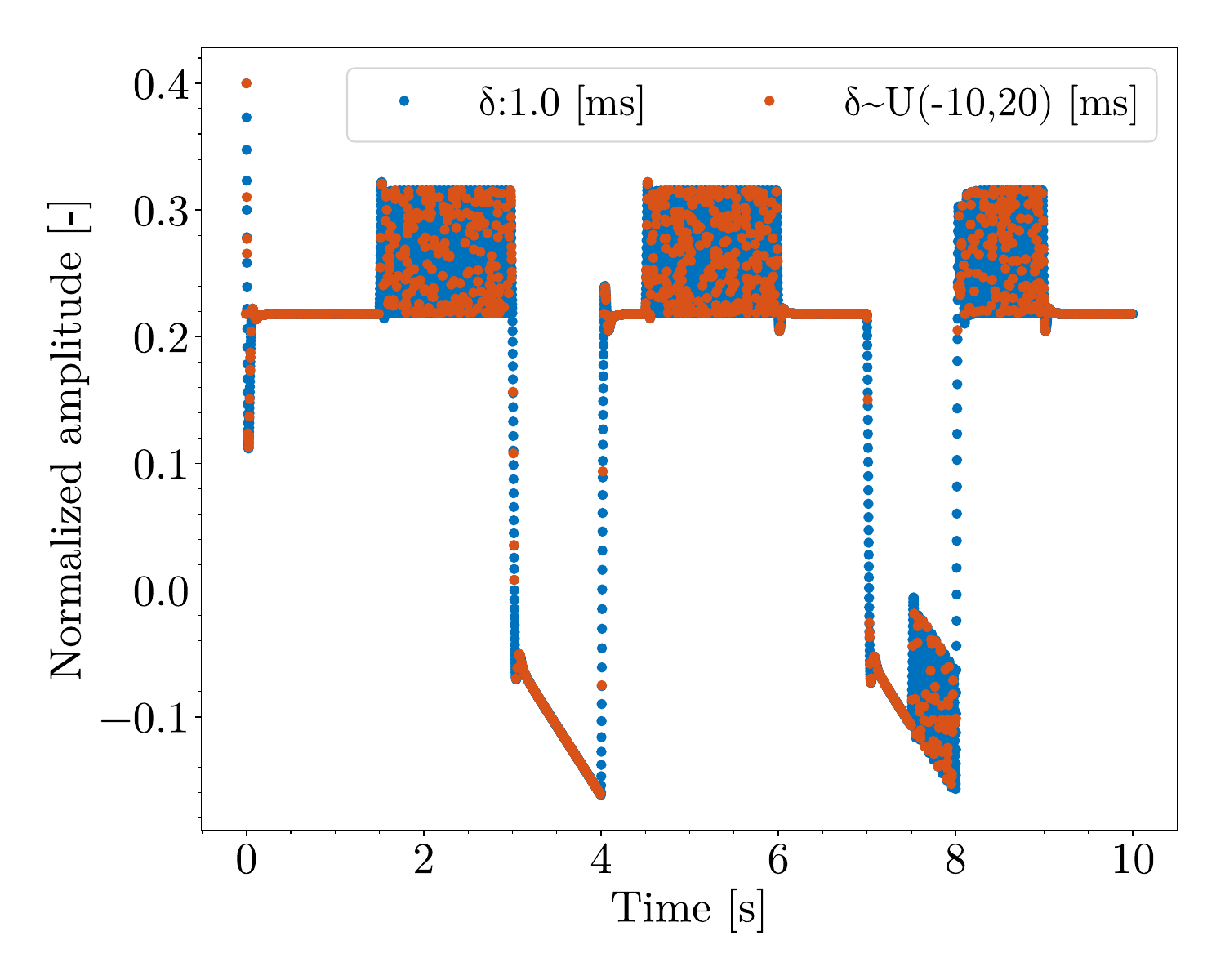}
            \caption{}
        \end{subfigure}
        \caption{Distribution of the sampling time $\delta$ for the tested sampling scenarios. (a) Histogram of the sampling time for each tested scenario. (b) Comparison of a sampled input signal using constant and non-uniform sampling time.}
        \label{fig:sampling}
\end{figure}
An illustrative example of applying the proposed classifier algorithm to non-uniformly sampled data, along with the time domain acquired data for all tested experiments are shown in Figure~\ref{fig:sampling2}.
    \begin{figure}[H]
        \centering
        \begin{subfigure}{0.4\textwidth}
            \includegraphics[width=\textwidth]{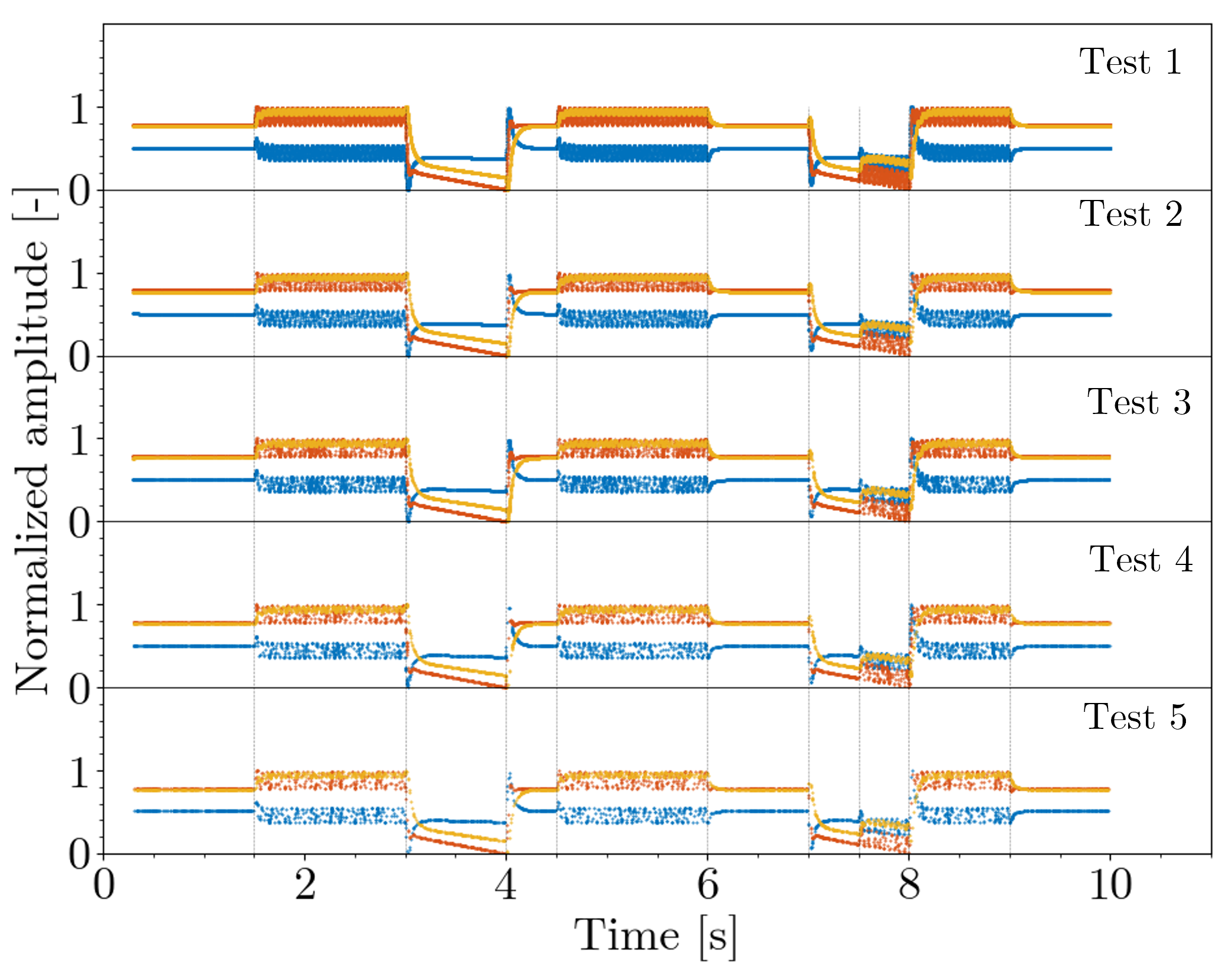}
            \caption{}
        \end{subfigure}
    \end{figure}
    \begin{figure}[H]\ContinuedFloat
        \centering
        \begin{subfigure}{0.4\textwidth}
            \includegraphics[width=\textwidth]{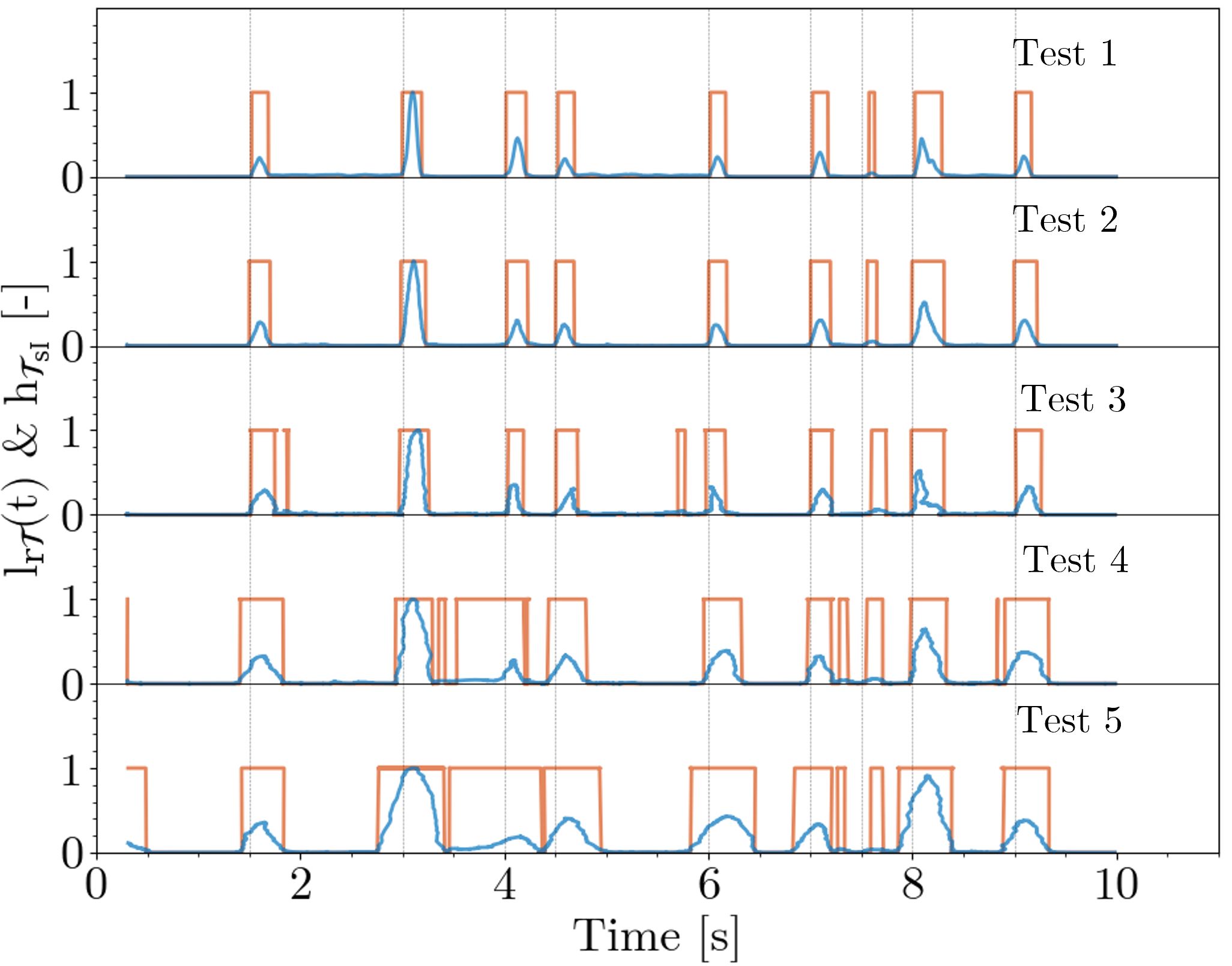}
            \caption{}
        \end{subfigure}
    \end{figure}
    \begin{figure}[H]\ContinuedFloat
        \centering
        \begin{subfigure}{0.4\textwidth}
            \includegraphics[width=\textwidth]{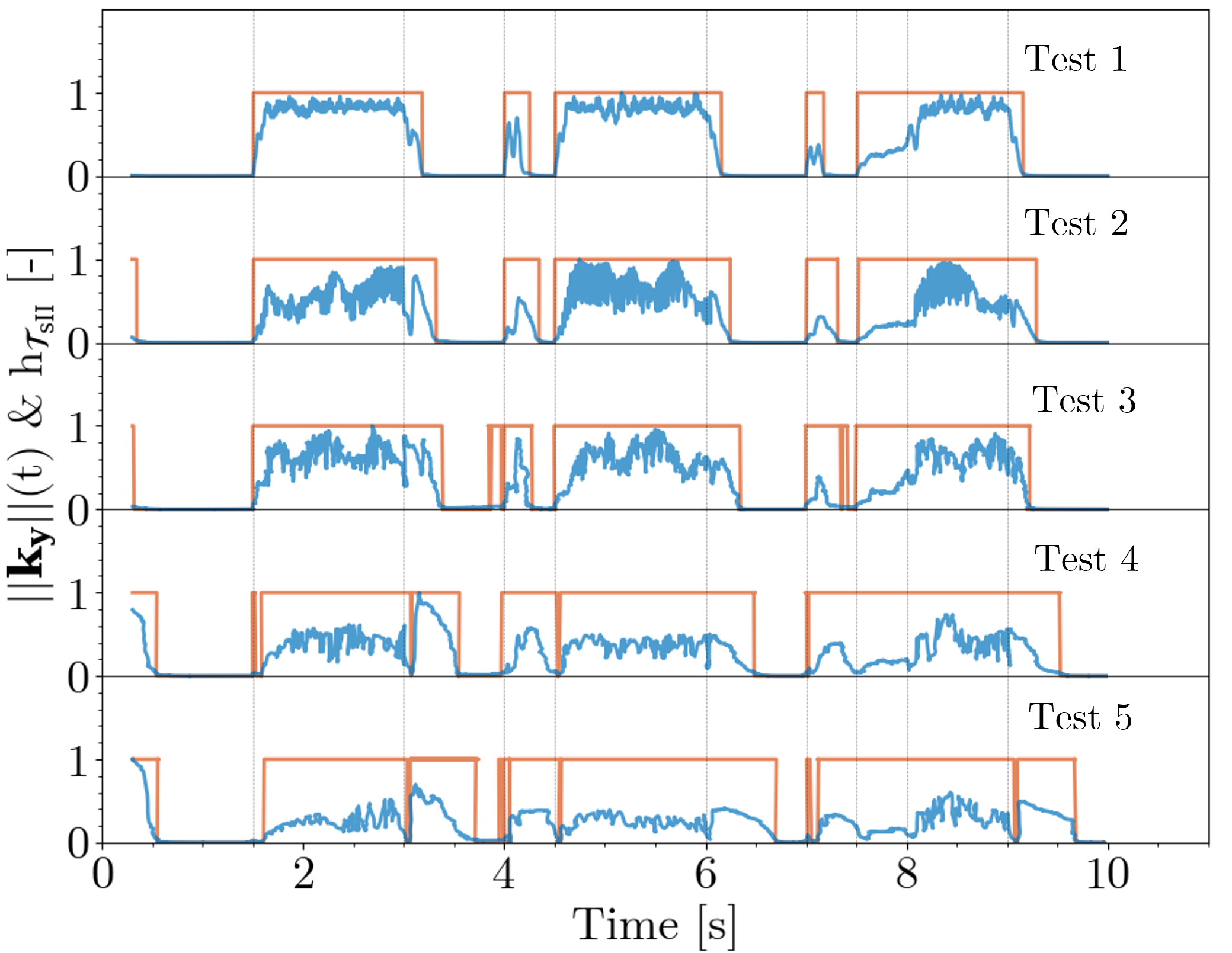}
            \caption{}
        \end{subfigure}
        \caption{Example of the performance of the proposed transient classifier with non-uniform sampling time. (a) Output signals of the simulated linear systems for the tested sampling scenarios. (b) Arc length  of the SMTR curve and its classifier for each studied scenario. (c) Plane curvature vector magnitude and its classifier for each studied scenario.}
        \label{fig:sampling2}
    \end{figure}
    Based on the conducted experiments, it is evident that the $h_{\mathcal{T}_{sI}}$ classifier can effectively classify transitions even in scenarios with significant variances in sampling time intervals. In contrast, the $h_{\mathcal{T}_{sII}}$ classifier is notably sensitive to fluctuations in sampling rate, functioning adequately only when sampling times remain constant. Ultimately, maintaining a constant sampling time emerges as a limitation for accurately classifying transient regimes using both proposed classifiers.
\color{black}
\color{black}
\setcounter{figure}{0}
\setcounter{table}{0}
\setcounter{equation}{0}
\renewcommand{\thetable}{J.\arabic{table}}
\renewcommand{\thefigure}{J.\arabic{figure}}
\renewcommand{\theequation}{J.\arabic{equation}}

\subsection*{J. Example of Control design application based on the proposed classifier}
To illustrate the applications where the proposed method can be applied. This algorithm, specifically, finds relevance in two fields: system identification and control. In practical scenarios, it becomes necessary to identify and control the dynamics of processes when subjected to specific input excitations. To illustrate this concept, consider the presented linear system of Section A with classified transitions exemplified in Figure~\ref{fig:control_1}. Furthermore, assume that the output $y_2(t)$ is required to follow the input reference given by $u_2(t)$ within the time segment enclosed by the red dashed line in Figure~\ref{fig:control_1}(a). 
\begin{figure}[H]
    \centering
    \begin{subfigure}{0.35\textwidth}
        \includegraphics[width=\textwidth]{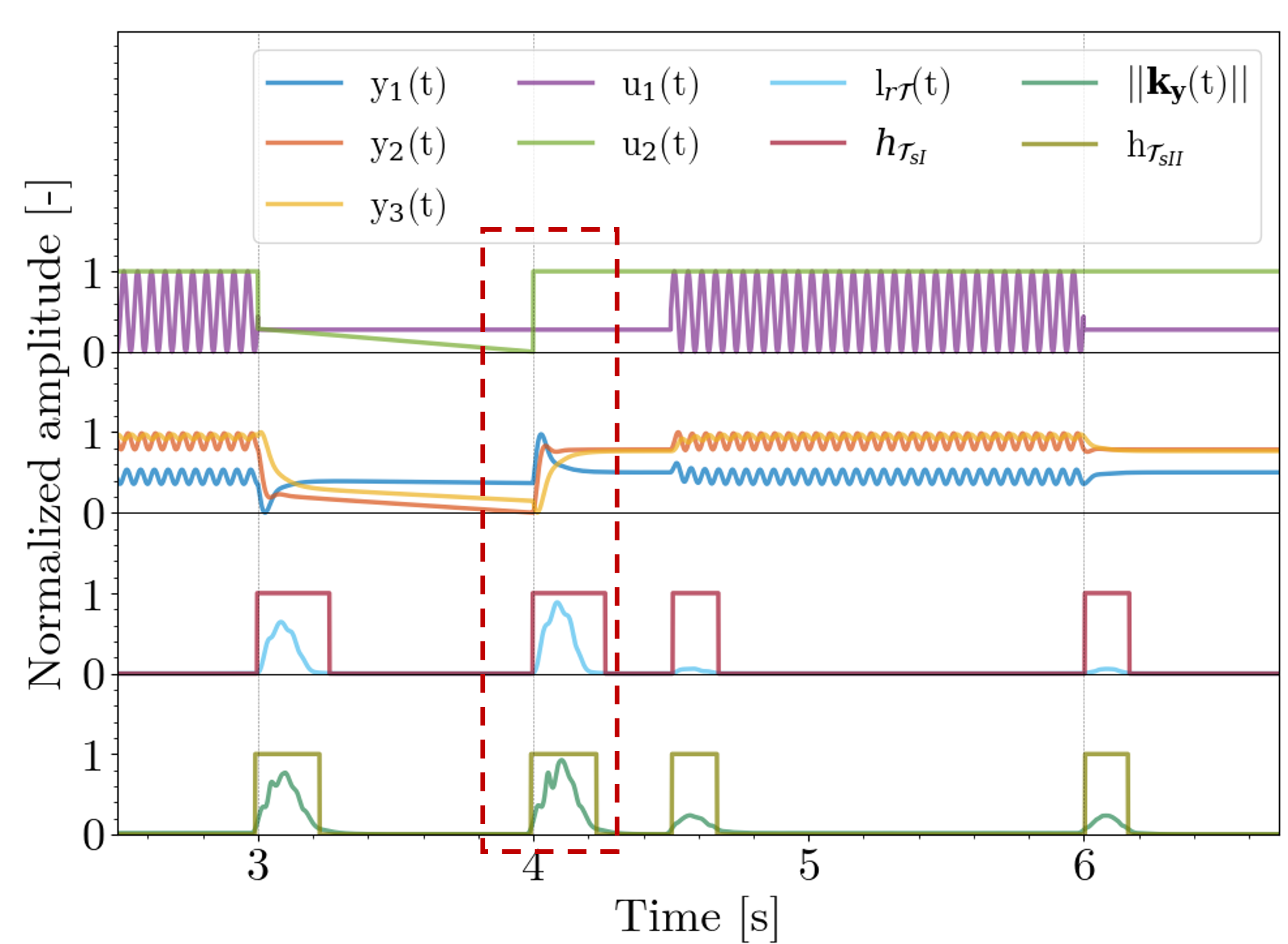}
        \caption{}
    \end{subfigure}
\end{figure}
\begin{figure}[H]\ContinuedFloat
    \centering
    \begin{subfigure}{0.36\textwidth}
        \includegraphics[width=\textwidth]{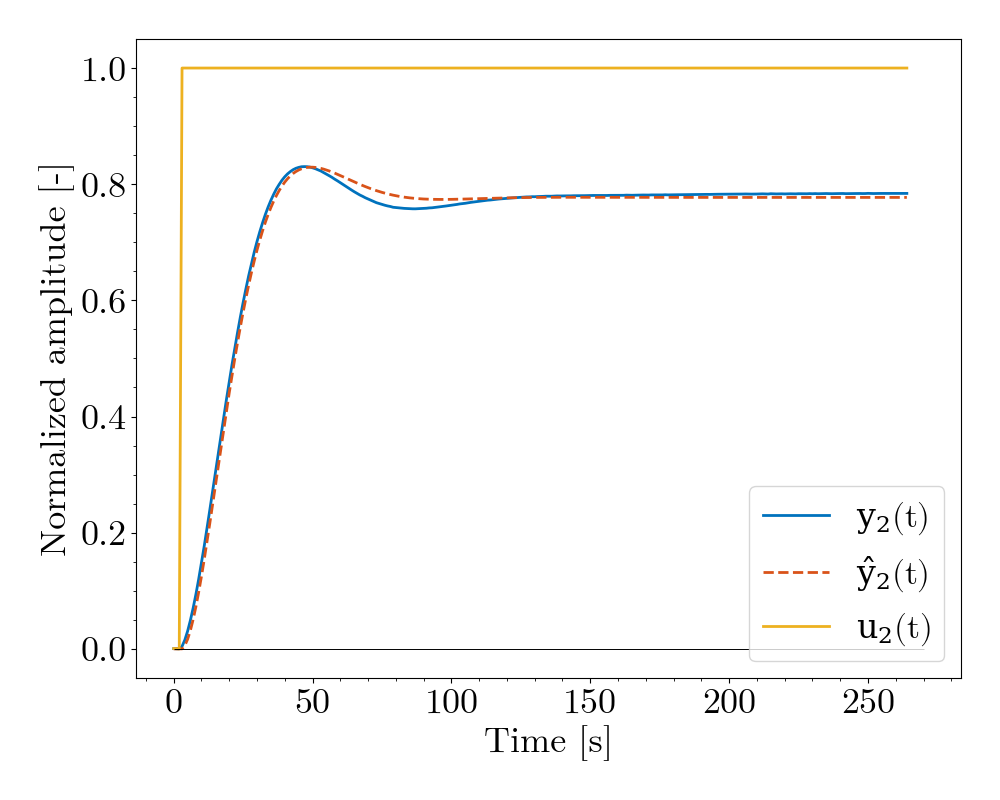}
        \caption{}
    \end{subfigure}
    \caption{Example of a control application of the proposed transient classifier method. (a) time range where the output $y_2(t)$ of the system needs to follow $u_2(t)$. (b) Time response of the input and output of the required dynamics to be controlled and estimated output $\hat{y}_2(t)$ using ARX system identification technique.}
    \label{fig:control_1}
    \end{figure}
 It is possible to use the data of the classified segment to identify the dynamics of this transition by fitting a mathematical model such as the ARX model written as
\begin{equation*}
    \resizebox{.99\linewidth}{!}{$
    G(z)=\frac{y_2(z)}{u_2(z)}\approx\frac{0.005z}{z^2-1.885z+0.892},\qquad\hat{y}_2
(z)=G(z)u_2(z).$}
\end{equation*}

Then, a controller can be designed based on predefined criteria that encompass factors like error tolerance and response time. For instance, consider the tuned PID controller in the following configuration:
\begin{equation*}
    C(z)=\frac{171.5z^2-313.3z+143.3}{z^2-z}.
\end{equation*}

Finally, the fitted ARX model and the close loop response of $G(z)C(z)$ is shown in Figure~\ref{fig:control_2}.
\begin{figure}[H]
    \centering
    \begin{subfigure}{0.42\textwidth}
        \includegraphics[width=\textwidth]{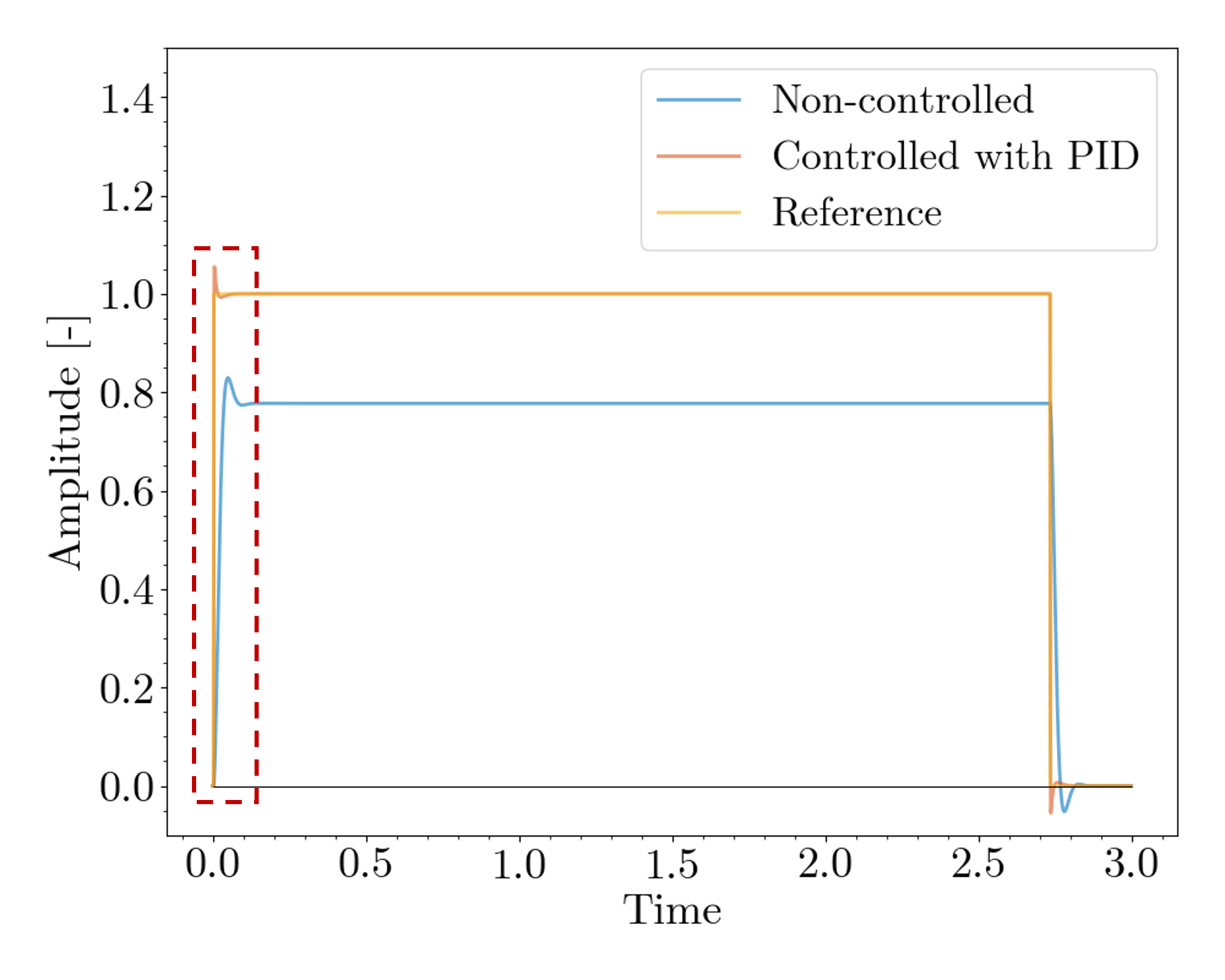}
        \caption{}
    \end{subfigure}
\end{figure}
\begin{figure}[H]\ContinuedFloat
    \centering
    \begin{subfigure}{0.43\textwidth}
        \includegraphics[width=\textwidth]{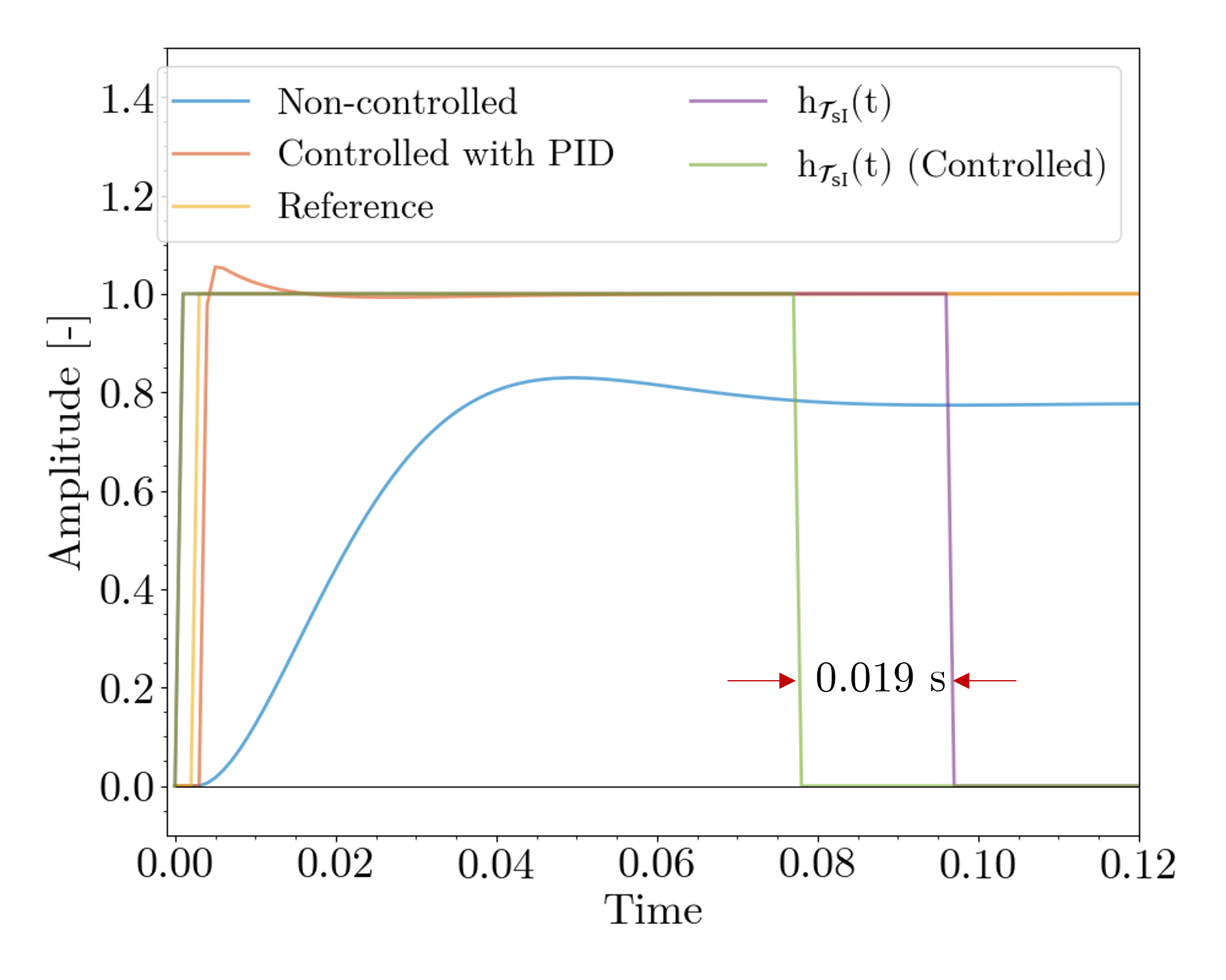}
        \caption{}
    \end{subfigure}
    \caption{Time response of the output $y_2(t)$ before and after the controller is implemented in the system. (a) Controlled, non-controlled systems response and input reference. (b) Proposed transient regimen classifier $h_{\mathcal{T}_{sI}}(t)$ for the controlled and non-controlled systems response. The transient regime of the system is reduced by 0.019 [\si[]{\second}] with the implemented PID controller.}
    \label{fig:control_2}
    \end{figure}
As shown in Figure~\ref{fig:control_2}, the effectiveness of the controller applied to the system can be validated by reapplying the classifier and assessing whether the response time aligns with the specified designed criteria. Furthermore, with the proposed classifier, the validation of the time response requirements can be performed online so that the controller can be syntonized until it meets the desired control requirements. 
\color{black}

\color{black}
\setcounter{figure}{0}
\setcounter{table}{0}
\setcounter{equation}{0}
\renewcommand{\thetable}{K.\arabic{table}}
\renewcommand{\thefigure}{K.\arabic{figure}}
\renewcommand{\theequation}{K.\arabic{equation}}

\subsection*{K. Construction of reference classifier for Hypothesis test}
To compute the probability of type I and type II error for the studied classifiers, a reference classifier $h_{ref}$ was generated. For the linear system, the reference classifier was built using the step input settling time criterion written as
\begin{equation}
    t_{stc}=\frac{4}{\sigma_f},
    \label{eq:tstc}
\end{equation}
\noindent
where $\sigma_f$ is the absolute value of the smallest real part of the complex poles of the studied dynamic system. Then, a value of one was assigned to $h_{ref}$  within the ranges $(t_u,t_u+t_{stc})$, with $t_u$ as the times when the systems input switches. Finally, the value of zero was assigned to $h_{ref}$ for the rest of the simulated time.

Regarding the nonlinear and discontinuous system, the rise time criterion was used to calculate the classifier. The rise time can be computed as
\begin{equation}
    t_{rt}=\frac{1}{2f_{nm}},
    \label{eq:trt}
\end{equation}
\noindent

where $f_{nm}$ is the smallest natural frequency of the system. Based on the previous guidelines, the reference classifier $h_{ref}$ was computed for each studied dynamic system. Therefore, $h_{ref}$ was assigned to 1 every time that the input signal switched during 0.2, 0.25 and 2.05 seconds for the linear, non-linear and discontinuous dynamic systems, respectively. Moreover, the classifier was assigned to 0 the remaining time periods.

Once the reference classifier was obtained, the false positive rate $b_{fpr}$ was computed as the probability of committing type I and II errors. For the studied classifiers, a value of one means that the system is in transient regime, while a value of zero means a stationary regime. Thus, the false positive rate can be expressed as
\begin{equation}
    b_{fpr}=\frac{b_{fp}}{b_{fp}+b_{tn}},
    \label{eq:bfpr}
\end{equation}
\noindent
where $b_{fp}$ is the number of events where the reference classifier is zero and the studied classifier is one, while $b_{tn}$ is the number of events where the reference classifier and the studied classifier are zero.

For the type II error, \eqref{eq:bfpr} was also used. In this case, $b_{fp}$ is the number of events where the reference classifier is one and the studied classifier is zero, while $b_{tn}$ is the number of events where the reference classifier and the studied one are one.
\color{black}
\setcounter{figure}{0}
\setcounter{table}{0}
\setcounter{equation}{0}
\renewcommand{\thetable}{L.\arabic{table}}
\renewcommand{\thefigure}{L.\arabic{figure}}
\renewcommand{\theequation}{L.\arabic{equation}}

\subsection*{L. Transient identification several transient prominence conditions}

 To assess the performance of the method under varying levels of transient prominence, the linear system described in Section IV-A of the manuscript was employed. For this simulation, the inputs were defined as follows:

\begin{align*}
u_1(t) &= 
     \begin{cases}
        \text{$\delta$}&\text{$0<t<\delta$, } \\
        \text{$0.0$}&\text{$\delta<t$,}\\
     \end{cases}
     \\
     u_2(t) &= 0~\forall~t.\\
\end{align*}

The input amplitude was varied using the values $\delta = [0.01, 0.03, 0.08, 0.3, 1.0, 2.0]$, in order to produce inputs with different amplitudes and durations that generated more or less prominent transients in the measured signals. The system inputs and outputs, the spatial properties of the SMTR curve, and the evaluated transient classifiers are presented in Figure~\ref{fg:prominence}.

In addition, the probabilities of Type I and Type II errors for these classifiers, as well as for those used for comparison in the literature, are reported in Table~\ref{tab:prom_hp}. The simulation was conducted using the same parameters specified for the linear system simulation described in Section IV of the submitted document.

    \begin{figure}[htbp!]
            \centering
        \includegraphics[width=0.4\textwidth]{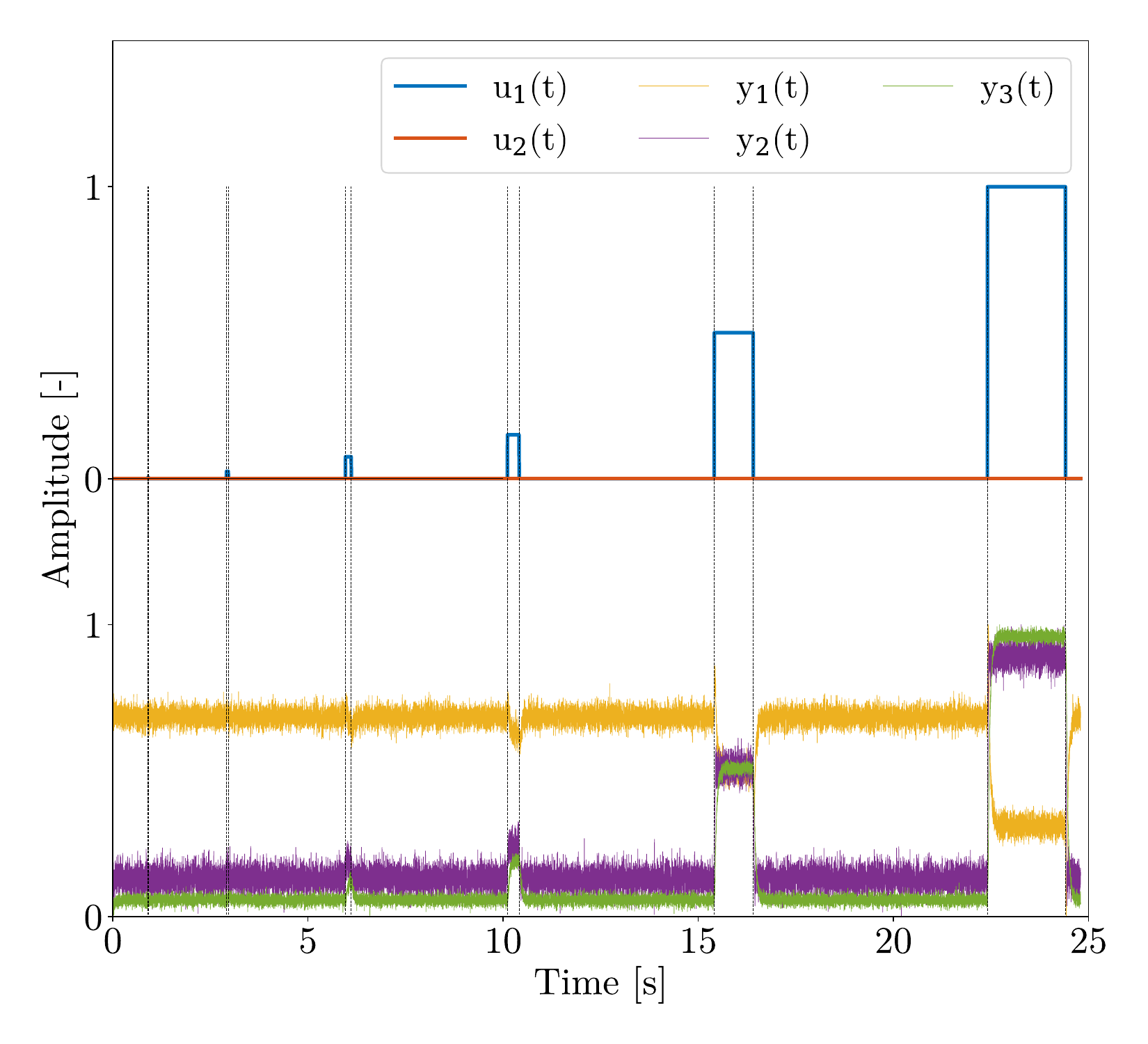}
        \caption{Results of the simulated linear dynamic system under several transient prominent conditions. Input and output signals of the simulated linear dynamical system.}
        \label{fg:p1}
    \end{figure}
    \begin{figure}[htbp!]
            \centering
        \includegraphics[width=0.4\textwidth]{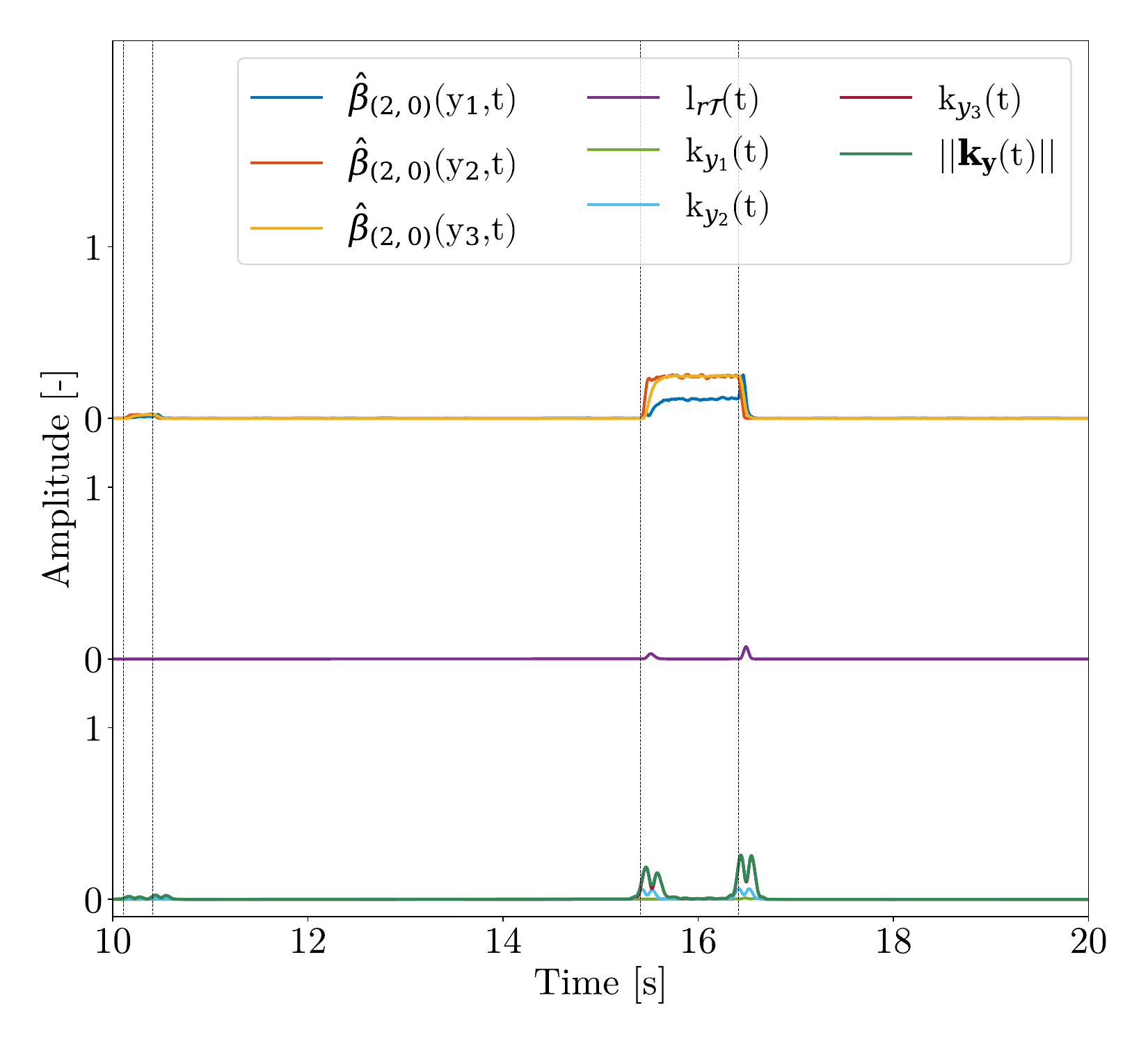}
        \caption{Results of the simulated linear dynamic system under several transient prominent conditions. SMTR curve and its geometrical properties for the considered linear-dynamical system.}
        \label{fg:p2}
    \end{figure}
    \begin{figure}[htbp!]
            \centering
        \includegraphics[width=0.4\textwidth]{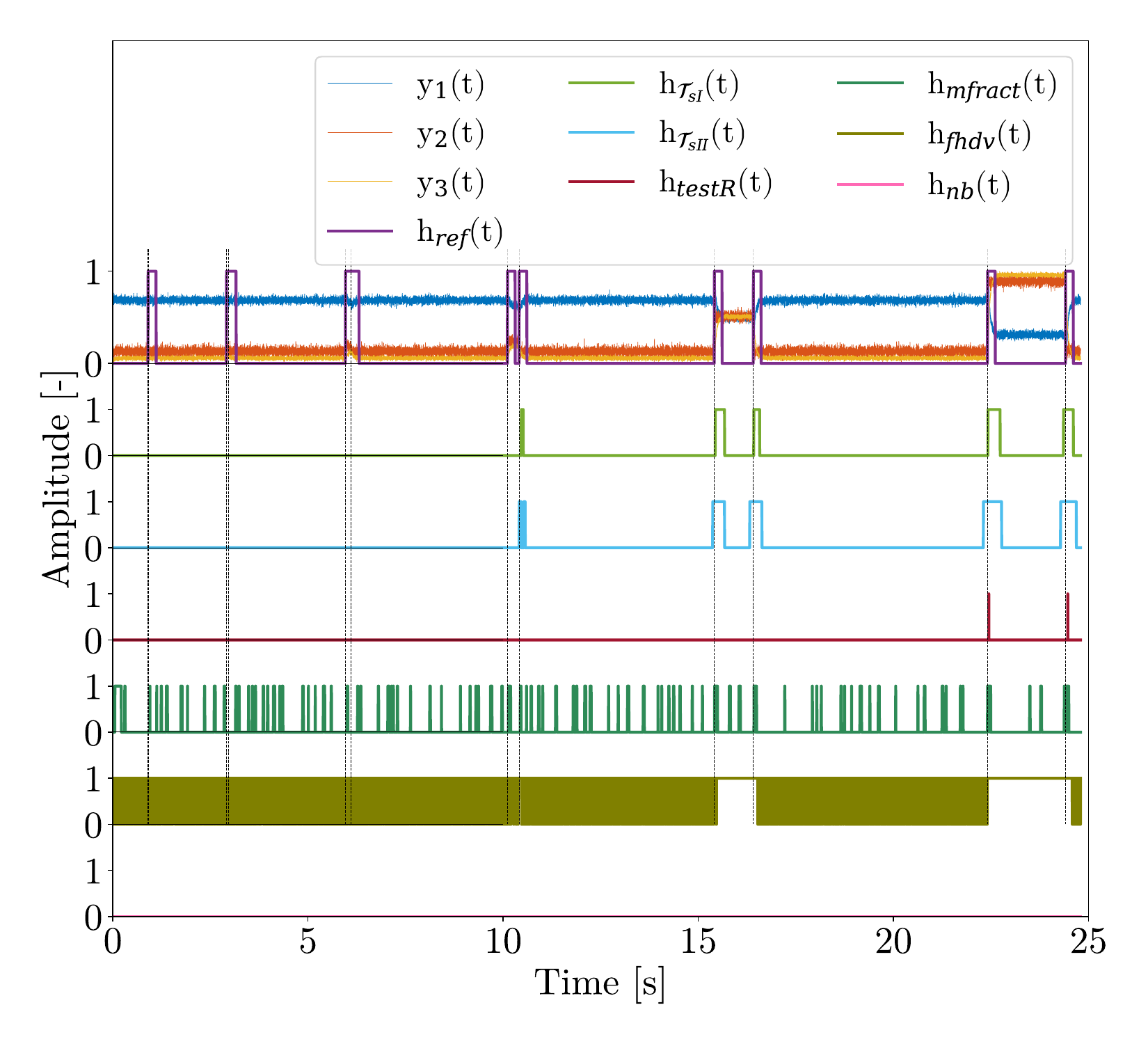}
        \caption{Results of the simulated linear dynamic system under several transient prominent conditions. Comparison between the proposed classifiers and the evaluated classifiers from the literature.}
            \label{fg:prominence}
    \end{figure}

\begin{table}[htbp!]
    \color{black}
    \centering
    \caption{Probability of Type I and type II errors for the tested transient regime classifiers based on the reference classifier $h_{ref}(t)$.}
    \begin{tabular}{ l c c c c c c}
        \hline
        \textbf{Probability}&$h_{\mathcal{T}_{sI}}$&$h_{\mathcal{T}_{sII}}$&$h_{testR}$&$h_{mfrac}$&$h_{fhdv}$&$h_{nb}$\\
        \hline
        \multirow{3}{0.05\textwidth}{Type I error [\%]}&&&&&&\\ 
            &\textbf{1.1}& 3.0 &   0.0 &    5.8 &    88.5 &    0.0\\
        &&&&&&\\
        \hline
        \multirow{3}{0.05\textwidth}{Type II error [\%]}&&&&&&\\
        &\textbf{55.9} &    62.0 &   99.7 &   81.9 &   7.4 &   100.0\\
        &&&&&&\\
        \hline
    \end{tabular} 
    \label{tab:prom_hp}
\end{table}

From the simulation results obtained by varying the prominence of the transitions, it can be observed that the proposed detectors present a limitation in identifying transitions when the rate of change and amplitude of the system inputs satisfy $\delta < 0.08$ for the linear system under study. It is important to note that the analyzed linear system has a time constant of $\tau = 0.05$ s. Therefore, applying input changes faster than this time constant prevents the system from exhibiting a measurable response, making the transition undetectable by the classifiers.

Despite this limitation, the proposed classifiers are capable of detecting transitions for input conditions with $\delta > 0.08$, as indicated in Table~\ref{tab:prom_hp}.

\setcounter{figure}{0}
\setcounter{table}{0}
\setcounter{equation}{0}
\renewcommand{\thetable}{M.\arabic{table}}
\renewcommand{\thefigure}{M.\arabic{figure}}
\renewcommand{\theequation}{M.\arabic{equation}}

\subsection*{M. Validation in real dynamic systems}

To illustrate the operation of the algorithm, a validation test was conducted on an electric motor. During this test, the supply voltage and the current consumption were monitored throughout a speed test involving programmed speed variations.

\begin{figure}[htbp!]
    \centering
        \includegraphics[width=0.3\textwidth]{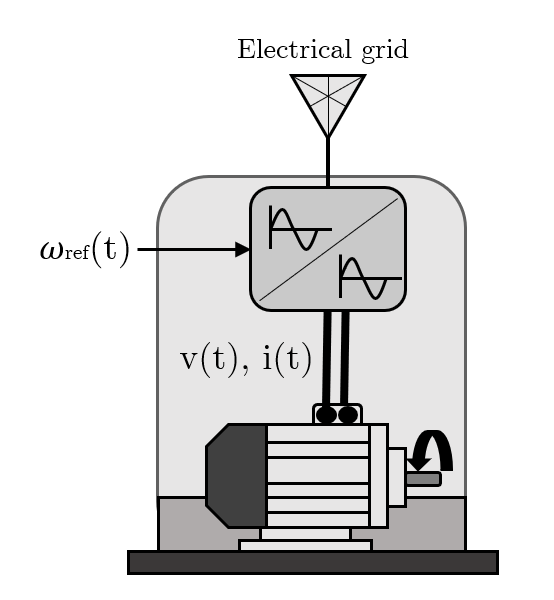}
        \caption{Controlled speed AC motor system.}
    \label{fg:motor_system}
\end{figure}

Figure~\ref{fg:motor_system} presents the controlled motor system employed for the experimental validation. In this system, the speed programmed in the variable speed drive was considered as the system excitation, represented as $\boldsymbol{u}(t) = [\omega_{ref}(t)]$, while the measured variables were the motor supply voltage and current, expressed as $\boldsymbol{y}(t) = [v(t), i(t)]$.

For the experiment, a sweep of nine discrete speeds was executed within the range of 0 to 50 Hz. The test began from rest, with the speed increased every 20 seconds and held constant between increments. It should be noted that the motor was operated without any mechanical load during this validation.

The purpose of the proposed classifiers was to automatically identify changes in speed using only with the online measurements of the system's energy consumption.

In Figure~\ref{fg:motor_validation}, the inputs and outputs of the tested AC motor system are shown, along with the components and geometric properties of the system's SMTR curve. Finally, the behavior of the two proposed classifiers is presented in this figure. It can be observed that the classification provided by the proposed indicators is consistent with the behavior of the AC motor system under programmed speed conditions.

    \begin{figure}[htbp!]
        \includegraphics[width=0.4\textwidth]{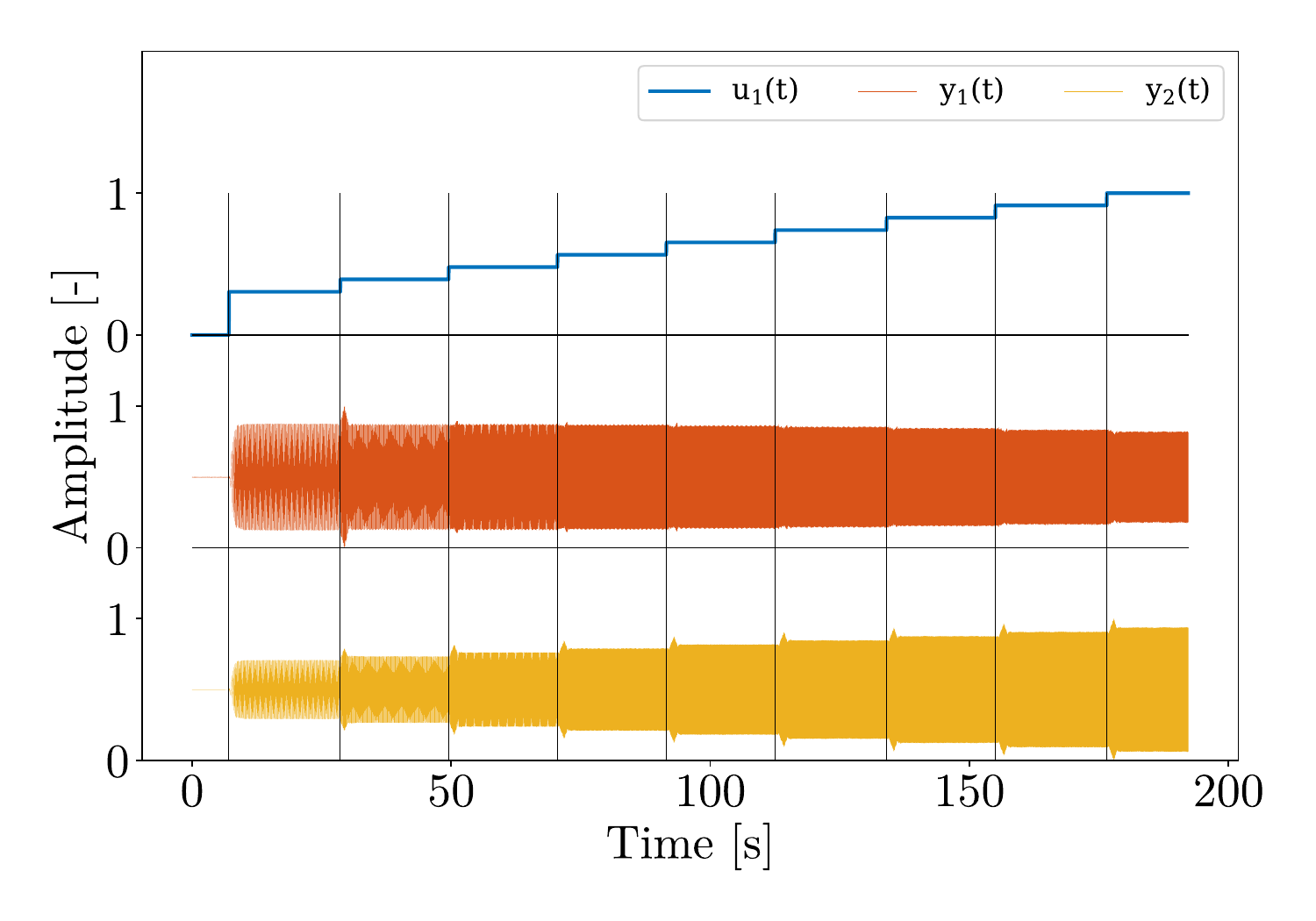}
        \caption{Results of the experimental test on the AC motor system. Input and output signals of the analyzed system.}
        \label{fg:m1}
    \end{figure}
    \begin{figure}[htbp!]
        \includegraphics[width=0.4\textwidth]{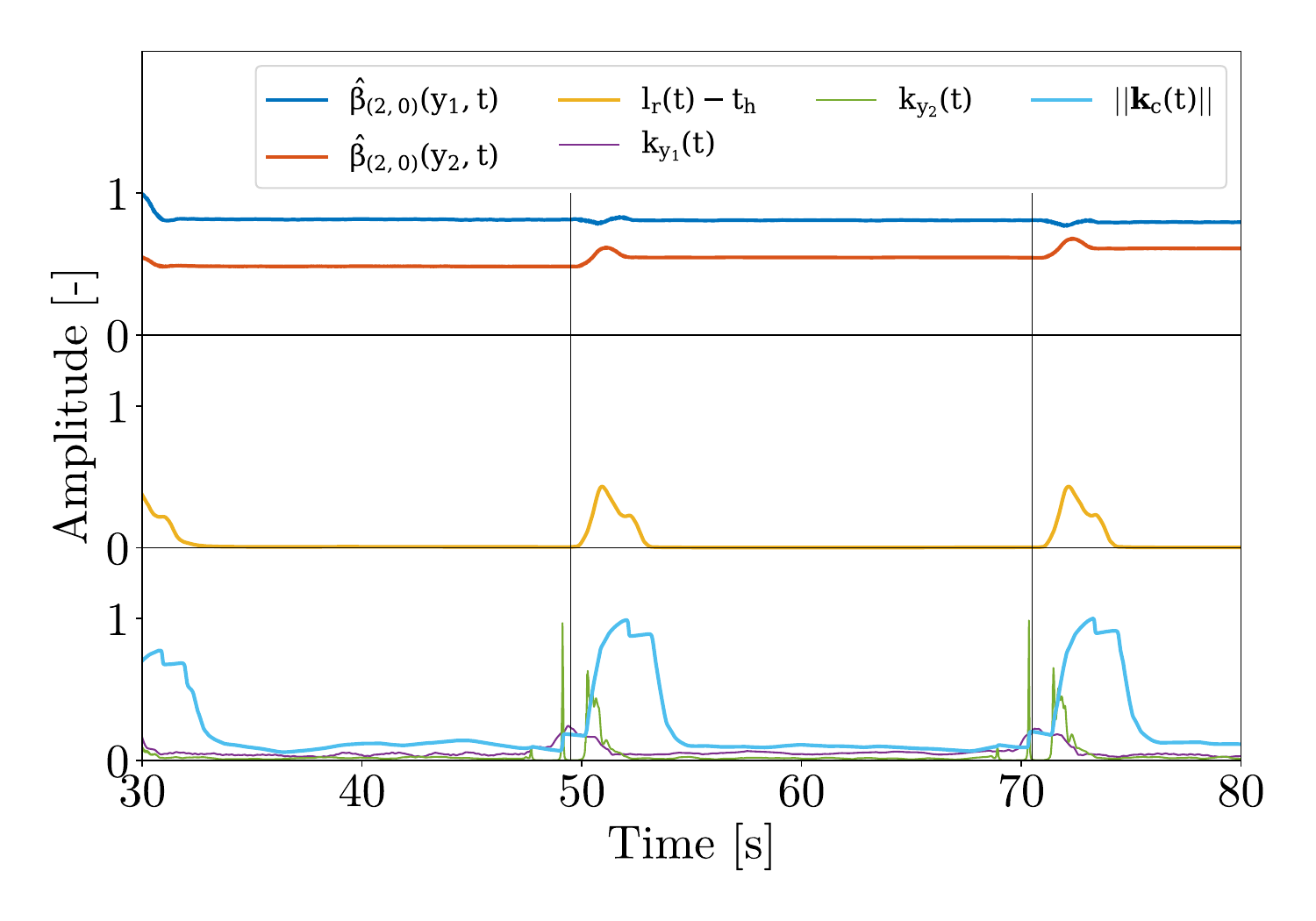}
        \caption{Results of the geometrical SMTR properties of the AC motor system. SMTR curve and its geometrical properties for the studied system.}
        \label{fg:m2}
    \end{figure}
    \begin{figure}[htbp!]
        \includegraphics[width=0.4\textwidth]{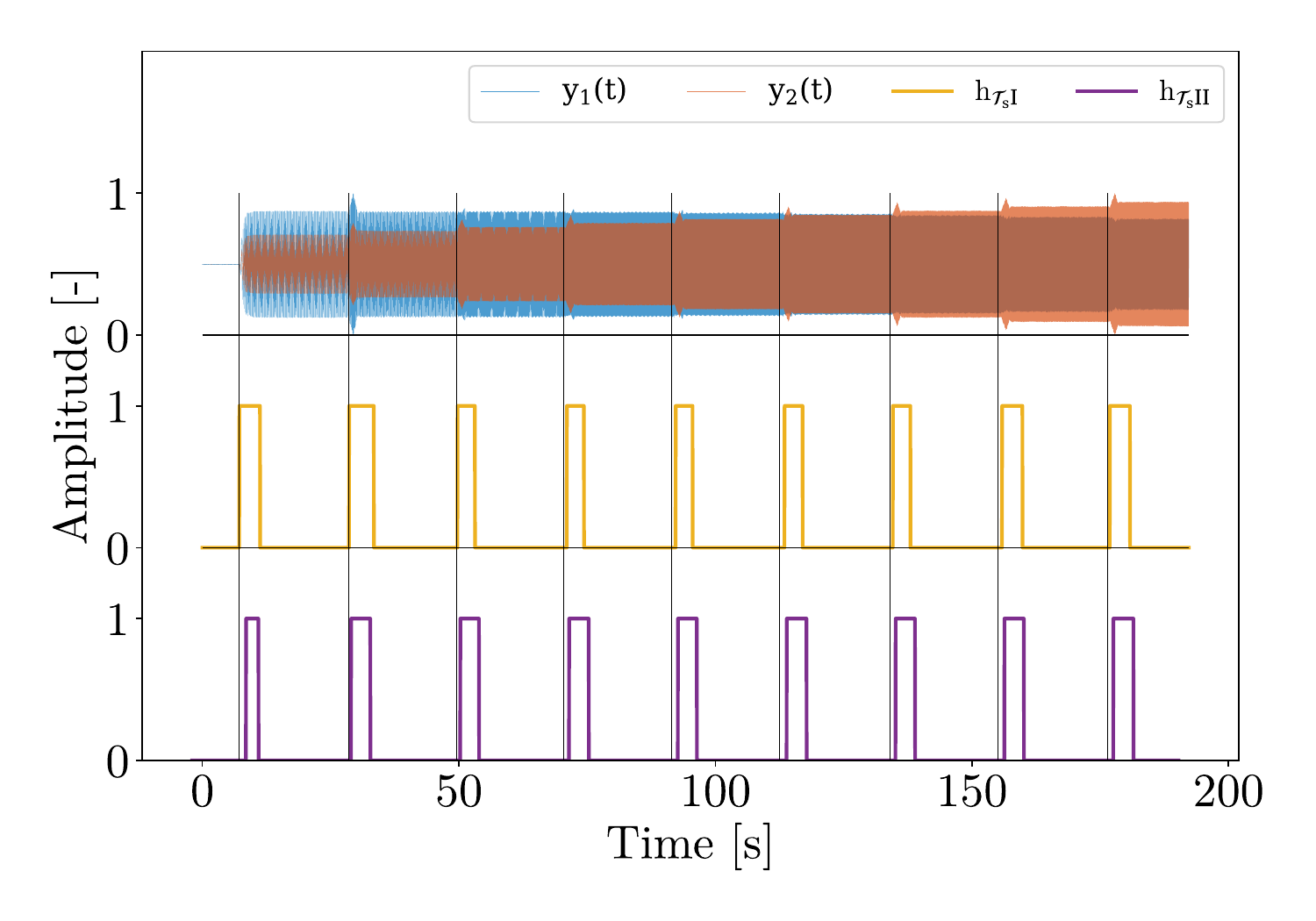}
        \caption{Results of the AC motor system. Regime classifiers using the proposed classifiers.}
    \label{fg:motor_validation}
    \end{figure}

In addition to the electric motor system with controlled speed, a validation was also carried out on a structural system. The schematic of this system is presented in Figure~\ref{fg:structural_system}. The structural system used for the validation of the proposed methodology included three accelerometer sensors that measured the gravitational acceleration of vehicles as they passed over three beams of a vehicular bridge.

The inputs to this system were the unmeasurable impulses generated by the vehicles, and the output variables were the instantaneous accelerations recorded by the sensors, expressed as $\boldsymbol{y}(t) = [a_1(t), a_2(t), a_3(t)]$. These measurements were acquired at a sampling rate of 60 Hz.

The aim of the proposed classifiers was to automatically detect and classify the transient vibrations produced by the passage of vehicles through the structural system.

\begin{figure}[htbp!]
    \centering
        \includegraphics[width=0.4\textwidth]{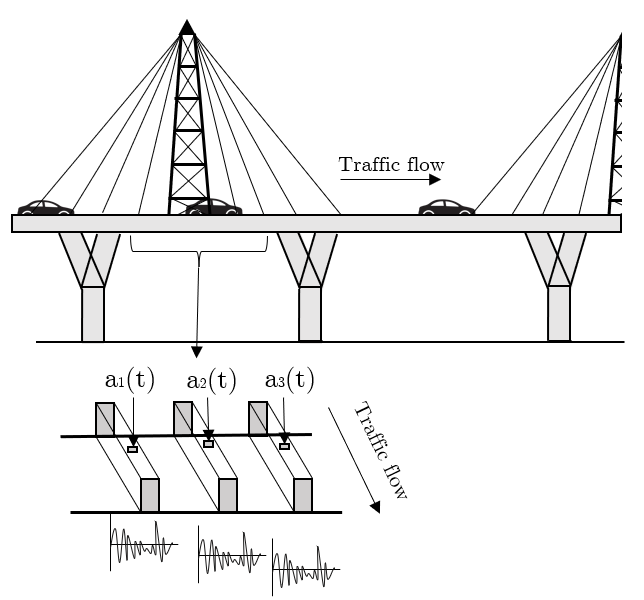}
        \caption{Analyzed structural system.}
    \label{fg:structural_system}
\end{figure}

The mechanical vibration generated by impulses from vehicle traffic over the analyzed structural system is presented in Figure~\ref{fg:s1}. The geometric properties of the SMTR curve are presented in Figure~\ref{fg:s2}. The proposed transient classifiers are presented in Figure~\ref{fg:structure_results}.

The results presented in Figure~\ref{fg:structure_results} clearly separate the impulses generated by vehicles from the random stationary noise. Therefore, this type of classifier can be applied to improve online bridge monitoring by extracting the modal properties of the bridge from automatically classified signal segments that are related to the free response of the structure.

    \begin{figure}[htbp!]
            \centering
        \includegraphics[width=0.4\textwidth]{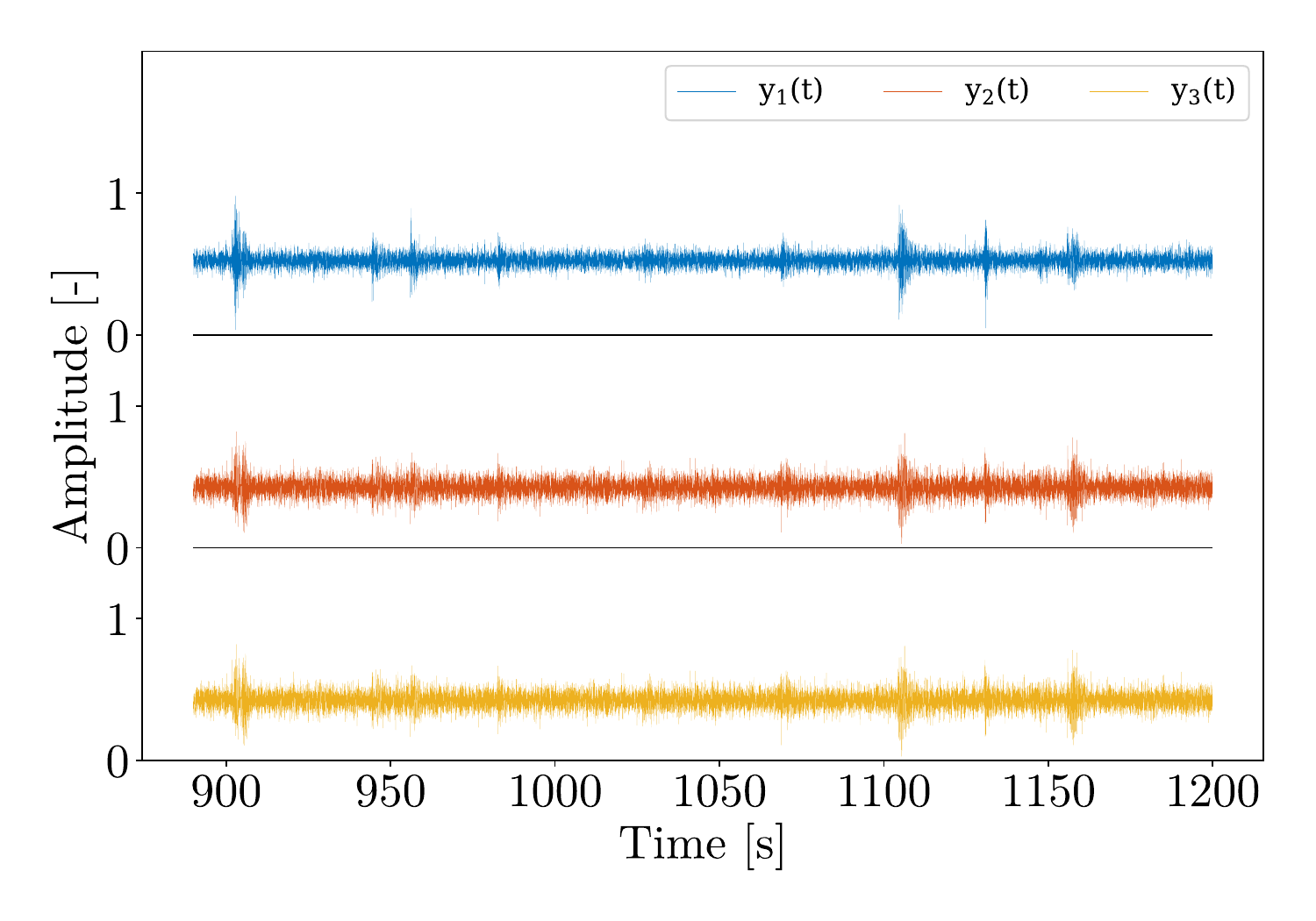}
        \caption{Results of the analyzed structural system. Input and output signals of the studied structural system.}
        \label{fg:s1}
    \end{figure}

    \begin{figure}[htbp!]
            \centering
        \includegraphics[width=0.4\textwidth]{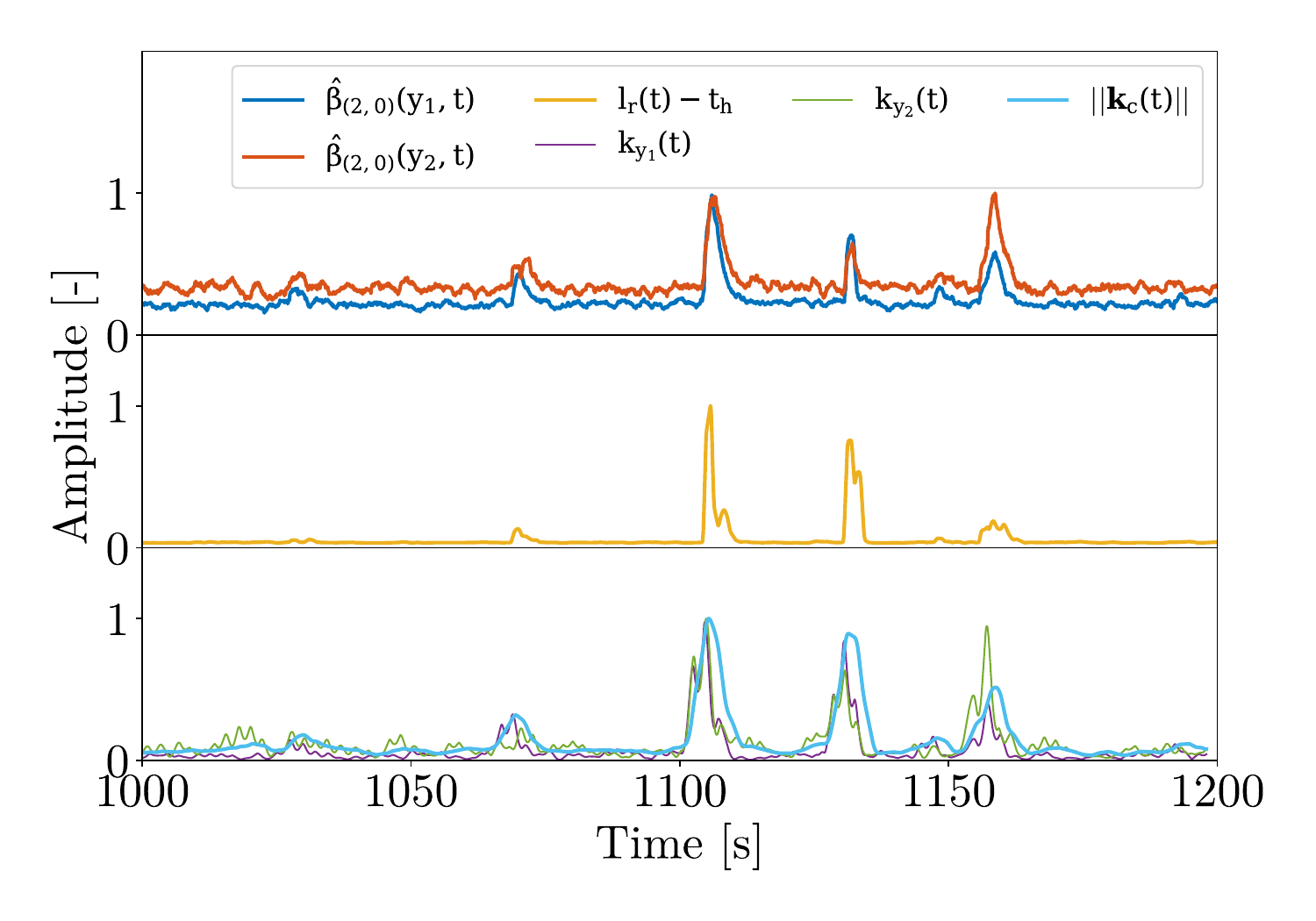}
        \caption{Results of the SMTR curve properties of the studied system. SMTR curve and its geometrical properties for the considered structural system.}
        \label{fg:s2}
    \end{figure}

    \begin{figure}[htbp!]
            \centering
        \includegraphics[width=0.4\textwidth]{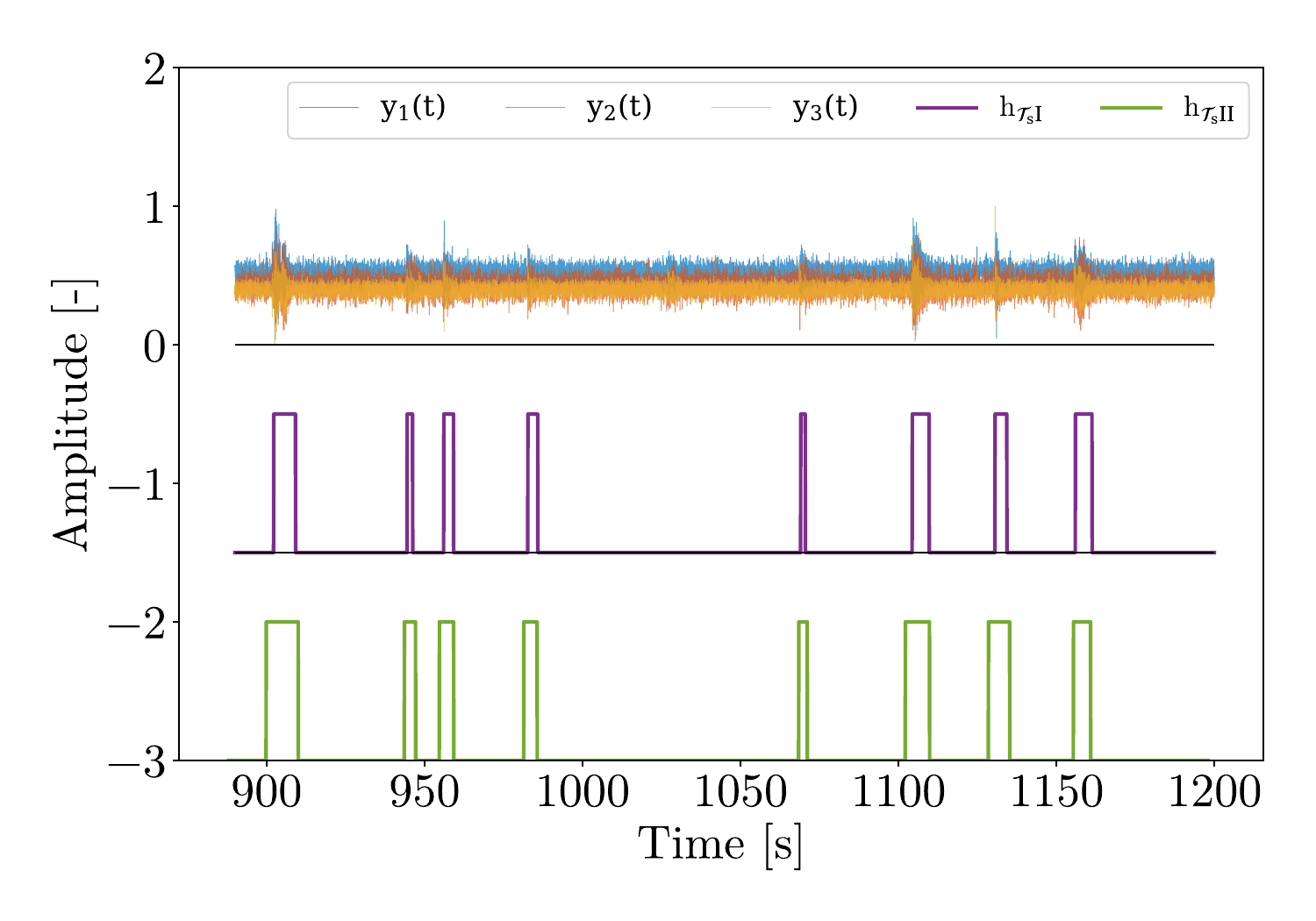}
        \caption{Results of the regime classification of the structural system. Regime classifiers using the proposed classifiers.}
        \label{fg:structure_results}
    \end{figure}

\setcounter{figure}{0}
\setcounter{table}{0}
\setcounter{equation}{0}
\renewcommand{\thetable}{N.\arabic{table}}
\renewcommand{\thefigure}{N.\arabic{figure}}
\renewcommand{\theequation}{N.\arabic{equation}}

\subsection*{\black{ N. Pseudocode}}

The pseudo-algorithm for computing the two proposed transient regime classifiers is presented below.

\begin{algorithm}[!htb]
	\SetAlgoLined
	\KwIn{ $\boldsymbol{y}(t)$, $t_s$, $t_h$, $l_{r0}$.}

	\KwResult{$h_{\mathcal{T}_{sI}}(t)$. }

    Compute the SMTR curve $r_{\mathcal{G}}(t)$ of the monitored system from the measurements of the system $\boldsymbol{y}(t) \in \mathbb{R}^{l_y}$, the acquisition sampling time $t_s$ and the moving time window $t_h$ as:\;

    $ r_{\mathcal{G}}(t) = [t,\hat{\beta}_{(2,\boldsymbol{0})}(y_1,t),\hat{\beta}_{(2,\boldsymbol{0})}(y_2,t),\dots,\hat{\beta}_{(2,\boldsymbol{0})}(y_{l_y},t)]$,\;

    with,\;

    $\hat{\beta}_{(2,\boldsymbol{0})}(y_j,t) = \frac{1}{t_h}\int_{t-t_h}^{t} y_j(\tau) d\tau, j= 1,2,\dots, l_y$.\;

    Estimate the arc length of the SMTR curve inside the analyzed time window $t_h $using  $r_{\mathcal{G}}(t)$ as: \; 

    $l_r(t-t_h,\boldsymbol{r}_{\mathcal{G}}) = \int_{t-t_h}^{t} |\boldsymbol{r}_{\mathcal{G}}^{(1)}(t)|dt = \int_{t-t_h}^{t} \left( 1+\sum_{j=1}^{l_y}(\hat{\beta}^{(1)}_{(2,\boldsymbol{0})}(y_j,t))^2 \right) dt$,\;

    $l_{r\mathcal{T}}(t) = l_r(t-t_h,\boldsymbol{r}_{\mathcal{G}})-t_h$.\;

    Finally, build the classifier $h_{\mathcal{T}_{sI}}(t), t \in (0,t_h)$ inside the analyzed the moving time window $t_h$ with $l_{r0}$ as:\; 

    \If{$l_{r\mathcal{T}}(t) > l_{r0}$}{
    $h_{\mathcal{T}_{sI}}(t)=1$.\;
    }
    \Else{
        $h_{\mathcal{T}_{sI}}(t)=0$.\;
    }
\caption{Arc length-based transient regime classifier}
\label{alg:1}
\end{algorithm}

\begin{algorithm}[!htb]
	\SetAlgoLined
	\KwIn{$\boldsymbol{y}(t)$, $t_s$, $t_h$, $\boldsymbol{k}_{0}$.}
	\KwResult{$h_{\mathcal{T}_{sII}}(t)$. }

    Compute the SMTR curve $r_{\mathcal{G}}(t)$ of the monitored system from the measurements of the system $\boldsymbol{y}(t) \in \mathbb{R}^{l_y}$, the acquisition sampling time $t_s$ and the moving time window $t_h$ as:\;

    $ r_{\mathcal{G}}(t) = [t,\hat{\beta}_{(2,\boldsymbol{0})}(y_1,t),\hat{\beta}_{(2,\boldsymbol{0})}(y_2,t),\dots,\hat{\beta}_{(2,\boldsymbol{0})}(y_{l_y},t)]$,\;

    with\;

    $\hat{\beta}_{(2,\boldsymbol{0})}(y_j,t) = \frac{1}{t_h}\int_{t-t_h}^{t} y_j(\tau) d\tau, j= 1,2,\dots, l_y$.\;

    Estimate the magnitude of the plane curvatures of the SMTR curve inside the analyzed time window $t_h $using  $r_{\mathcal{G}}(t)$ as: \; 

    $\lVert \boldsymbol{k}_{\boldsymbol{y}}(t) \rVert = \sqrt{\sum_{j=1}^{l_y}\left(\frac{\left(|\hat{\beta}_{y_j}^{(2)}(t)|\right)^2}{\left( 1+\left(\hat{\beta}_{y_j}^{(1)}(t)\right)^2\right)^3}\right)}$.\;

    Finally, build the classifier $h_{\mathcal{T}_{sII}}(t), t \in (0,t_h)$ inside the analyzed the moving time window $t_h$ with $\boldsymbol{k}_{0}$ as:\; 

    \If{$\lVert \boldsymbol{k}_{\boldsymbol{y}}(t) \rVert > \boldsymbol{k}_{0}$}{
    $h_{\mathcal{T}_{sII}}(t)=1$.\;
    }
    \Else{
        $h_{\mathcal{T}_{sII}}(t)=0$.\;
    }
\caption{Plane curvature-based transient regime classifier}
\label{alg:2}
\end{algorithm}

It is important to emphasize that the computation of the integrals and derivatives of the signals described in the proposed methodology must be performed using numerical approximations, such as the Stencil method or Savitzky-Golay filters presented in \cite{gorry1990general, kamakoti2010high}, in order to ensure precision in the evaluation of the classifiers. Through the use of finite difference approximations, both the arc length and the plane curvature can be directly derived from the raw measurements collected by the sensors monitoring the system. The pseudo-code for computing the proposed transition classifiers using finite differences is presented below.

\begin{algorithm}[h!]
	\SetAlgoLined
	\KwIn{$\boldsymbol{y}(t)$, $t_s$, $t_h$, $\boldsymbol{k}_{0}$, $l_{r0}$.}
	\KwResult{$h_{\mathcal{T}_{sI}}(t)$, $h_{\mathcal{T}_{sII}}(t)$. }

    Estimate the magnitude of the plane curvatures and arc length of the SMTR curve inside the analyzed time window $t_h $using  $\boldsymbol{y}(t)$ as: \;
      
    $l_r(n) \approx \frac{1}{l_h}\sum_{k=n-l_h}^{n}\sqrt[]{t_h^2+\lVert (\boldsymbol{y}(k))^2-(\boldsymbol{y}(k-\Delta_h))^2\rVert^2},$\;

    $\lVert\boldsymbol{k}_{\boldsymbol{y}}(n)\rVert \approx t_hl_h\sqrt{\sum_{j=1}^{l_y}\left(\frac{\left|\Delta y^2_j(n,l_h) -\Delta y^2_j(n-1,l_h)\right|}{\left( (t_h^2+\Delta y^2_j(n,l_h)^2)\right)^{3/2}}\right)^2},$\;
      
    with, \;

    $ \Delta y^2_j(n,l_h) =  (y_j(n))^2-(y_j(n-l_h-1))^2$, $l_h=\frac{t_h}{t_s}$ and $\Delta_h = l_h + 1$.\;

    Finally, build the classifiers $h_{\mathcal{T}_{sI}}(t)$ and $h_{\mathcal{T}_{sII}}(t),~t \in (0,t_h)$ inside the analyzed the moving time window $t_h$ with $\boldsymbol{k}_{0}$ and $l_{r0}$ as:\; 

    \If{$l_{r\mathcal{T}}(t) > l_{r0}$}{
    $h_{\mathcal{T}_{sI}}(t)=1$.\;
    }
    \Else{
        $h_{\mathcal{T}_{sI}}(t)=0$.\;
    }

    \If{$\lVert \boldsymbol{k}_{\boldsymbol{y}}(t) \rVert > \boldsymbol{k}_{0}$}{
    $h_{\mathcal{T}_{sII}}(t)=1$.\;
    }
    \Else{
        $h_{\mathcal{T}_{sII}}(t)=0$.\;
    }
\caption{Computation of transient regime classifiers using simple finite differences}
\label{alg:3}
\end{algorithm}

Finally, it is worth mentioning that, based on the constructed classifiers, it is possible to detect transient regime events. The value of the classifiers will be equal to 1 in the presence of transient behavior, and equal to 0 in the case of stationary or cyclo-stationary behavior, according to the assumptions established for the proposed methodology.

The implementation of the classifiers was carried out primarily in C++ and Python, utilizing the \texttt{lapack}, \texttt{blas}, and \texttt{numpy} libraries.


\end{document}